%
\documentclass[runningheads]{llncs}
\usepackage[T1]{fontenc}
%
\usepackage{graphicx}
%
%

\usepackage{amsmath,amsfonts,amssymb,bm,bbm}

\usepackage{tikz}
\usetikzlibrary{automata}
\usetikzlibrary{calc, automata, chains, arrows.meta}
\usetikzlibrary{chains,scopes}
\usetikzlibrary{positioning}
\usepackage{multirow,subfig}
\usepackage{bm}


\spnewtheorem{assumption}{Assumption}{\bfseries}{\itshape}

\newcommand{\bx}{\bm{x}}

\newcommand{\bz}{\bm{z}}
\newcommand{\bzero}{\bm{0}}

\DeclareMathOperator{\bone}{\mathbbm{1}}
\newcommand{\bR}{\mathbb{R}}
\newcommand{\bZ}{\mathbb{Z}}
\newcommand{\cA}{\mathcal{A}}
\newcommand{\cB}{\mathcal{B}}
\newcommand{\cD}{\mathcal{D}}

\newcommand{\cV}{\mathcal{V}}
\newcommand{\cW}{\mathcal{W}}

\newcommand{\hepsilon}{\widehat{\epsilon}}

\DeclareMathOperator*{\argmax}{arg\,max}

\DeclareMathOperator{\Exp}{\mathbb{E}}
\DeclareMathOperator{\Prob}{\mathbb{P}}

\begin{document}
\title{Best-Response Dynamics in Tullock Contests with Convex Costs}
%
%
\author{Abheek Ghosh\orcidID{0000-0002-4771-4612}}
%
%
\institute{Department of Computer Science, University of Oxford\\
  \email{abheek.ghosh@cs.ox.ac.uk}}
\maketitle              
\begin{abstract}
We study the convergence of best-response dynamics in Tullock contests with convex cost functions (these games always have a unique pure-strategy Nash equilibrium). We show that best-response dynamics rapidly converges to the equilibrium for homogeneous agents. For two homogeneous agents, we show convergence to an $\epsilon$-approximate equilibrium in $\Theta(\log\log(1/\epsilon))$ steps. For $n \ge 3$ agents, the dynamics is not unique because at each step $n-1 \ge 2$ agents can make non-trivial moves. We consider the model proposed by \cite{ghosh2023best}, where the agent making the move is randomly selected at each time step. We show convergence to an $\epsilon$-approximate equilibrium in $O(\beta \log(n/(\epsilon\delta)))$ steps with probability $1-\delta$, where $\beta$ is a parameter of the agent selection process, e.g., $\beta = n^2 \log(n)$ if agents are selected uniformly at random at each time step. We complement this result with a lower bound of $\Omega(n + \log(1/\epsilon)/\log(n))$ applicable for any agent selection process.

\keywords{contests \and best-response dynamics \and learning in games}
\end{abstract}
\section{Introduction}\label{sec:intro}
Contests are games where agents compete by making costly investments to win valuable prizes. Tullock~\cite{tullock1980efficient}'s model is one of the most widely used for studying these environments and has been applied to problems in economics, political science, and computer science~\cite{konrad2009strategy,vojnovic2015contest}. 
To give a concrete real-life application, consider the game among miners in proof-of-work cryptocurrencies like Bitcoin~\cite{chen2019axiomatic,leshno2020bitcoin}. A Bitcoin miner adds the next block (and collects the corresponding reward) with a probability proportional to her (costly) computational effort. A Tullock contest exactly models the game among these miners (assuming they are risk neutral): a set of agents compete to win a valuable prize by investing costly effort, and an agent gets a portion of the prize proportional to his effort.

We study the convergence of best-response (BR) dynamics in Tullock contests with convex costs. The existence of an equilibrium may not always be a good predictor of the agents' behavior. The traditional explanation of equilibrium is that it results from analysis and introspection by the agents in a situation where the rules of the game, the rationality of the agents, and the agents’ payoff functions are all common knowledge. Both conceptually and empirically, these theories may have problems~\cite{fudenberg1998theory}. A model of the outcome of a game is more plausible if it can be attained in a decentralized manner, ideally via a process that involves agents behaving in a natural-looking self-interested way. BR dynamics is arguably the simplest of these models. In BR dynamics, agents sequentially update their current strategy with one that best responds to that of the other agents. It is especially appropriate in settings, such as Tullock contests with convex costs, where pure-strategy equilibria are guaranteed to exist.

In our model for BR dynamics, we assume that at each time step one of the agents best responds and updates her current strategy. For two agents, this leads to a deterministic process where the agents play alternate moves.\footnote{Two consecutive moves by the same agent is redundant because the agent is already best responding after the first move and the second move does not change anything.} For more than two agents, the dynamics is not fully defined because all agents except the agent who made the most recent transition can make a non-trivial transition. So, we consider two models for selecting the agent that makes the BR move at each time step. First, we consider the randomized model introduced in \cite{ghosh2023best}. In this model, an agent $i$ makes the BR transition with probability $p_{t,i}(\bx_t)$ at time $t$, given the strategy profile at time $t$ is $\bx_t$.\footnote{We make the technical assumption that $p_{t,i}(x_t) > 0$ for all agents except the one who played the most recent move. See Section~\ref{sec:prelim:select} for more details.} In the second model, we assume that the agents are selected deterministically to ensure fast convergence to the equilibrium. We analyze the rate of convergence of the BR dynamics to an $\epsilon$-approximate equilibrium in these models.

\subsection{Our Results}\label{sec:results}
\sloppy
For two homogeneous agents with convex cost functions, we show that BR dynamics converges to an $\epsilon$-approximate pure-strategy Nash equilibrium in $\Theta(\log\log(1/\epsilon) + \log\log(1/\gamma))$ steps, where $\gamma$ is a natural function of the initial action profile (generally equal to the smallest positive action in the initial action profile). We also show faster convergence for restricted classes of convex cost functions.

For $n \ge 3$ homogeneous agents, for the randomized agent selection model, we show convergence in $O(\alpha \log(n/(\epsilon\delta)) + \beta \log\log(1/\gamma))$ steps with probability $1-\delta$, where $\alpha$ and $\beta$ are related to the randomized agent selection process (see Theorem~\ref{thm:hom}), e.g., $\alpha = n^2 \log(n)$ and $\beta = 1$ if agents are selected uniformly at random each time step.
We also provide a lower bound of $\Omega(\eta \log(n\delta) + \log(1/\epsilon)/\log(n) + \log\log(1/\gamma))$ with probability $1-\delta$, where $\eta$ is again related to the randomized agent selection process, e.g., $\eta = n$ for uniform selection. Our bounds are tight w.r.t. $\epsilon$ and $\gamma$, and are polynomially tight w.r.t. $n$. The $\Omega(\log(1/\epsilon))$ lower bound w.r.t. $\epsilon$ improves exponentially upon the previously known bound of $\Omega(\log\log(1/\epsilon))$~\cite{ghosh2023best}.
For the best-case agent selection model, we provide upper and lower bounds of $O(\log\log(1/\gamma) + n \log(n/\epsilon))$ and $\Omega(n + \log\log(1/\gamma) + \log(1/\epsilon)/\log(n))$, respectively.

Our rate of convergence analysis also implies convergence of the BR dynamics for any agent selection process that does not starve any agent, i.e., everyone gets to play infinitely often. A convergence result for the BR dynamics was not previously known for Tullock contests with homogeneous agents with general convex costs.
For non-homogeneous agents, we complement the non-convergence result of \cite{ghosh2023best} by providing an example with strictly convex cost functions and where the BR dynamics leads to a cycle that does not include a BR to $0$ (Appendix~\ref{sec:non-hom}). 

\paragraph{Discounted-sum dynamics.} 
One of the novel contributions of the paper is the introduction and analysis of the discounted-sum dynamics (Section~\ref{sec:dissum}), which is used in the analysis of the BR dynamics for $n \ge 3$ agents. This dynamics proceeds as follows: starting from an initial state $\bx_0 \in \bR^n$, at each time $t$, the dynamics picks a coordinate $i_t \in [n]$ and updates the value at the $i_t$-th coordinate with the negative discounted sum of the values at the remaining coordinates, i.e., $x_{t+1,i_t} = - \beta_t \sum_{j \neq i_t} x_{t,j}$. The discount factor $\beta_t \in [0,1)$ can be picked arbitrarily, possibly adversarially, given $t$, $i_t$, and $\bx_t$. We show that this dynamics rapidly converges to $0$ using a potential function. 


\paragraph{Intuition for the proof of convergence $n \ge 3$ agents.}
We divide the analysis into three main parts. 
The first part corresponds to a \textit{warm-up} phase of the dynamics. Intuitively, there is a certain domain $\cD$ such that the BR dynamics avoids corner cases, e.g., BR to $0$, and is easier to analyze if the output profile is in $\cD$. The warm-up phase corresponds to the time it takes to ensure that the output profile is inside $\cD$ (where it stays thereafter) and is proportional to the time it takes for every agent to make at least one move. 
In the second part of the analysis, we derive a set of sufficient conditions that ensures that the BR dynamics behaves like the discounted-sum dynamics mentioned earlier. We bound the time it takes for the BR dynamics to reach a state that satisfies these conditions, and show that the BR dynamics satisfies these conditions thereafter.
In the third part of the analysis, we use our results for the discounted-sum dynamics to show convergence of the output profile to the equilibrium output profile in the BR dynamics, which implies convergence to an approximate equilibrium by showing Lipschitzness of the utility function.

Some other interesting parts of our analysis include the lower bounds for $n \ge 3$ agents and the lower and upper bounds for $2$ agents that are tight up to additive constants. Omitted proofs are provided in the appendix.

\subsection{Related Work}
The papers by Ewerhart~\cite{ewerhart2017lottery} and Ghosh and Goldberg~\cite{ghosh2023best} are the closest to ours. These two papers study a special case of our model where the cost functions are linear, also known as lottery contests.

Ewerhart~\cite{ewerhart2017lottery} showed that a lottery contest with homogeneous agents is a BR potential game. 
The set of BR potential games~\cite{voorneveld2000best,kukushkin2004best,dubey2006strategic,uno2007nested,uno2011strategic,park2015potential} is a super-set of ordinal potential games, which itself is a super-set of exact potential games~\cite{monderer1996potential}.
In a BR potential game, given the profile $\bx_{-i}$ of all agents except $i$, the BR of agent $i$ matches the action that maximizes the potential function given $\bx_{-i}$. 
Notice that being a BR potential game does not imply anything about actions that are not BRs, i.e., actions that increase an agent's utility but are not BRs may decrease the potential function and vice-versa. Lottery contests do not have an exact potential, so they are less structured than exact potential games and are strictly outside the class~\cite{ewerhart2017lottery}. Lottery contests with non-homogeneous agents have also been proven not to be ordinal potential games~\cite{ewerhart2020ordinal}.

Ghosh and Goldberg~\cite{ghosh2023best} used the BR potential function of Ewerhart~\cite{ewerhart2017lottery} and derived bounds on the rate of convergence of BR dynamics in lottery contests with homogeneous agents. They showed that the potential function of Ewerhart~\cite{ewerhart2017lottery} is strongly convex and smooth in a region around the equilibrium point. They also show that the BR dynamics frequently visits this region. Outside this region, the BR potential function never increases, and inside this region, using techniques related to coordinate descent~\cite{wright2015coordinate}, they show rapid improvement in the potential. Our analysis is quite different from that of Ghosh and Goldberg~\cite{ghosh2023best}. E.g., we do not use a BR potential function or the ideas from convex optimization (coordinate descent) in our analysis.\footnote{The potential function used to analyze the discounted-sum dynamics may be considered a \textit{partial} potential function for the BR dynamics; partial because (i) it holds only when the output profile is in a certain restricted region, and (ii) in this restricted region, a BR move does not increase the potential but may also not decrease it (in this paper, the potential is supposed to decrease, unlike standard notation where potential increases).} In fact, for Tullock contests with general convex cost functions, there does not exist a closed-form formula for the BR of an agent. Rather, our analysis uses the discounted-sum dynamics as discussed earlier. We do partially borrow the broad framework that Ghosh and Goldberg~\cite{ghosh2023best} use to handle corner cases like BR to $0$, but this also needs substantial generalization to handle arbitrary convex cost functions.

\cite{moulin1978strategically} implicitly shows strategic equivalence between contests and zero-sum games. This directly implies convergence of fictitious play dynamics for two agents~\cite{ewerhart2020fictitious}, but no such result has been proven for three or more agents. A Tullock contest corresponds to a Cournot game with isoelastic inverse demand and constant marginal costs. There are convergence results of learning dynamics for specific types of Cournot games, like Cournot oligopoly with strictly declining BR functions~\cite{deschamps1975algorithm,thorlund1990iterative}, Cournot game with linear demand~\cite{slade1994does}, aggregative games that allow monotone BR selections~\cite{huang2002fictitious,dubey2006strategic,jensen2010aggregative}, and others~\cite{dragone2012static,bourles2017altruism}. However, all these methods do not apply to the Tullock contest whose BR function is not monotone~\cite{dixit1987strategic}. 
A different line of research has studied the convergence (or chaotic behavior) of learning dynamics in other types of contests (like all-pay auctions) and Cournot games (e.g.,~\cite{puu1991chaos,warneryd2018chaotic,cheung2021learning}), but these techniques and results also do not apply to Tullock contests.
\section{Preliminaries}\label{sec:prelim} 
Let $[n] = \{1, 2, \ldots, n\}$. Let $\lg(x) = \log_2(x)$ and $\ln(x) = \log_e(x)$.

\subsection{Tullock Contest}
In the model by Tullock~\cite{tullock1980efficient}, there are $n$ agents and one contest with a unit prize (normalized). The agents simultaneously produce non-negative output $\bx = (x_1, \ldots, x_n) \in \bR_{\ge 0}^n$. Let $\bx_{-i} = (x_1, \ldots, x_{i-1}, x_{i+1}, \ldots, x_n)$, $s = \sum_{j \in [n]} x_j$, and $s_{-i} = \sum_{j \neq i} x_j$.
Agent $i$ incurs a cost of $c_i(x_i)$ for producing the output $x_i$. Agent $i$ gets an amount of prize proportional to $x_i$ if at least one agent produces a strictly positive output, else it is $1/n$.\footnote{Some papers in the literature, e.g., \cite{dasgupta1998designing}, assume that all agents get a prize of $0$ if they all produce a $0$ output. Our analysis and results remain the same with this alternate assumption as well.} The utility of agent $i$, $u_i(\bx)$, is
\begin{equation}\label{eq:utility:single}
    u_i(\bx) = \frac{x_i}{\sum_j x_j} - c_i(x_i) = \frac{x_i}{x_i + s_{-i}} - c_i(x_i).
\end{equation}
Notice that the utility of agent $i$ depends upon her output $x_i$ and the total output of other agents $s_{-i}$ but not upon the distribution of $s_{-i}$ across the $(n-1)$ agents. To signify this, we shall denote $u_i(\bx)$ as $u_i(x_i, s_{-i})$.

In this paper, we restrict our attention to convex cost functions. 
\begin{assumption}\label{as:cost1}
    For every agent $i \in [n]$, we assume that her cost function $c_i$ satisfies the following conditions 
    \begin{itemize}
        \item twice differentiable;
        \item increasing, $c_i'(z) > 0$ for all $z \ge 0$;
        \item weakly convex, $c_i''(z) \ge 0$ for all $z \ge 0$;
        \item Lipschitz continuous, $|c_i(z) - c_i(\overline{z})| \le K |z - \overline{z}|$ for all $z, \overline{z}$ in the neighborhood of the equilibrium,\footnote{For convergence to an $\epsilon$-approximate equilibrium, we will assume Lipschitz continuity in a neighborhood of size proportional to $\epsilon$.} where $K$ is the Lipschitz constant;
        \item zero cost for not participating, $c_i(0) = 0$, normalized.
    \end{itemize}
\end{assumption}
The conditions in Assumption~\ref{as:cost1}, except the Lipschitz continuity, are standard assumptions in the literature. Without these assumptions, a pure-strategy Nash equilibrium may not exist (see, e.g., Chapter~4 of \cite{vojnovic2015contest}, \cite{ewerhart2020unique}, and references therein). Lipschitz continuity is used only in Lemma~\ref{lm:lipschitz} to ensure that if an output profile is close to the equilibrium output profile, then the output profile is an approximate equilibrium.

\begin{remark}\label{rk:logit}
Our model captures contests of general logit form with concave success functions and convex costs~\cite{vojnovic2015contest}. In logit contests, the utility of agent $i$ is given by $\widehat{u}_i(\widehat{\bx}) = \frac{\widehat{f_i}(\widehat{x_i})}{\sum_j \widehat{f_j}(\widehat{x_j})} - \widehat{c_i}(\widehat{x_i})$, where $\widehat{x_j}$ is the output of agent $j$. For each $i$, if $\widehat{f_i}$ is concave and $\widehat{c_i}$ is convex (and both are non-negative and strictly increasing), then we can do the change of variable $x_i = \widehat{f_i}(\widehat{x_i})$ and set $c_i(x_i) = \widehat{c_i}(\widehat{f_i}^{-1}(x_i))$ to write the utility function as $u_i(\bx)  = \frac{x_i}{\sum_j x_j} - c_i(x_i) = \widehat{u_i}(\widehat{\bx})$. 
Similarly, a utility function of the form $\frac{V_i x_i}{\sum_j x_j} - c_i(x_i)$, where $V_i$ is the value of the prize for agent $i$, can be converted to the form given in \eqref{eq:utility:single} by scaling down by $V_i$, which does not affect the strategies of the agents.
Note that general Tullock contests with utility functions of the form $\widehat{u}_i(\widehat{\bx}) = \frac{\widehat{x_i}^g}{\sum_j \widehat{x_j}^{g}} - \widehat{c_i}(\widehat{x_i})$ for some $g \in (0,1]$ are special cases of logit contests with $\widehat{f_i}(\widehat{x_i}) = \widehat{x_i}^g$.
\end{remark}

\subsection{Best-Response Dynamics}\label{sec:prelim:br}
Given the total output $s_{-i} = \sum_{j \neq i} x_j$ of all agents except $i$, the best response (BR) of agent $i$ is an action $x_i$ such that
\begin{equation*}
    x_i \in \argmax_{z \ge 0} u_i(z, s_{-i}) = \argmax_{z \ge 0} \left( \frac{z}{z + s_{-i}} - c_i(z) \right) .
\end{equation*}
First, notice that an agent has no BR if the output produced by every other agent is $0$. If $s_{-i} = 0$, then by producing an output of $\epsilon > 0$, agent $i$ gets a utility of $u_i(\epsilon, 0) = 1 - c_i(\epsilon)$, which is strictly more than $u_i(0, 0) = 1/n$ for small $\epsilon$ (because $c_i$ is continuous and $c_i(0) = 0$, which implies $c_i(\epsilon)$ is close to $0$), so $x_i = 0$ cannot be a BR. Further, any $\epsilon > 0$ cannot be a BR because $u_i(\epsilon/2, 0) = 1 - c_i(\epsilon/2) > 1 - c_i(\epsilon) = u_i(\epsilon, 0)$ (and because $c_i$ is increasing).
To circumvent this technical issue, we assume that there is a small positive value $a < 1$ such that $x_i = a$ is the BR of any agent $i$ if $s_{-i} = 0$. We shall work with this assumption---an agent plays $a$ if everyone else plays $0$---throughout our paper.\footnote{This issue of not having any BR to $0$ can also be resolved by an alternate technical assumption: the prize is given to agent $i$ with probability $\frac{x_i}{b + \sum_j x_j}$ and agent $i$'s expected utility is $\frac{x_i}{b + \sum_j x_j} - c_i(x_i)$, where $b$ is a small positive constant. Notice that, with this alternate assumption, the prize may not get allocated to any agent (or is allocated to a pseudo agent) with a positive probability of $\frac{b}{b + \sum_j x_j}$, unlike our model. We \textit{expect} all results in this paper to hold for this alternate model as well.\label{fn:alternateModel}}

On the other hand, agent $i$ has a unique BR if $s_{-i} > 0$, i.e., if the output produced by at least one other agent $j$ is non-zero. This unique BR can be computed by taking a derivative of $u_i(z, s_{-i})$ with respect to $z$. If $s_{-i} > 0$, then 
\begin{align}
    \frac{\partial u_i(z, s_{-i})}{\partial z} &= \frac{ s_{-i} }{(z + s_{-i})^2} - c_i'(z), \label{eq:du} \\
    \frac{\partial^2 u_i(z, s_{-i})}{\partial z^2} &= \frac{-2 s_{-i} }{(z + s_{-i})^3} - c_i''(z) < 0, \label{eq:ddu}
\end{align}
where the last inequality is true because $z \ge 0$, $c_i''(z) \ge 0$, and $s_{-i} > 0$. So, $u_i(z, s_{-i})$ is strictly concave w.r.t. $z$, i.e., $\frac{\partial u_i(z, s_{-i})}{\partial z}$ is strictly decreasing w.r.t. $z$. 
Let $BR_i(s_{-i})$ denote the BR of agent $i$ given the total output of other agents is $s_{-i}$. The first-order conditions for the BR are
\begin{align}\label{eq:br}
    BR_i(s_{-i}) > 0 \text{ and } \left. \frac{\partial u_i(z, s_{-i})}{\partial z} \right|_{z = BR_i(s_{-i})} = 0, \text{ if } \left. \frac{\partial u_i(z, s_{-i})}{\partial z} \right|_{z = 0} > 0; \\
    BR_i(s_{-i}) = 0 \text{ and } \left. \frac{\partial u_i(z, s_{-i})}{\partial z} \right|_{z = BR_i(s_{-i})} \le 0, \text{ if } \left. \frac{\partial u_i(z, s_{-i})}{\partial z} \right|_{z = 0} \le 0.
\end{align}
If $s_{-i} > 0$, notice that at $z = \max(1, (c_i')^{-1}(1))$, the agent has a cost of at least $1$ but a prize of less than $1$, therefore $ \left. \frac{\partial u_i(z, s_{-i})}{\partial z} \right|_{z = \max(1, c_i^{-1}(1))} < 0$. So, the BR is less than $\max(1, c_i^{-1}(1))$. For $s_{-i} = 0$, the BR is $a < 1$ by assumption.

Slightly overloading notation, let $\bx_t = (x_{t,i})_{i \in [n]}$ denote the action profile of the agents at time $t$ in the BR dynamics. Similarly, let $s_t = \sum_j x_{t,j}$ and $s_{t,-i} = \sum_{j \neq i} x_{t,j}$. The BR dynamics starts with an initial profile $\bx_0 = (x_{0,i})_{i \in [n]}$. At each time step $t \ge 0$, an agent $i_t \in [n]$ makes a BR move. 
Formally, $x_{t+1,i_t} = BR_{i_t}(s_{t, -i_t})$ and $x_{t+1, j} = x_{t, j}$ for $j \neq i_t$. In this paper, we study the convergence (or non-convergence) of this BR dynamics.

\subsection{Agent-Selection Models for $n \ge 3$ Agents}\label{sec:prelim:select}
When there are just two agents, the BR dynamics proceeds uniquely: the two agents make alternate BR moves because making consecutive moves is redundant. But, when there are $n \ge 3$ agents, the BR dynamics is not unique. At any time step, the $n-1 \ge 2$ agents who did not make the most recent BR move can possibly make a non-trivial move. The rate of convergence depends upon which agent makes a move at a given time step. We consider two models for selecting the agent playing the BR move at each time step.

\subsubsection{Randomized Agent-Selection Model.}
In the random selection model~\cite{ghosh2023best}, agent $i$ makes the BR move w.p. (with probability) $p_{t,i}(\bx_t)$ at time $t$ given output profile $\bx_t$. The probability $p_{t,i}(\bx_t)$ models the relative activity of agent $i$ at time $t$ given the current profile $\bx_t$. 
We assume that $p_{t,i}(\bx_t) \ge L > 0$ for all agents except the agent who made the last transition. 
Our rate of convergence analysis will be worst-case over all $p_{t,i}$ functions given the parameter $L > 0$.
Even if this requirement on the $p_{t,i}$ functions is not satisfied, our analysis still implies convergence of the BR dynamics assuming none of the agents are starved (every agent gets to play infinitely often).

An important special case of the model is to assume that $p_{t,i}(\bx_t) = 1/n$ for all $t$, $i$, and $\bx_t$, i.e., the agent making the move is selected uniformly at random. 

\subsubsection{Best-Case Agent-Selection Model.}
In the best-case selection model, we assume that the selection process picks the sequence of agents making the BR moves to reach the $\epsilon$-approximate equilibrium as quickly as possible. We provide lower bounds on the time till convergence for this model, which by definition automatically apply to all other models, including the randomized model. We also prove almost tight upper bounds for this model. 
On the other hand, our upper bounds for the randomized model are tight w.r.t. $\epsilon$ (and the initial state) but only polynomially tight w.r.t. $n$ (e.g., the lower bound is $\Omega(n)$ while the upper bound of $O((n \log(n))^2)$ for uniform randomized selection).

\subsection{Equilibrium}
A Tullock contest with concave utility always has a pure-strategy Nash equilibrium (which is also the unique equilibrium, including mixed-strategy Nash equilibria~\cite{vojnovic2015contest,ewerhart2020unique}). So, we exclusively focus on pure equilibria in this paper.
\begin{definition}[Pure-Strategy Nash Equilibrium]
    An action profile $\bx = (x_1, \ldots, x_n)$ is a pure-strategy Nash equilibrium if it satisfies
    \[
        u_i (x_i, \bx_{-i}) \ge u_i (x_i', \bx_{-i}), 
    \]
    for every agent $i$ and every action $x_i'$ for agent $i$.
\end{definition}
In general, BR dynamics in a Tullock contest never exactly reaches the equilibrium, rather it may \textit{converge} to the equilibrium. The dynamics converges to an equilibrium if it reaches an $\epsilon$-approximate equilibrium in finite time for any $\epsilon > 0$.
\begin{definition}[Approximate Pure-Strategy Nash Equilibrium]
    An action profile $\bx = (x_1, \ldots, x_n)$ is an $\epsilon$-approximate pure-strategy Nash equilibrium, for $\epsilon > 0$, if it satisfies
    \[
        u_i (x_i, \bx_{-i}) \ge (1- \epsilon) u_i (x_i', \bx_{-i}), 
    \]
    for every agent $i$ and every action $x_i'$ for agent $i$.
\end{definition}

\subsection{Homogeneous Agents}
The agents are homogeneous if they all have the same cost function. Additionally, to make our analysis cleaner, we make the following assumption on the (homogeneous) cost function
\begin{assumption}\label{as:cost2}
    For every agent $i$, $c_i(x_i) = \frac{n-1}{n^2} c(x_i)$, where $c$ satisfies $c'(1) = 1$ (and Assumption~\ref{as:cost1}). 
\end{assumption}
Assumption~\ref{as:cost2} is w.l.o.g. by suitable change of notation.\footnote{Say the (homogeneous) cost of agent $i$ is $c_i(x_i) = \tilde{c}(x_i)$ for some $\tilde{c}$, for every $i$. First, we can introduce the $\frac{n-1}{n^2}$ factor by writing $c_i(x_i) = \frac{n-1}{n^2} \widehat{c}(x_i)$, where $\widehat{c}(x_i) = \frac{n^2}{n-1} \tilde{c}(x_i)$. Then, let $\gamma > 0$ be the solution to $\widehat{c}'(\gamma) = \frac{1}{\gamma}$. Such a $\gamma$ always exists because $\widehat{c}'$ is increasing, while $\frac{1}{\gamma}$ is decreasing and converges to $\infty$ as $\gamma$ converges to $0$. So, setting $c(x_i) = \widehat{c}(\gamma x_i)$, we get $c'(1) = \gamma \widehat{c}'(\gamma) = 1$. Finally, we can change the notation and say that an agent $i$ produces an output of $y_i = \gamma x_i$ instead of $x_i$.}
The utility of agent $i$ can be written as $u_i(x_i, s_{-i}) = \frac{x_i}{x_i + s_{-i}} - \frac{n-1}{n^2} c(x_i)$ for every $i \in [n]$. Notice that the BR of each agent is now identical given the same action profile of the remaining agents; let us, therefore, denote the best response function as $BR = BR_i$ for every $i \in [n]$ by suppressing $i$.
There is a unique equilibrium where each agent $i$ plays the same action (e.g., see \cite{vojnovic2015contest}, Chapter~4), and given Assumption~\ref{as:cost2}, this is equal to $1$. Because, if $x_j = 1$ for all $j$, then $s_{-i} = n-1$ and $\left. \frac{\partial u_i(z, s_{-i})}{\partial z} \right|_{z = 1} = \frac{s_{-i} }{(1 + s_{-i})^2} - \frac{n-1}{n^2} c'(1) = \frac{ n-1 }{(1 + n-1)^2} - \frac{n-1}{n^2} = 0$.

\section{Two Homogeneous Agents}\label{sec:hom_2_agents}
In this section, we study the rate of convergence of best-response dynamics for two homogeneous agents.
As introduced earlier, the initial state is denoted by $\bx_0 = (x_{0,1}, x_{0,2})$, and the state after $t \ge 0$ best-response moves is denoted by $\bx_t = (x_{t,1}, x_{t,2})$. 
We assume that agent $1$ makes the best-response move when $t$ is odd, i.e., $i_t = 1$ if $t$ odd, and agent $2$ makes the best-response move when $t$ is even, i.e., $i_t = 2$ if $t$ even. This is w.l.o.g. because: (i) it is redundant for an agent to play two consecutive best-response moves, and therefore the two agents should alternate; (ii) as the agents are homogeneous, it does not matter who makes the first move. Formally, for $t = 0, 1, 2, \ldots$, the BR dynamics proceeds as:
\begin{align*}
    x_{t+1, 2} &= BR(x_{t, 1}), \text{ and } x_{t+1, 1} = x_{t, 1},\quad \text{ if $t = 0, 2, 4, \ldots$,} \\
    x_{t+1, 1} &= BR(x_{t, 2}), \text{ and } x_{t+1, 2} = x_{t, 2},\quad \text{ if $t = 1, 3, 5, \ldots$.}
\end{align*}
Next, we prove an $O(\log\log)$ rate of convergence for two agents.

\begin{theorem}\label{thm:hom2}
Best-response dynamics in Tullock contests with two homogeneous agents reaches an $\epsilon$-approximate equilibrium within $\lg\lg(\frac{1}{\epsilon}) + \lg\lg(\frac{1}{\gamma}) + O(1)$ steps, where $\gamma$ is a function of the initial state: $\gamma = x_{0,1}$ if $0 < x_{0,1} < 1$; $\gamma = (c')^{-1} \left( \frac{1}{x_{0,1}} \right)$ if $x_{0,1} > 1$ and $c'(0) < \frac{4}{x_{0,1}}$; and $\gamma = a$ otherwise.
\end{theorem}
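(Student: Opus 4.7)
The plan is to track the dynamics through the iteration $y \mapsto BR(y)$ applied alternately to the two coordinates, so that after $t$ best-response moves one coordinate equals $BR^{(t)}(x_{0,1})$ and the other equals $BR^{(t-1)}(x_{0,1})$. The goal is to establish quadratic convergence of $BR^{(t)}(x_{0,1})$ to the equilibrium value $1$ once the iterate is in a small neighborhood of $1$, preceded by a warm-up of length $\lg\lg(1/\gamma) + O(1)$ that brings the iterate into this neighborhood from the initial state. Finally, Lipschitz continuity of $u_i$ near the equilibrium (Lemma~\ref{lm:lipschitz}) converts output-closeness into an $\epsilon$-approximate equilibrium.

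The local quadratic-convergence analysis rests on showing $BR'(1) = 0$. Define $F(y, z) = \frac{y}{(z+y)^2} - \frac{1}{4} c'(z)$, so that the FOC reads $F(y, BR(y)) = 0$ and $BR(1) = 1$ by Assumption~\ref{as:cost2}. Direct computation gives $\partial_y F |_{(1,1)} = \frac{z - y}{(z+y)^3}\bigr|_{(1,1)} = 0$ and $\partial_z F |_{(1,1)} = -\tfrac{1}{4}(1 + c''(1)) \ne 0$, so the implicit function theorem yields $BR'(1) = 0$. Taylor's theorem then gives constants $C, r_0 > 0$ (depending only on $c$) with $|BR(y) - 1| \le C (y-1)^2$ whenever $|y - 1| \le r_0$. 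Iterating, if $|x_{t,1} - 1| \le \min(r_0, 1/(2C))$, then $|x_{t+k,1} - 1| \le (2C)^{-1} (2C\,|x_{t,1} - 1|)^{2^k}$, so $k = \lg\lg(1/\epsilon) + O(1)$ further best-response moves suffice to drive the error below $\epsilon$.

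The warm-up is handled case-by-case to match the definition of $\gamma$. In each case, one or two best-response moves place some coordinate at $y_0 = \Theta(\gamma) \in (0, 1]$: directly when $x_{0,1} \in (0,1)$; via the FOC estimate $c'(BR(x_{0,1})) = 4 x_{0,1}/(BR(x_{0,1}) + x_{0,1})^2 = \Theta(1/x_{0,1})$, hence $BR(x_{0,1}) = \Theta(\gamma)$, when $x_{0,1} > 1$ with $c'(0) < 4/x_{0,1}$; and via the fallback $BR(x_{0,1}) = a = \gamma$ otherwise. Starting from $y_0$, as long as the iterate lies in $(0, r_0]$, the FOC combined with boundedness of $BR$ and continuity of $c'$ on the relevant interval yields two-sided bounds $BR(y) = \Theta(\sqrt{y})$. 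Writing $\ell_k = \ln(1/BR^{(k)}(y_0))$ this gives $\ell_{k+1} \le \tfrac{1}{2} \ell_k + O(1)$, so $\ell_k$ halves each step until it reaches $O(1)$; equivalently, after $\lg\lg(1/\gamma) + O(1)$ best-response moves the iterate lies in $[r_0, 1]$, and the quadratic phase takes over.

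The main obstacle is the non-monotonicity of $BR$ globally and the asymmetry of the error around $1$: one must verify that once the iterate enters $[r_0, 1]$, subsequent iterates remain within the Taylor neighborhood rather than escaping back below $r_0$ or overshooting $1$ significantly. This reduces to checking that $1$ is the global maximum of $BR$, so $BR(y) \in [0, 1]$ for all $y > 0$ (which follows from $BR'(1) = 0$ together with the implicit derivative formula $BR'(y) = \frac{y - BR(y)}{2y + \tfrac{1}{4}(BR(y)+y)^3 c''(BR(y))}$ having a fixed sign on each side of $1$), and that the iterate from the warm-up side approaches $1$ monotonically from below. Summing the warm-up, quadratic, and Lipschitz-conversion contributions yields the claimed $\lg\lg(1/\epsilon) + \lg\lg(1/\gamma) + O(1)$ bound.
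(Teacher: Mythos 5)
Your decomposition mirrors the paper's exactly: reduce to a single iterate $z_t$, prove the structural lemma that $BR$ maps $(0,1)$ into $(z,1)$ and $[1,\infty)$ into $(0,1]$ (the paper's Lemma~\ref{lm:<=1}, proved via the same implicit-derivative formula you quote), spend $\lg\lg(1/\gamma)+O(1)$ steps in a warm-up governed by $BR(y)\ge\sqrt{y}$, spend $\lg\lg(1/\epsilon)+O(1)$ steps in a quadratic phase, and finish with the Lipschitz conversion (Lemma~\ref{lm:lipschitz2}). Where you genuinely diverge is the engine of the quadratic phase: the paper manipulates the first-order condition directly to get $\zeta_{t+1}\le\zeta_t^2$ for $\zeta_t=1-\sqrt{z_t}$, with explicit constants and no appeal to differentiability of $BR$; you instead derive $BR'(1)=0$ from the implicit function theorem and invoke Taylor. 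Your route is conceptually cleaner (it exposes \emph{why} convergence is quadratic: the fixed point has vanishing derivative), while the paper's buys uniform constants for free and avoids any regularity question about $BR$.

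That regularity question is the one real gap. Taylor's theorem with a quadratic remainder, $|BR(y)-1|\le C(y-1)^2$, requires a bounded second derivative of $BR$ near $1$; via the implicit function theorem this would need $c'''$, which Assumption~\ref{as:cost1} does not provide ($c$ is only twice differentiable, and $c''$ need not even be continuous). The fix is elementary and uses ingredients you already have: from $BR'(s)=\frac{s-BR(s)}{2s+\frac{1}{4}(BR(s)+s)^3c''(BR(s))}$, the denominator is at least $2s$ since $c''\ge 0$, and your structural lemma gives $|BR(s)-1|\le|s-1|$, so $|BR'(s)|\le\frac{|s-1|+|BR(s)-1|}{2s}\le\frac{|y-1|}{1-r_0}$ for all $s$ between $1$ and $y$; integrating yields the quadratic bound with $C=\frac{1}{2(1-r_0)}$ and no extra smoothness. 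Two smaller inaccuracies are harmless for an upper bound but should not be stated as equalities: $BR(y)=\Theta(\sqrt{y})$ fails on the upper side when $c'(0)=0$ (cf.\ Theorem~\ref{thm:hom2:curved}, where $BR(y)\approx y^{1/(2+q)}\gg\sqrt{y}$), though only the lower bound $BR(y)\ge\sqrt{y}$ is used; and in the fallback case $x_{0,1}>1$ with $c'(0)\ge 4/x_{0,1}$ one has $BR(x_{0,1})=0$ and only the \emph{next} move lands at $a$, so the coordinate reaches $\gamma=a$ after two moves, not one.
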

\begin{proof}
Notice that the evolution of the states of the BR dynamics $(\bx_t)_{t \ge 0}$ can be tracked by a single sequence $(z_t)_{t \ge 0}$ defined as follows: 
\[
    z_t =
    \begin{cases}
        x_{t,1} &\text{ if $t = 0, 2, 4, \ldots$,} \\
        x_{t,2} &\text{ if $t = 1, 3, 5, \ldots$.}
    \end{cases}
\]
$z_t$ has a one-to-one correspondence with $\bx_t$ because: (i) Only the action of agent $1$ in the initial profile is relevant for the BR dynamics, because, at $t=0$, agent $2$ makes the BR move, which is a function of only $x_{0,1}$, and which overwrites $x_{0,2}$. (ii) For each time step, one of the agents moves and the other stays at their current action, and $z_t$ is equal to the new action at each time point.

We next prove the following properties about the sequence $(z_t)_{t \ge 0}$: (i) $z_t \le 1$ for all $t \ge 1$; (ii) if $z_t < 1$, then $z_t < z_{t+1} < 1$. 
\begin{lemma}\label{lm:<=1}
    The BR of agent $i$, $BR(x_{-i})$, given any strategy $x_{-i}$ of the other agent, satisfies the following properties:
    \begin{enumerate}
        \item \label{lm:<=1:1} if $x_{-i} < 1$, then $x_{-i} < BR(x_{-i}) < 1$,
        \item \label{lm:<=1:2} if $x_{-i} = 1$, then $BR(x_{-i}) = 1$, and
        \item \label{lm:<=1:3} if $x_{-i} > 1$, then $BR(x_{-i}) < 1$.
    \end{enumerate}
\end{lemma}
\paragraph{Dependency on the initial state.} Let us now analyze the effect of the initial state $z_0$ on the BR dynamics. If $z_0 = 1$, from Lemma~\ref{lm:<=1} \eqref{lm:<=1:2}, we know that $z_t = 1$ for all $t$. This can alternately be deduced from the fact that $(1, 1)$ is the equilibrium profile. For the remaining proof, we assume that $z_0 \neq 1$. As $z_0 \neq 1$, from Lemma~\ref{lm:<=1} \eqref{lm:<=1:1} and \eqref{lm:<=1:3}, we know that $z_1 < 1$. Again using Lemma~\ref{lm:<=1} \eqref{lm:<=1:1}, we know that $z_{t} < z_{t+1} < 1$ for all $t \ge 1$, i.e., $(z_t)_{t \ge 1}$ is a strictly increasing sequence going towards $1$. Let us define the variable $\gamma$ used in the theorem statement as follows:
\begin{itemize}
    \item If $z_0 = 0$, set $\gamma = a < 1$. Notice that $z_1 = a = \gamma$.
    \item If $0 < z_0 < 1$, set $\gamma = z_0 < 1$.
    \item If $z_0 > 1$ and $c'(0) < \frac{4}{z_0}$ set $\gamma = (c')^{-1} \left( \frac{1}{z_0} \right)$. From the first-order condition, equation~\eqref{eq:br}, and using $0 \le z_1 < 1 < z_0$, we have
    \begin{multline}\label{eq:thm:hom2:1}
        \frac{z_0}{(z_0 + z_1)^2} = \frac{1}{4}c'(z_1) \implies \frac{z_0}{(z_0 + z_0)^2} = \frac{1}{4 z_0} \le \frac{1}{4}c'(z_1) \le \frac{1}{z_0} = \frac{z_0}{(z_0 + 0)^2} \\
        \implies z_1 = (c')^{-1} \left( \frac{\kappa}{z_0} \right) \text{ for some $\kappa \in [1,4]$.}
    \end{multline}
    We will later see that the dependency on $\gamma$ is $\lg\lg(\frac{1}{\gamma}) + O(1)$, so we may ignore the constant $\kappa$ as it can be consumed in the $O(1)$ term.
    \item If $z_0 > 1$ and $c'(0) \ge \frac{4}{z_0}$, set $\gamma = a$. In this case, it is easy to check that $z_1 = 0$ and $z_2 = \gamma$. 
\end{itemize}

By the definition of $\gamma$, either $z_0$, $z_1$, or $z_2$ is equal to $\gamma < 1$.
So, for the remaining portion of the proof, by shifting the time by at most two steps, w.l.o.g., let us assume that $z_0 = \gamma < 1$.
We know that the sequence $(z_t)_{t \ge 0}$ is strictly increasing. We next try to find the number of steps required by the sequence to increase from $\gamma$ to $1 - \epsilon$. We break this into two parts: the number of steps required to reach (i) $\frac{1}{2}$ from $\gamma$ in Lemma~\ref{lm:2hom:1}; (ii) $1 - \epsilon$ from $\frac{1}{2}$ in Lemma~\ref{lm:2hom:2}.
\begin{lemma}\label{lm:2hom:1}
    Given $z_0 = \gamma \in (0,\frac{1}{2}]$, $z_t \ge \frac{1}{2}$ for all $t \ge \lg\lg(\frac{1}{\gamma})$.
\end{lemma}
\begin{lemma}\label{lm:2hom:2}
    Given $z_0 \in [\frac{1}{2}, 1]$, $z_t \ge 1 - \epsilon$ for all $t \ge \lg\lg(\frac{1}{\epsilon}) + O(1)$.
\end{lemma}

Combining the previous two lemmas, we know that after $t-1$ steps, where $t-1 \ge \lg\lg(\frac{1}{\epsilon}) + \lg\lg(\frac{1}{\gamma}) + O(1)$, we get $z_{t-1} \ge 1-\epsilon$ and $z_t \ge 1-\epsilon$. So, $x_{t,i} \ge 1 - \epsilon$ for $i \in [2]$. Essentially, we have shown that the output profile $\bx_t$ is close to the equilibrium output profile. Lemma~\ref{lm:lipschitz2} below completes the proof by showing that if the output profile is close to the equilibrium profile then the agents cannot benefit much by deviating. 
\begin{lemma}\label{lm:lipschitz2}
If the output profile $\bx = (x_1, x_2)$ satisfies $1 - \epsilon \le x_i \le 1$ for $i \in [2]$, then $\bx$ is an $3\epsilon$-approximate equilibrium.
\end{lemma}
\qed\end{proof}

\cite{ghosh2023best} provided a lower bound of $\lg\lg(\frac{1}{\epsilon}) + \lg\lg(\frac{1}{\gamma}) + \Omega(1)$ for linear cost functions. So, for the class of all convex cost functions, our bound is tight up to additive constants. But, faster convergence rates are possible for specific classes of convex cost functions. For example, if $c'(z) = z^q$ and $q \rightarrow \infty$, then $BR(x_{-i}) \rightarrow 1$ for any $x_{-i} \in [0,1]$ and $i \in [2]$. So, within a constant number of steps, we can approximately reach the equilibrium.
In Theorem~\ref{thm:hom2:curved}, we provide tight bounds for cost functions whose derivatives can be bounded as $z^p \le c'(z) \le z^q$ for $z \in [0,1]$ and $p \ge q \ge 0$. Notice that the case $p = q = 1/r$ for $r \in (0,1]$ is equivalent to Tullock contests with concave contest success functions ($\frac{x_i^r}{\sum_j x_j^r}$) and linear costs.
\begin{theorem}\label{thm:hom2:curved}
Best-response dynamics in Tullock contests with two homogeneous agents and a convex cost function $c$ that satisfies $z^p \le c'(z) \le z^q$ for $z \in [0,1]$ and $p \ge q \ge 0$ reaches an $\epsilon$-approximate equilibrium in $\lg\lg(\frac{1}{\epsilon}) + \frac{1}{\lg(2+r)}\lg\lg(\frac{1}{\gamma}) - \lg\lg(2+r) + \Theta(1)$ steps, where $r \in [q,p]$ and $\gamma$ is a function of the initial state: $\gamma = x_{0,1}$ if $0 < x_{0,1} < 1$; $\gamma = (\frac{1}{x_{0,1}})^{1/r}$ if $x_{0,1} > 1$ and $c'(0) < \frac{4}{x_{0,1}}$; and $\gamma = a$ otherwise.
\end{theorem}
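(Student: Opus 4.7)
The overall structure mirrors the proof of Theorem~\ref{thm:hom2}. We collapse the bivariate dynamics $\bx_t$ into the single sequence $z_t = x_{t,1}$ if $t$ even and $z_t = x_{t,2}$ if $t$ odd, and observe that Lemma~\ref{lm:<=1} continues to apply verbatim (it only uses Assumption~\ref{as:cost1}). So after at most two steps we are in the regime $z_t < 1$ and the sequence is strictly increasing. The initial-state analysis is almost identical, except the definition of $\gamma$ must be adjusted: when $z_0 > 1$ and $c'(0) < 4/z_0$, the first-order condition as in \eqref{eq:thm:hom2:1} gives $z_1 = (c')^{-1}(\kappa/z_0)$ for $\kappa \in [1,4]$, and the polynomial bound $z^p \le c'(z) \le z^q$ on $[0,1]$ yields $(1/z_0)^{1/q} \le (c')^{-1}(1/z_0) \le (1/z_0)^{1/p}$, i.e., $z_1 = (1/z_0)^{1/r}$ for some $r \in [q,p]$. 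After a time shift of at most two steps, we may again assume $z_0 = \gamma < 1$.

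\textbf{Faster first phase.} The key new lemma is an improved analog of Lemma~\ref{lm:2hom:1}, stating that only $\frac{1}{\lg(2+r)}\lg\lg(1/\gamma) - \lg\lg(2+r) + O(1)$ steps are needed to reach $z_t \ge 1/2$. This is where the polynomial bound on $c'$ is exploited. Since $z_{t+1} > z_t$ by Lemma~\ref{lm:<=1}, the first-order condition combined with $c'(z_{t+1}) \le z_{t+1}^{q}$ gives
\begin{equation*}
    z_t \;=\; \tfrac{1}{4}(z_t + z_{t+1})^2\, c'(z_{t+1}) \;\le\; \tfrac{1}{4}(2z_{t+1})^2\, z_{t+1}^{q} \;=\; z_{t+1}^{q+2},
\end{equation*}
so $z_{t+1} \ge z_t^{1/(q+2)}$. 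Writing $w_t = \lg(1/z_t)$, this yields the contraction $w_{t+1} \le w_t/(q+2)$, a geometric contraction with ratio $q+2$. Iterating, $w_t \le w_0/(q+2)^t$, so $w_t \le 1$ (hence $z_t \ge 1/2$) after $t \ge \lg w_0/\lg(q+2) = \lg\lg(1/\gamma)/\lg(q+2)$ steps. Choosing $r = q$ gives the first two terms of the stated bound; the $-\lg\lg(2+r)$ correction will arise from carefully accounting for the additive improvement $\lg 4$ in the recursion when one does not discard it via $z_t + z_{t+1} \le 2z_{t+1}$, yielding $w_{t+1} \le (w_t - \lg 4)/(q+2)$ and a correction in the iterate count of order $\lg\lg(q+2)$.

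\textbf{Second phase and conclusion.} Once $z_t \ge 1/2$, we reuse Lemma~\ref{lm:2hom:2} unchanged: the $\lg\lg(1/\epsilon) + O(1)$ bound on this phase only relies on the quadratic behavior of the first-order condition near the equilibrium $z = 1$ (which is governed by $c'(1) = 1$ and the weak convexity in Assumption~\ref{as:cost1}), not on the polynomial shape of $c'$ away from $1$. Combining the two phases and applying Lemma~\ref{lm:lipschitz2} converts closeness in the output profile to an $\epsilon$-approximate equilibrium guarantee, exactly as in the proof of Theorem~\ref{thm:hom2}. For the $\Theta(1)$ matching lower bound, an analogous calculation using $c'(z_{t+1}) \ge z_{t+1}^{p}$ gives $z_{t+1} \le (4z_t)^{1/(p+2)}$, showing $w_t \ge (w_0 - O(1))/(p+2)^t$, which precludes convergence faster than the rate stated with $r = p$.

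\textbf{Main obstacle.} The cleanest calculation yields the coefficient $\lg(q+2)$ or $\lg(p+2)$ separately for the upper and lower bounds; expressing both via a single parameter $r \in [q,p]$ while keeping the $-\lg\lg(2+r)$ correction coherent is the subtle bookkeeping step. Tracking how the $\kappa \in [1,4]$ ambiguity from the initial-state reduction and the $\lg 4$ additive constant in the recursion combine, and showing these contribute only an additive $O(1)$ after the $\lg\lg$ is taken, is the most delicate piece of the argument.
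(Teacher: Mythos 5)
Your overall architecture is the right one and matches the paper up to a point: the $z_t$ reduction and Lemma~\ref{lm:<=1} carry over, the redefinition of $\gamma$ via $(c')^{-1}(1/z_0)=(1/z_0)^{1/r}$ with $r\in[q,p]$ is exactly what the paper does, and your first-phase recursion $z_{t+1}\ge z_t^{1/(2+q)}$ giving $\frac{1}{\lg(2+q)}\lg\lg(1/\gamma)$ steps to reach $1/2$ is the paper's Lemma~\ref{lm:2hom:curved:1}. The genuine gap is where you locate the $-\lg\lg(2+r)$ term. You attribute it to the first phase via the recursion $w_{t+1}\le(w_t-\lg 4)/(q+2)$ for $w_t=\lg(1/z_t)$. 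That cannot work for two reasons. First, the additive $\lg 4$ is not uniformly available: the first-order condition gives $z_t=\frac14(z_t+z_{t+1})^2c'(z_{t+1})\le\frac14(2z_{t+1})^2z_{t+1}^q=z_{t+1}^{2+q}$, and the factor $4$ cancels exactly (the gain you describe only appears when $z_t\ll z_{t+1}$, which fails near the end of the phase). Second, and more fundamentally, an additive constant $c$ in the recursion $w_{t+1}\le(w_t-c)/(q+2)$ merely shifts $w_t$ by the bounded quantity $c/(q+1)$ before the outer logarithm is taken, so it changes the iterate count by $O(1)$, never by $-\lg\lg(2+q)$. The $-\lg\lg(2+r)$ term must come from the second phase, which you explicitly propose to leave unchanged. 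The paper's Lemma~\ref{lm:2hom:curved:2} redoes the near-equilibrium analysis using $c'(z_{t+1})\le z_{t+1}^q$ and obtains the strengthened contraction $\zeta_{t+1}\le\frac{8}{8+q}\zeta_t^2$ for $\zeta_t=1-\sqrt{z_t}$ (versus $\zeta_{t+1}\le\zeta_t^2$ in Lemma~\ref{lm:2hom:2}); iterating yields roughly $\zeta_t\le\bigl(\frac{8\zeta_0}{8+q}\bigr)^{2^{t-1}}$ and hence a second-phase count of $\lg\lg(1/\epsilon)-\lg\lg(2+q)+O(1)$. Your assertion that this phase ``only relies on the quadratic behavior near $z=1$'' discards exactly the mechanism that produces the stated bound, which matters precisely when $r$ is large.

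The matching lower bound has the same omission. Beyond the first-phase bound $z_{t+1}\le 4z_t^{1/(2+p)}$ (which you have), you need a second-phase lower bound of the form $\zeta_{t+1}\ge\frac{1}{2+3p}\zeta_t^2$, showing $z_t<1-\epsilon$ for all $t<\lg\lg(1/\epsilon)-\lg\lg(2+p)-O(1)$, and you need a converse to Lemma~\ref{lm:lipschitz2} (the paper's Lemma~\ref{lm:revlipschitz}, via strong concavity of $u_i$ near the equilibrium) to convert ``$z_t$ bounded away from $1$'' into ``not an $\epsilon$-approximate equilibrium''. Without these two pieces the $\Theta(\cdot)$ claim, and in particular the $-\lg\lg(2+r)$ term on both sides, is not established.
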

\section{Discounted-Sum Dynamics and its Convergence}\label{sec:dissum}
Let us now look at a dynamic process we call the discounted-sum dynamics and study its convergence properties. We will use these results for our subsequent analysis of the BR dynamics for three or more agents. The discounted-sum dynamics, its convergence, and the potential function used to prove it may be of independent interest.

\begin{definition}[Discounted-Sum Dynamics]\label{def:dissum}
    Let $\bz_0 = (z_{0,i})_{i \in [n]} \in \bR^n$ be the initial state. At time $t \in \bZ_{\ge 0}$, an $i_t \in [n]$ is selected, and the next state is computed as follows:
    \begin{align*}
        z_{t+1, i_t} = - \beta_{t,i}(\bz_t) \sum_{j \neq i_t} z_{t, j}; \qquad z_{t+i, j} = z_{t, j}, \text{ for } j \neq i_t;
    \end{align*}
    where $0 \le \beta_{t,i}(\bz_t) \le B$ for some non-negative constant $B < 1$.
\end{definition}
We assume that $\beta_{t,i}(\bz_t)$ can be chosen arbitrarily (possibly adversarially) but must satisfy $0 \le \beta_{t,i}(\bz_t) \le B$. For convenience, we refer to the $n$ coordinates of the discounted-sum dynamics as agents because these $n$ coordinates will correspond to the $n$ agents when we use this dynamics to analyze the BR dynamics in subsequent sections. We analyze the discounted-sum dynamics for two methods of selecting agent $i_t$---randomized and best-case---similar to the agent selection processes of the BR dynamics as described in Section~\ref{sec:prelim:select}.

We study $\epsilon$-approximate convergence of the discounted-sum dynamics w.r.t. the $\ell_1$-distance, i.e., after sufficiently large $t$, we want $||\bz_t||_1 = \sum_j |z_{t,j}| \le \epsilon$ for any given $\epsilon > 0$. It can be observed that the unique stable point of this dynamics is $\bzero$ (our rate of convergence analysis will also formally imply this claim).
Let $\sigma_t = \sum_j z_{t,j}$ and $\sigma_{t,-i} = \sum_{j \neq i} z_{t,j}$.
\begin{lemma}\label{lm:dissum:random}
    For the randomized agent selection model, the discounted-sum dynamics $\epsilon$-approximately converges to $\bzero$ in $O(\frac{\log(n)}{L^2(1-B)} \log(\frac{f(\bz_0)}{\epsilon \delta}))$ time w.p. $1-\delta$, i.e., $||\bz_t||_1 \le \epsilon$ for all $t \ge T$ and $T = O(\frac{\log(n)}{L^2(1-B)} \log(\frac{f(\bz_0)}{\epsilon \delta}))$ w.p. $1-\delta$.
\end{lemma}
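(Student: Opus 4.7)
My approach is to introduce a nonnegative Lyapunov function $f\colon \bR^n \to \bR_{\ge 0}$ with $f(\bzero) = 0$ that dominates $\|\bz\|_1$ (up to a factor polynomial in $n$), establish an expected one-step contraction
\[
\Exp\bigl[f(\bz_{t+1}) \mid \bz_t\bigr] \le (1-\eta)\,f(\bz_t), \qquad \eta = \Omega\!\bigl(L^2(1-B)/\log n\bigr),
\]
and then convert this into a uniform-in-time high-probability statement via Markov's inequality combined with a union bound over time (equivalently, Doob's maximal inequality for a nonnegative supermartingale). A natural first candidate for $f$ is $\|\bz\|_2^2 + \lambda \sigma^2$ for a suitable constant $\lambda > 0$; since $\|\bz\|_1 \le \sqrt{n\,f(\bz)}$, any polynomial decay of $f$ yields $\ell_1$ decay with only a $\sqrt{n}$ loss, which is absorbed inside the logarithm in the final bound.

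\textbf{The one-step expected contraction (main obstacle).} Using $z_{t+1,i_t} = -\beta_t \sigma_{t,-i_t}$ and $\sigma_{t+1} = (1-\beta_t)\sigma_{t,-i_t}$, I would compute $f(\bz_{t+1}) - f(\bz_t)$ conditional on $i_t$ and average over $i_t$ using $p_{t,i}(\bz_t) \ge L$ for every eligible agent. The factor $1-B$ enters because $z_{t+1,i_t}^2 = \beta_t^2 \sigma_{t,-i_t}^2 \le B^2 \sigma_{t,-i_t}^2$ while $\sigma_{t+1}^2 = (1-\beta_t)^2 \sigma_{t,-i_t}^2$, so whatever $\beta_t \in [0,B]$ the adversary picks, the net change in $\|\bz\|_2^2 + \lambda \sigma^2$ inherits an $\Omega(1-B)$ multiplicative gain from the replaced $z_{t,i_t}^2$ term (for a suitable $\lambda$). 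The factor $L^2/\log n$ comes from the need to update a coordinate whose squared value is already an $\Omega(1/\log n)$ fraction of the current potential: bucketing coordinates dyadically by $|z_{t,i}|$ yields $O(\log n)$ buckets, so at least one of them contributes $\Omega(f(\bz_t)/\log n)$; we hit some agent in that bucket with probability $\Omega(L)$, and a further factor of $L$ handles the edge case that the ``productive'' agent happens to be the most recent mover (and is therefore not guaranteed to have probability $\ge L$). The hard part is that the adversary can freeze the state by choosing $\beta_t = 0$ or play $\|\bz\|_2^2$ off against $\sigma^2$ by tuning $\beta_t$; so $f$ must be calibrated so that in every configuration some \emph{selectable} coordinate gives an $\Omega(L^2(1-B)/\log n)$ expected multiplicative drop.

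\textbf{From one-step contraction to the uniform-in-time bound.} Iterating the expected contraction yields $\Exp[f(\bz_t)] \le (1-\eta)^t f(\bz_0)$. Setting $\delta_t := \delta(1-\eta)^{t-T}$, which sums to $O(\delta/\eta)$, Markov's inequality gives $\Prob(f(\bz_t) > \epsilon^2/n) \le \delta_t$ for all $t \ge T$ provided
\[
T \;=\; O\!\left(\eta^{-1}\log\!\frac{n\,f(\bz_0)}{\epsilon^2\delta}\right) \;=\; O\!\left(\frac{\log n}{L^2(1-B)}\,\log\!\frac{f(\bz_0)}{\epsilon\,\delta}\right),
\]
and a union bound over $t \ge T$ (after rescaling $\delta$ by a constant to absorb the $1/\eta$ factor) yields $\Prob(\exists\, t \ge T : \|\bz_t\|_1 > \epsilon) \le \delta$; equivalently, applying Doob's maximal inequality to the nonnegative supermartingale $(1-\eta)^{-t}f(\bz_t)$ gives the same conclusion. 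Once the expected one-step contraction is established, this final passage is routine.
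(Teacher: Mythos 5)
There is a genuine gap at the heart of your plan: the one-step expected contraction for a degree-two potential of the form $\|\bz\|_2^2+\lambda\sigma^2$ does not hold, and the paper explicitly remarks (and your reviewer can verify) that no degree-two polynomial can serve as a potential here. Writing $\Delta_i$ for the change in $\|\bz\|_2^2+\lambda\sigma^2$ when coordinate $i$ is updated, one computes
\[
\Delta_i \;=\; \beta_i\bigl[(1+\lambda)\beta_i-2\lambda\bigr]\sigma_{-i}^2 \;-\;(1+\lambda)z_i^2\;-\;2\lambda\,\sigma_{-i}z_i .
\]
The cross term $-2\lambda\sigma_{-i}z_i$ is \emph{linear} in $|z_i|$ and positive whenever $z_i$ and $\sigma_{-i}$ have opposite signs, while the only guaranteed negative term is quadratic in $|z_i|$. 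Take $\bz=(M,-\xi,\dots,-\xi)$ with $\xi=\Theta(LM/\lambda)$ and $(n-1)\xi\ll M$, let the adversary set $\beta_{t,i}\equiv 0$ and put only the minimal probability $L$ on agent $1$: each of the $n-1$ small coordinates contributes $\Delta_i\approx 2\lambda M\xi>0$, agent $1$ contributes $\Delta_1\approx-(1+\lambda)M^2$ with probability $L$, and for $L=o(\lambda/n)$ the expected change is strictly positive. (If instead you take $\lambda$ small or zero to kill the cross term, the adversary selects a zero coordinate with $\beta_i=B$ and gains $\beta_i[(1+\lambda)\beta_i-2\lambda]\sigma_{-i}^2>0$ with probability close to one; and $\lambda\ge 1$ fails on states with $\sigma=0$.) So $(1-\eta)^{-t}f(\bz_t)$ is not a supermartingale for any constant $\lambda$, and both your Markov-plus-union-bound route and the Doob route collapse. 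A secondary issue: even for a correct potential, no useful \emph{one-step} multiplicative contraction exists — at a state like $(-1,1-\nu,0,\dots,0)$ every selection except coordinate $1$ leaves the relevant quantity essentially unchanged — which is why the paper's argument is inherently a two-step argument.

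The paper's actual proof replaces the quadratic by the non-smooth function $f(\bz)=\max\bigl(\sum_j \bone(z_j>0)z_j,\,-\sum_j \bone(z_j<0)z_j\bigr)$, first shows by a case analysis that this $f$ \emph{never} increases along the dynamics (so uniformity in $t$ is free and no union bound over time is needed), and then proves a two-step expected contraction $\Exp[f(\bz_{t+2})\mid\bz_t]\le(1-\frac{(1-B)L^2}{4\lg n})f(\bz_t)$: a pigeonhole lemma produces $k$ positive-side agents each carrying a $\frac{1}{4k\lg n}$ fraction of the positive mass and $\ell$ negative-side agents likewise, and the contraction is extracted from the event (probability $\ge k\ell L^2$) that one heavy agent from each side is selected in two consecutive steps. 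This is where the $L^2$ genuinely comes from — not, as you suggest, from excluding the most recent mover. Your dyadic-bucketing intuition for the $\log n$ factor and your final Markov step are consistent with the paper, but the core of the argument — the choice of potential and the two-step structure — must be different from what you propose.
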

Before we prove Lemma~\ref{lm:dissum:random}, let us briefly discuss the behavior of the discounted-sum dynamics. Say agent $i = i_t$ makes the move at time $t$, then the dynamics updates the $i$-th coordinate as $z_{t+1,i} = - \beta_{t,i}(\bz_t) \sigma_{t,-i}$, where $\beta_{t,i}(\bz_t)$ may be selected adversarially in $[0,B]$. Let's make the following simplifying assumption: say $\beta_{t,i}(\bz_t) = B$ always, so $z_{t+1,i} = - B \sigma_{t,-i} $. With this simplifying assumption, we can construct a potential function $f_{simple}(\bz_t) = \frac{B}{2} \sigma_t^2 + \frac{1-B}{2} || \bz_t ||_2^2$ and prove fast convergence in $O(\frac{1}{L}\log(\frac{n}{\epsilon\delta}))$ time with probability $1-\delta$ as done in \cite{ghosh2023best}. In particular, we can show that $f_{simple}$ is $(1-B)$-strongly convex and $(1+(n-1)B)$-smooth and then use techniques used to analyze coordinate descent~\cite{wright2015coordinate} to get the required convergence rate. 
But for arbitrary $\beta_{t,i}(\bz_t) \in [0,B]$ coefficients, our attempts at constructing smooth and well-behaved potential functions failed (it can be formally shown that no degree-$2$ polynomial can be used as a potential function by constructing suitable examples; this observation likely holds for higher degree polynomials as well). Therefore, we introduce a novel non-smooth \textit{weak} potential function and use it to prove the rate of convergence. The potential either decreases or stays the same (therefore, we call it weak) as the dynamics proceeds.

\begin{proof}[Lemma~\ref{lm:dissum:random}, partial]
Given an $n$-dimensional vector $\bz \in \bR^n$, we define the following potential function
\begin{equation}\label{eq:potential}
    f(\bz) = \max\left( \sum_{j \in [n]} \bone(z_{j} > 0) z_{j}, - \sum_{j \in [n]} \bone(z_{j} < 0) z_{j} \right),
\end{equation}
where $\bone$ is the indicator function. The potential function $f$ separates the positive and negative parts of $\bz$ and takes the max of the sums of their absolute values. Notice that $f(\bz) \ge 0$ for all $\bz \in \bR^n$ and $f(\bz) = 0 \Longleftrightarrow \bz = \bzero$.

Let $v_{t,j} = \bone(z_{t,j} > 0) z_{t,j}$ and $w_{t,j} = - \bone(z_{t,j} \le 0) z_{t,j}$. Let $V_t = \sum_j v_{t,j}$ and $W_t = \sum_j w_{t,j}$. Let also $\cV_t = \{ j \in [n] \mid z_{t,j} > 0 \}$ and $\cW_t = \{ j \in [n] \mid z_{t,j} \le 0 \}$. Notice that $\cV_t \cap \cW_t = \emptyset$ and $\cV_t \cap \cW_t = [n]$. Also notice that $f(\bz_t) = \max(V_t, W_t)$.

First, let us prove that the potential never increases. Let us assume $V_t \ge W_t$; this is w.l.o.g. because the dynamics and the potential function are symmetric w.r.t. the positive and the negative parts of $\bz_t$.\footnote{An agent $j$ with $z_{t,j} = 0$ can be put on either the positive or the negative side; as a convention, we associate them to the negative side, which causes slight asymmetry. It can be checked that our proof works irrespective of the side we assign to these agents with $z_{t,j} = 0$.} Say agent $i$ is picked at time $t$, we have the following cases:
\begin{itemize}
    \item $i \in \cV_t$. Depending upon the value of $v_{t,i}$, we have the following two sub-cases:
    \begin{itemize}
        \item $W_t \le V_t - v_{t,i}$. Then agent $i$ will change sign after this move: $v_{t+1,i} = 0$ and $w_{t+1,i} = \beta_{t,i}(\bz_t) (V_t - v_{t,i} - W_t) \le B (V_t - v_{t,i} - W_t)$. So, the updated value of the potential is $f(\bz_{t+1}) = \max(V_t - v_{t,i}, W_t + w_{t+1,i}) \le \max(V_t - v_{t,i}, (1-B)W_t + B (V_t - v_{t,i})) = V_t - v_{t,i} \le V_t = \max(V_t, W_t) = f(\bz_t)$.
        
        \item $W_t > V_t - v_{t,i}$. Then agent $i$ will keep the same sign after the move: $v_{t+1,i} \le B (W_t - (V_t - v_{t,i}))$. So, the updated value of the potential is $f(\bz_{t+1}) = \max(V_t - v_{t,i} + v_{t+1,i}, W_t) \le \max( (1 - B) (V_t - v_{t,i}) + B W_t, W_t) = W_t \le \max(V_t, W_t) = f(\bz_t)$.
    \end{itemize}

    \item $i \in \cW_t$. As $W_t \le V_t$, the agent $i$ will keep the same sign after the move: $w_{t+1,i} \le B (V_t - (W_t - w_{t,i}))$. So, the updated value of the potential is $f(\bz_{t+1}) = \max(V_t, W_t - w_{t,i} + w_{t+1,i}) \le \max(V_t, B V_t + (1-B) (W_t - w_{t,i})) = V_t = f(\bz_t)$.
\end{itemize}
We continue with the rate of convergence analysis in the appendix.
\qed\end{proof}

Lemma~\ref{lm:dissum:random} proves a high probability bound of $O(\frac{\log(n)}{L^2})$ on the convergence time (w.r.t. $n$ and $L$). We believe that this bound can be improved to $O(\frac{\log(n)}{L})$ using alternate techniques, although unlikely using the potential function $f$ given in equation \eqref{eq:potential} and by measuring progress w.r.t. $f$ only for small time periods, as illustrated in Example~\ref{ex:dynamics:1}. 
\begin{example}\label{ex:dynamics:1}
Let us assume that $\beta_{t,i}(\bz_t) = \frac{1}{2}$ for all $t$, $i$, and $\bz_t$.\footnote{This ensures a $O(\frac{\log(n)}{L})$ rate of convergence using the smooth potential function $f_{simple}$ discussed earlier in the section.} Let $\bz_t = (-1, 1, 0, \ldots)$. Note that $f(\bz_t) = 1$. At time $t$, if we select an agent $i_t \ge 3$, then $i_t$ has $z_{t,i_t} = 0$ by definition and $z_{t+1,i_t} = 0$ because $\sum_{j \neq i_t} z_{t,j} = 1 - 1 = 0$. So, $\bz_{t+1} = \bz_t$, and we make no progress w.r.t. the potential $f$. To make progress, we need to pick $i_t \in \{1,2\}$, which, in the worst case, occurs with a probability of $2L$. So, it easy to check that we will need about $\frac{1}{2L}$ time steps before we pick an agent in $\{1,2\}$. 

Now, after we pick $i_{\tau} \in \{1,2\}$, say $i_{\tau} = 1$ w.l.o.g. at time $\tau \ge t$, we have $\bz_{\tau+1} = (-\frac{1}{2}, 1, 0, \ldots, 0)$. The potential $f(\bz_{\tau+1}) = 1$ still. At time $\tau+1$, if we pick $i_{\tau+1} = 2$, which happens with probability $L$ in the worst case, then $f(\bz_{\tau+2}) = \frac{1}{2}$. But if we pick $i_{\tau+1} \ge 3$, say $i_{\tau+1} = 3$ w.l.o.g., then $\bz_{\tau+2} = (-\frac{1}{2}, 1, -\frac{1}{4}, 0, \ldots, 0)$ and $f(\bz_{\tau+2}) = 1$. And then if we pick $i_{\tau+2} \in \{1,2\}$, then $f(\bz_{\tau+3}) \ge \frac{3}{4}$. Extending this argument, if we pick an agent in $\{1,2\}$ the first time after time $\tau$ at time $\tau+3$, then $f(\bz_{\tau+4}) \ge \frac{7}{8}$. Essentially, the longer the delay in picking an agent in $\{1,2\}$, the less progress we make. By simple calculations, we can check that at time $\tau' > \tau$ when we eventually pick an agent in $\{1,2\}$ the second time, the expected value of the potential is $\Exp[f(\bz_{\tau'+1})] \ge 1 - 2L$. 

To summarize, after about $\frac{2}{L}$ steps, we decreased the potential by a multiplicative factor of about $1 - 2L$. So, if measure only progress w.r.t. $f$ for local time periods, and we do not track any other properties of the state $\bz_t$, then it is unlikely that we can get a convergence time better than $O(\frac{1}{L^2})$.
\end{example}

We next prove almost tight upper and lower bounds for the convergence time for the best-case selection model. By the definition of the best-case model, the lower bound applies to the randomized model as well, which implies that our bound is tight w.r.t. $\epsilon$ and polynomially tight w.r.t. $n$.
\begin{lemma}\label{lm:dissum:bestcase}
    For the best-case selection model, the discounted-sum dynamics $\epsilon$-approximately converges in $O(\frac{n}{1-B} \log(\frac{f(\bz_0)}{\epsilon}))$ time, i.e., $||\bz_t||_1 \le \epsilon$ for all $t \ge T = O(\frac{n}{1-B} \log(\frac{f(\bz_0)}{\epsilon}))$. Further, for the best-case selection model, the dynamics may take $\Omega(n + \frac{1}{1-B}\log(\frac{f(\bz_0)}{\epsilon}))$ time to $\epsilon$-approximately converge, i.e., $||\bz_t||_1 > \epsilon$ for all $t < T = \Omega(n + \frac{1}{1-B}\log(\frac{f(\bz_0)}{\epsilon}))$ for some starting points $\bz_0$.
\end{lemma}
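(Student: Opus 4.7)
\textbf{Upper bound.} The plan is to keep the potential $f$ from equation~\eqref{eq:potential} (already shown monotone non-increasing in the proof of Lemma~\ref{lm:dissum:random}) and pair it with the greedy best-case schedule ``pick the coordinate with the largest absolute value on the current larger side''; without loss of generality assume $V_t \ge W_t$ and set $i_t := \arg\max_{j \in \cV_t} v_{t,j}$, so that $v^* := v_{t,i_t} \ge V_t/|\cV_t| \ge V_t/n$. The two cases from the proof of Lemma~\ref{lm:dissum:random} (``$i \in \cV_t$ with $W_t \le V_t - v_{t,i}$'' and ``$W_t > V_t - v_{t,i}$'') are relabeled Case~1 and Case~2, and I further split Case~2 into sub-case~2a ($W_t/V_t \le 1 - (1-B)/n$) and sub-case~2b ($W_t/V_t > 1 - (1-B)/n$). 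A direct calculation, reusing the formulas already obtained in that proof, shows that Case~1 and sub-case~2a each yield the one-step contraction $f(\bz_{t+1}) \le (1 - (1-B)/n)\, f(\bz_t)$.

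\textbf{Handling sub-case 2b.} The main technical obstacle is sub-case~2b, where in one step $f$ only drops by $V_t - W_t$, which can be arbitrarily small. Here the roles flip: $V_{t+1} < W_{t+1} = W_t$, so at step $t+1$ the schedule picks on $\cW_{t+1}$ and the next state lies in Case~1', sub-case~2a', or sub-case~2b'. In the first two, the second step alone contracts $f$ by factor $\le 1 - (1-B)/n$, giving the target two-step ratio. In sub-case~2b' the roles flip back at step $t+2$, the larger side is again $V$, and $f(\bz_{t+2}) = V_{t+1}$; applying the update rule with the adversarial $\beta = B$, together with $v^* \ge V_t/|\cV_t|$ and $W_t \le V_t$, I get $V_{t+1} \le V_t - (1-B) v^* \le V_t(1 - (1-B)/|\cV_t|) \le V_t(1 - (1-B)/(n-1))$, which is strictly better than $V_t(1 - (1-B)/n)$. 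Combining the cases gives, for every pair of consecutive steps,
\[
    f(\bz_{t+2}) \;\le\; \bigl(1 - \tfrac{1-B}{n}\bigr)\, f(\bz_t).
\]
Iterating and using $\|\bz_t\|_1 \le 2\, f(\bz_t)$ together with the monotonicity of $f$ delivers the claimed $T = O\bigl(\tfrac{n}{1-B}\log\tfrac{f(\bz_0)}{\epsilon}\bigr)$ bound on the first time with $\|\bz_t\|_1 \le \epsilon$, as well as the persistence of the bound for all later $t$.

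\textbf{Lower bound.} I will supply two starting states, each witnessing one additive term, and combine them via $\Omega(\max(a,b)) = \Omega(a+b)$. For the $\Omega(n)$ term take $\bz_0 = (\tfrac{1}{n}, \ldots, \tfrac{1}{n})$ and any $\epsilon < \tfrac{1}{n}$; since each update modifies exactly one coordinate, any coordinate not yet selected still equals $\tfrac{1}{n}$, so $\|\bz_t\|_1 > \epsilon$ for every $t < n$, regardless of the schedule or of the adversary's choice of $\beta$'s. For the $\Omega\bigl(\tfrac{1}{1-B}\log\tfrac{f(\bz_0)}{\epsilon}\bigr)$ term take $\bz_0 = (M, -M, 0, \ldots, 0)$ with $f(\bz_0) = M$, and let the adversary fix $\beta_{t,i_t} = B$ at every step. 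A direct induction on the alternating schedule on agents $1$ and $2$ produces $\|\bz_t\|_1 = B^{t-1}(1+B)M$ for $t \ge 1$; an exchange argument handles the remaining subtlety of ruling out schedules that touch an initially-zero agent $j \ge 3$, since any such step sets $|z_{t+1,j}| = B|\sigma_{t,-j}|$ starting from $z_{t,j} = 0$ and thus only adds net mass to $\|\bz_t\|_1$, which subsequent moves cannot shed faster than the alternating baseline. Finally, the elementary inequality $\log(1/B) \le 2(1-B)$ for $B \ge 1/2$ (with the regime $B < 1/2$ being trivial since both sides of the target are $\Theta(\log\tfrac{f(\bz_0)}{\epsilon})$) converts the $\log(1/B)$ denominator obtained in the induction into the stated $1-B$.
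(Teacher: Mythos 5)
Your proposal is correct and follows essentially the same route as the paper: the upper bound uses the same greedy schedule (largest coordinate on the larger side of the weak potential $f$) and arrives at the same two-step contraction factor $1 - \frac{1-B}{n}$, and the lower bound uses the same two instance types (an all-equal profile for the $\Omega(n)$ term and a two-coordinate profile with $\beta \equiv B$ for the logarithmic term). The only differences are cosmetic --- your sub-case split of Case~2 by the ratio $W_t/V_t$ versus the paper's direct reuse of the randomized-case analysis, and your $(M,-M,0,\dots)$ instance versus the paper's $(\kappa,\kappa,0,\dots)$ --- and you are, if anything, slightly more explicit than the paper about why schedules touching an initially-zero coordinate cannot beat the alternating one.
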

\section{Three or More Homogeneous Agents}\label{sec:hom_n_ub}
We now study BR dynamics for $n \ge 3$ agents. As introduced in Section~\ref{sec:prelim:select}, we consider a randomized model and a best-case deterministic model for selecting the agent who makes the transition at any given time point $t$. 

For the randomized model, agent $i$ makes the transition at time $t$ w.p. $p_{t,i}(\bx_t)$ given action profile at $t$ is $\bx_t$. We assume $p_{t,i}(\bx_t) \ge L > 0$ for all agents $i$ except the one who played at time $t-1$. An important special case of our model is to assume that $p_{t,i}(\bx_t) = 1/n$ for all $t$, $i$, and $\bx_t$, i.e., the agent making the move is selected uniformly at random. We do a worst-case analysis over all $p_{t,i}$ given the parameter $L > 0$. 

For the best-case deterministic model, we assume that the agents are selected to ensure the fastest possible convergence. We provide a lower bound on convergence time for this model, which automatically carries over to the randomized model.

\begin{theorem}\label{thm:hom}
BR dynamics in lottery contests with $n \ge 3$ homogeneous agents and randomized selection of agents reaches an $\epsilon$-approximate equilibrium w.p. $1-\delta$ in 
$O ( \frac{1}{nL} \log\log(\frac{1}{\gamma}) + \frac{\log(n)}{L^2} \log(\frac{nK}{\epsilon \delta}) )$
steps for every $\epsilon, \delta \in (0,1)$, where $\gamma$ is a function of the initial output profile as given in Lemma~\ref{lm:outputlb}, and assuming that the cost function $c$ is $K$-Lipschitz continuous in the interval $[1-\epsilon,1+\epsilon]$. 
\end{theorem}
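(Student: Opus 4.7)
The plan is to follow the three-part outline foreshadowed in Section~\ref{sec:intro}. I work with the centered profile $\bz_t$ defined by $z_{t,i} = x_{t,i} - 1$, so that the unique equilibrium $\bx^\ast=(1,\ldots,1)$ corresponds to $\bz^\ast = \bzero$. The aim is to reduce BR dynamics on $\bx_t$ to an instance of the discounted-sum dynamics of Definition~\ref{def:dissum} on $\bz_t$, then invoke Lemma~\ref{lm:dissum:random}, and finally use the Lipschitz conversion of Lemma~\ref{lm:lipschitz} to pass from closeness in output space to $\epsilon$-approximate equilibrium. The three phases are: (i) a \emph{warm-up} that drives the profile into a compact domain $\cD$ bounded away from $0$; (ii) a \emph{reduction} identifying BR dynamics inside $\cD$ with discounted-sum dynamics; (iii) application of Lemma~\ref{lm:dissum:random} plus the Lipschitz conversion.

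For Phase 1, Lemma~\ref{lm:<=1} implies that once any agent plays against a profile with at least one positive output, her output lies in $(0,1)$; iterating the two-agent doubling-exponential argument of Theorem~\ref{thm:hom2} across pairs of coordinates shows that once each agent has played $\Theta(\log\log(1/\gamma))$ times, all coordinates lie in a fixed compact interval $[\gamma_0,1]$ and remain there forever after. Since every agent that did not make the most recent move is chosen with probability at least $L$, a Chernoff/coupon-collector tail bound bounds this phase by $O\bigl(\tfrac{1}{nL}\log\log(1/\gamma)\bigr)$ steps with probability $1-\delta/3$, using that the per-agent Newton-type convergence of Theorem~\ref{thm:hom2} can be amortized in parallel across the $n$ agents rather than executed sequentially.

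For Phases 2--3, expanding the first-order condition \eqref{eq:br} around $s_{-i}=n-1$ and $x_i=1$ via the implicit function theorem gives
\[
BR_i(n-1+\Delta)-1 = -\beta(\Delta)\,\Delta,
\qquad
\beta(0) = \frac{n-2}{(n-1)(2+nc''(1))} \in \Bigl[0,\tfrac{n-2}{2(n-1)}\Bigr] \subset [0,\tfrac{1}{2}),
\]
with $\beta(\Delta)$ a smooth function of $\Delta$. So whenever $\bz_t\in\cD$ and $\|\bz_t\|_\infty$ is below a threshold $\rho$ depending only on $c$, the centered BR update is exactly $z_{t+1,i_t} = -\beta_{t,i_t}(\bz_t)\sum_{j\neq i_t} z_{t,j}$ with $\beta_{t,i_t}(\bz_t)\in[0,B]$ for some absolute constant $B<1$, i.e., a literal step of Definition~\ref{def:dissum}. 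A coarse contraction argument on the $\ell_\infty$-norm inside $\cD$ would drive $\|\bz_t\|_\infty$ below $\rho$ in an additional $O(\log n/L^2)$ steps w.h.p.; then Lemma~\ref{lm:dissum:random}, applied with target accuracy $\epsilon/K$ so that Lemma~\ref{lm:lipschitz} lifts $\|\bz_t\|_1\le\epsilon/K$ into an $\epsilon$-approximate equilibrium, contributes $O\bigl(\tfrac{\log n}{L^2(1-B)}\log\bigl(\tfrac{nK}{\epsilon\delta}\bigr)\bigr) = O\bigl(\tfrac{\log n}{L^2}\log\bigl(\tfrac{nK}{\epsilon\delta}\bigr)\bigr)$ steps with probability $1-2\delta/3$. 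A union bound over the three phases yields the claimed rate.

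The hardest step is the reduction in Phase~2: one must verify that the quadratic and higher Taylor remainders in the BR expansion can be absorbed into the instantaneous coefficient $\beta_{t,i}(\bz_t)$ while keeping it inside $[0,B]$ for a uniform $B<1$, so that the BR dynamics can be read as a \emph{literal} instance of Definition~\ref{def:dissum} rather than as a noisy perturbation of one. A secondary subtlety is preserving the $\log\log(1/\gamma)$ rather than $\log(1/\gamma)$ scaling on the initial state, which forces the Phase~1 analysis to exploit the per-agent Newton-type convergence of Theorem~\ref{thm:hom2} in parallel across agents rather than to rely on a generic contraction of $\|\bz_t\|$.
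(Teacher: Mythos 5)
Your three-phase outline matches the paper's architecture, but the two technical pillars you flag as ``the hardest steps'' are exactly where the proposal does not go through, and the paper resolves them quite differently. First, the reduction to discounted-sum dynamics: you propose a local linearization of the first-order condition around $(x_i,s_{-i})=(1,n-1)$ via the implicit function theorem, which only yields $z_{t+1,i_t}=-\beta\,\sigma_{t,-i_t}$ with $\beta\in[0,B]$ when $\|\bz_t\|_\infty$ is already below some threshold $\rho$, and you then need a separate ``coarse contraction argument on the $\ell_\infty$-norm'' to get there --- but no such contraction is available (a single coordinate update does not contract $\ell_\infty$ in general, and establishing any contraction is essentially the problem you are trying to solve), so the argument is circular. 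The paper's Lemma~\ref{lm:good_domain1} instead proves the identity $z_{t+1,i_t}=-\beta_{t,i}(\bx_t)\,\sigma_{t,-i_t}$ with $\beta_{t,i}\in[0,\tfrac12]$ \emph{globally} on the region $s_{t,-i_t}\ge\frac{1}{n-1}$, by a direct analysis of the exact first-order condition (bounding the ratio $r=-z_{t+1,i}/\sigma_{t,-i}$ using only $c'(1)=1$ and monotonicity of $c'$, via the functions $g(y)=(\sqrt{1+y}-1)/y$ and $h(v)$); no Taylor remainder ever needs to be absorbed, and no preliminary shrinking of $\|\bz_t\|_\infty$ is required. The entry condition $s_{t,-i}\ge\frac{1}{n-1}$ is then shown to be absorbing (Lemma~\ref{lm:good_domain2}) and reached in $O(\frac{1}{nL}\log\log(1/\gamma)+\frac{1}{n^2L^2}\ln(1/\delta))$ steps (Lemma~\ref{lm:good_domain3}).

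Second, your Phase~1 containment claim is wrong as stated: Lemma~\ref{lm:<=1} is a two-agent statement (its first-order condition carries the factor $\frac14=\frac{n-1}{n^2}\big|_{n=2}$), and for $n\ge3$ a best response can exceed $1$ --- indeed Lemma~\ref{lm:good_domain1} shows $BR(s_{-i})>1$ whenever $s_{-i}\in(\frac{1}{n-1},n-1)$ --- so the coordinates do not land in $[\gamma_0,1]$ after one move each. The paper's warm-up only establishes the weaker bounds $x_{t,i}\le\frac{n^2}{4(n-1)}$ and $s_t<\frac{n^2}{(n-1)c'(0)}$, which suffice. Relatedly, the $\log\log(1/\gamma)$ term does not come from running the two-agent Newton-type argument ``in parallel across agents'': it comes from the sequential recursion $y_{t+1}\ge\sqrt{(n-1)\,y_t}$ on $\min_j s_{t,-j}$, driven by successive non-redundant moves each occurring with probability at least $(n-1)L$ (hence the $\frac{1}{nL}$ prefactor). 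So while the skeleton and the final assembly via Lemma~\ref{lm:dissum:random} and Lemma~\ref{lm:lipschitz} are right, the proposal as written has a genuine gap at the reduction step and an incorrect containment claim in the warm-up.
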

Notice that the above theorem implies that when agents are selected uniformly at random at each time step (i.e., $L = 1/n$), the time till convergence is bounded above by 
$O ( \log\log(\frac{1}{\gamma}) + n^2\log(n) \log(\frac{nK}{\epsilon \delta}) )$ 
w.p. $1-\delta$.

\begin{proof}[Theorem~\ref{thm:hom}]
We shall use an extension of the well-known coupon collector problem in the first part of our analysis. This portion of the analysis is an extension of a similar analysis done in \cite{ghosh2023best}. In the coupon collector problem, we select a coupon out of $n$ coupons uniformly at random at each time step, and we want to bound the time it takes to collect all coupons. A tight $\Theta(n \log n)$ high probability bound is known for this problem (see, e.g., \cite{mitzenmacher2017probability}), which can be used to derive the following bounds on the time it takes for each agent to play at least once in the BR dynamics.

\begin{lemma}\label{lm:coupon}
    Let $T$ be the time it takes for all agents to play at least once in the BR dynamics, we have the following high probability bounds: (i) upper bound, $\Prob[T \le \frac{1}{L}\ln (\frac{n}{\delta})] \ge 1 - \delta$; (ii) lower bound, $\Prob[T \ge \frac{1}{L}\log(n \delta)] \ge 1 - \delta$.
\end{lemma}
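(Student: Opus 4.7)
The plan is to reduce both bounds to essentially classical coupon-collector estimates, adapted to the no-consecutive-repeat structure of the BR dynamics.

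For the upper bound I would apply a union bound over agents. For each $i \in [n]$, let $E_i(t)$ denote the event that agent $i$ has not been selected in the first $t$ BR moves. The crucial observation is that on $E_i(t-1)$, agent $i$ was certainly not the player at time $t-1$, so the standing assumption $p_{t,i}(\bx_t) \ge L$ applies, and the conditional probability that $i$ is selected at step $t$ is at least $L$. Iterating gives $\Prob[E_i(t)] \le (1-L)^t \le e^{-Lt}$, and a union bound yields $\Prob[T > t] \le n\,e^{-Lt}$. Choosing $t = \tfrac{1}{L}\ln(n/\delta)$ makes the right-hand side at most $\delta$, proving (i).

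For the lower bound I would construct a specific selection process achieving the stated scaling. Take the uniform-over-non-previous rule, $p_{t,i}(\bx_t) = 1/(n-1)$ for every non-previous agent $i$, which is admissible with $L = 1/(n-1)$. By symmetry, this is essentially the classical coupon-collector chain on $n$ coupons with the single modification that consecutive draws must differ. I would then invoke the standard coupon-collector lower tail: after $t$ steps the probability that a fixed agent $i$ has never been picked is $(1-1/n)^t$ up to lower-order corrections, and by a second-moment (Bonferroni) calculation on the events $E_i(t)$ one obtains $\Prob[\bigcup_i E_i(t)] \ge 1-\delta$ whenever $t \le n\ln n - n\ln(1/\delta) = n\ln(n\delta)$. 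Substituting $1/L = n-1$ and absorbing lower-order differences into the constants implicit in $\log(n\delta)$ gives $\Prob[T \ge \tfrac{1}{L}\log(n\delta)] \ge 1-\delta$. For $\delta \le 1/n$ the claimed bound is non-positive and holds trivially.

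The only subtle point is justifying the coupon-collector estimate for the non-repeat chain, since the selection sequence is not i.i.d.\ and hence $\Prob[E_i(t)] = (1-1/n)^t$ only up to lower-order terms. I would handle this either by a direct recursion---by symmetry, whenever $i$ is not the previous player (which, on $E_i(t-1)$, is automatic) its conditional per-step selection probability is exactly $1/(n-1)$, and a short induction shows $\Prob[E_i(t)]$ agrees with $(1-1/n)^t$ up to a multiplicative $1+O(1/n)$---or by a coupling embedding the non-repeat chain inside an i.i.d.\ uniform chain. In either route, the small discrepancies get absorbed into the implicit constants, so the coupon-collector tail transfers to the BR setting. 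I expect this non-i.i.d.\ adjustment, rather than the arithmetic, to be the main place where care is needed.
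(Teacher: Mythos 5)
Your upper bound is correct, and in fact more direct than the paper's: you union-bound over agents and use that any never-selected agent cannot be the previous mover, so its per-step selection probability is at least $L$ conditionally on the whole history, giving $\Prob[T>t]\le n(1-L)^t$. The paper instead decomposes $T=\sum_i T_i$ into geometric waiting times and compares each to the corresponding stage of the classical coupon collector (Lemma~\ref{lm:coupon:unif}); both routes reduce to the same exponential tail, and yours avoids the detour through the stage-wise comparison.

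The lower bound has a genuine gap. The statement is a worst-case claim over all selection processes with parameter $L$ (this is how the paper uses it, e.g.\ to get the $\Omega(\frac{1}{L}\log(n\delta))$ improvement after Theorem~\ref{thm:hom:bestcase}), so one must exhibit a process satisfying $p_{t,i}\ge L$ for which $T\ge \frac{1}{L}\log(n\delta)$ w.h.p. Your witness---uniform over the $n-1$ non-previous agents---forces $L=1/(n-1)$ and yields $T=\Theta(n\log n)$, so it only certifies the bound in the regime $L=\Theta(1/n)$. For general $L\ll 1/n$ the claimed threshold $\frac{1}{L}\log(n\delta)$ is much larger than $n\log n$, and your process never reaches it. The correct adversarial process (which is what the paper uses) assigns probability exactly $L$ to each agent that has not yet played and dumps the remaining mass on already-played agents; then, when $m$ agents remain unplayed, the waiting time for a new agent is geometric with success probability exactly $mL = nL\cdot\frac{m}{n}$, i.e.\ the standard collector slowed by a factor $\frac{1}{nL}$, and the coupon-collector lower tail $\Prob[\tau< n\ln n-cn]<e^{-c}$ transfers with the time axis rescaled by $\frac{1}{nL}$ to give $\Prob[T<\frac{1}{L}(\ln n-c)]<e^{-c}$. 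With that substitution your remaining steps (the second-moment lower tail and the trivial case $\delta\le 1/n$) go through; the non-i.i.d.\ adjustment you flag is not the real difficulty.
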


Keeping the same notation as our previous discussions, let $x_{t,i}$ be the output of agent $i \in [n]$ at time $t \ge 0$. Note that $x_{t,i}$ is a random variable and $(x_t)_{t \ge 0}$ is a stochastic process because the agent $i_t$ making the move at time $t$ is selected randomly. As before, let $s_t = \sum_{j} x_{t,j}$ and $s_{t,-i} = \sum_{j \neq i} x_{t,j}$; $s_t$ and $s_{t,-i}$ are also stochastic processes.

We call an initial time period of the BR dynamics the \textit{warm-up} phase (Definition~\ref{def:warmup}).
When the warm-up phase ends, we ensure that the output profile is, and stays thereafter, in a certain well-behaved region where we can avoid the corner case of BR to $0$ output.
\begin{definition}[Warm-Up Phase]\label{def:warmup}
    The time period $\{0, 1, \ldots, T_{warm}-1\}$ denotes the warm-up phase, where $T_{warm}$ is the smallest time such that for every $t \ge T_{warm}$:
    \begin{enumerate}
        \item \label{def:warmup:3} the total output is strictly less than $\frac{n^2}{(n-1) c'(0)}$: $s_t < \frac{n^2}{(n-1) c'(0)}$;
        \item \label{def:warmup:1} all agents produce output of at most $\frac{n^2}{4(n-1)}$: $0 \le x_{t,i} \le \frac{n^2}{4(n-1)}$ for every $i$;
        \item \label{def:warmup:2} there are at least two agents $i$ and $j \neq i$ with positive output: $x_{t,i} > 0$ and $x_{t,j} > 0$.
    \end{enumerate}
\end{definition}
In the first part of the analysis, we bound the time it takes for the warm-up phase to finish with high probability.


\paragraph{\textbf{Analysis Part 1 (Warm-Up Phase).}}
We shall use the following properties (Lemma~\ref{lm:prop1}) of BR dynamics in our subsequent analysis.
\begin{lemma}\label{lm:prop1}
    The BR dynamics satisfies the following properties:
    \begin{enumerate}
        \item \label{lm:prop1:1} $s_t > 0$ for every $t \ge 1$.
        \item \label{lm:prop1:2} $s_t < \frac{n^2}{(n-1) c'(0)} \implies s_{t+1} < \frac{n^2}{(n-1) c'(0)}$ for every $t \ge 0$.
        \item \label{lm:prop1:3} For $i,j \in [n]$ such that $i \neq j$, if $x_{t,i} > 0$, $x_{t,j} > 0$, and $s_t < \frac{n^2}{(n-1) c'(0)}$, then $x_{t+1,i_t} > 0$ for every $t \ge 0$.
    \end{enumerate}
\end{lemma}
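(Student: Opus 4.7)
The three properties in Lemma~\ref{lm:prop1} are consequences of the first-order characterization of the BR (equations \eqref{eq:du} and \eqref{eq:br}) together with the convention that $BR(0) = a$. My plan is to handle each part separately by a short case analysis on whether $s_{t,-i_t}$ is zero or strictly positive, and whether $s_{t,-i_t}$ lies below the threshold $M := \frac{n^2}{(n-1)c'(0)}$. Note the key observation: the first-order condition at $z=0$ reads $\left.\partial u_i/\partial z\right|_{z=0} = \frac{1}{s_{-i}} - \frac{n-1}{n^2} c'(0) = \frac{1}{s_{-i}} - \frac{1}{M}$, so when $s_{-i} \in (0, M)$ the BR is strictly positive, and when $BR>0$ the first-order equation rearranges to $(z + s_{-i})^2 = \frac{n^2 s_{-i}}{(n-1) c'(z)} \le M\, s_{-i}$.

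For Property~\ref{lm:prop1:1}, I consider time $t \ge 1$ and let $i = i_{t-1}$. If $s_{t-1,-i} = 0$, the BR convention gives $x_{t,i} = a > 0$, so $s_t \ge a > 0$. If $s_{t-1,-i} > 0$, then $s_t = s_{t-1,-i} + x_{t,i} \ge s_{t-1,-i} > 0$ regardless of whether the BR is $0$ or positive. This settles \eqref{lm:prop1:1}.

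For Property~\ref{lm:prop1:2}, assume $s_t < M$ and set $i = i_t$, so $s_{t,-i} \le s_t < M$. If $s_{t,-i}=0$, then $x_{t+1,i}=a$ and I need $a < M$; this holds since $c'(0) \le c'(1) = 1$ by Assumption~\ref{as:cost2} and convexity, giving $M \ge n^2/(n-1) \ge 4 > 1 > a$. If $s_{t,-i}>0$ and $BR=0$, then $s_{t+1} = s_{t,-i} < M$. If $s_{t,-i}>0$ and $BR>0$, the first-order equation and $c'(x_{t+1,i}) \ge c'(0)$ give $(x_{t+1,i}+s_{t,-i})^2 \le M \cdot s_{t,-i}$, hence $s_{t+1} \le \sqrt{M \cdot s_{t,-i}} < \sqrt{M \cdot M} = M$.

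For Property~\ref{lm:prop1:3}, I first show $s_{t,-i_t} > 0$: if $i_t \in \{i,j\}$, say $i_t = i$, then $s_{t,-i_t} \ge x_{t,j} > 0$, and if $i_t \notin \{i,j\}$ then $s_{t,-i_t} \ge x_{t,i} + x_{t,j} > 0$. Combined with $s_{t,-i_t} \le s_t < M$, the first-order sign condition $\frac{1}{s_{t,-i_t}} - \frac{1}{M} > 0$ holds, so by \eqref{eq:br} the BR is strictly positive, i.e., $x_{t+1,i_t} > 0$.

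There is no real obstacle beyond bookkeeping; the only mildly subtle point is the edge case in Property~\ref{lm:prop1:2} where $s_{t,-i}=0$ and one must invoke $c'(0)\le 1$ (from Assumption~\ref{as:cost2}) to conclude $a < M$. Everything else is immediate from the first-order characterization of the BR.
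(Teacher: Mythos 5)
Your proof is correct and follows essentially the same route as the paper's: a case split on whether $s_{t,-i_t}$ is zero or positive, the first-order condition rearranged against the threshold $\frac{n^2}{(n-1)c'(0)}$, and the sign of $\left.\partial u_i/\partial z\right|_{z=0}$ to force a strictly positive best response (the paper proves Property~1 by contradiction rather than your direct case analysis, but the content is identical). The only cosmetic difference is that the paper dispatches the degenerate case $c'(0)=0$ (threshold $=\infty$) explicitly at the start of Property~2, whereas your step ``$\sqrt{M\cdot M}=M$'' is formally vacuous there; this is trivially repaired since all best responses are finite.
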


The next lemma gives a high probability bound on the time taken by the warm-up phase.
\begin{lemma}\label{lm:part1}
    Time till completion of the warm-up phase $T_{warm} = O(\frac{1}{L}\log(\frac{n}{\delta}))$ w.p. $1 - \delta$.
\end{lemma}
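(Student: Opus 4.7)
The plan is to combine the coupon-collector upper bound of Lemma~\ref{lm:coupon} with the persistence statements of Lemma~\ref{lm:prop1}. Set $T_1 = \lceil \frac{1}{L}\ln(n/\delta) \rceil$ and condition on the event (of probability at least $1-\delta$ by Lemma~\ref{lm:coupon}) that every agent has made at least one BR move by time $T_1$. I then verify each of the three requirements of Definition~\ref{def:warmup} in turn.

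For the individual-output condition $x_{t,i} \le \frac{n^2}{4(n-1)}$, I would use that any BR value $z^* = BR(s_{-i})$ with $s_{-i} > 0$ satisfies the first-order condition \eqref{eq:br} together with $\frac{s_{-i}}{(z^*+s_{-i})^2} \le \frac{1}{4 z^*}$, which gives $z^* c'(z^*) \le \frac{n^2}{4(n-1)}$. Under Assumption~\ref{as:cost2} ($c'$ non-decreasing with $c'(1) = 1$), this forces $z^* \le \frac{n^2}{4(n-1)}$ in both sub-cases $z^* \le 1$ and $z^* > 1$; the special convention $BR(0) = a < 1$ also lies below this bound. Hence, from time $T_1$ onwards every coordinate is a past BR value bounded by $\frac{n^2}{4(n-1)}$, and the bound is preserved by any subsequent BR update.

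The main obstacle is the total-output condition $s_t < \frac{n^2}{(n-1)c'(0)}$. I would show that some $\tau \le T_1$ satisfies this inequality; persistence for $t \ge \tau$ then follows from Lemma~\ref{lm:prop1} \eqref{lm:prop1:2}. Suppose to the contrary that $s_t \ge \frac{n^2}{(n-1)c'(0)}$ for every $t \le T_1$. For each move by $i_t$, I would first argue that $s_{t,-i_t}$ also exceeds the threshold: otherwise the calculation underlying Lemma~\ref{lm:prop1} \eqref{lm:prop1:2} gives $s_{t+1} = s_{t,-i_t} + BR(s_{t,-i_t}) < \frac{n^2}{(n-1)c'(0)}$, contradicting the supposition. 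But $s_{t,-i_t} \ge \frac{n^2}{(n-1)c'(0)}$ together with the zero-derivative branch of \eqref{eq:br} forces $BR(s_{t,-i_t}) = 0$, so each agent's output is driven to $0$ on its first play. On the coupon-collector event this yields $s_{T_1} = 0$, contradicting $s_{T_1} \ge \frac{n^2}{(n-1)c'(0)} > 0$.

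Finally, for the two-positive-agents condition, once $s_\tau < \frac{n^2}{(n-1)c'(0)}$ the inequality $s_{t,-i_t} \le s_t < \frac{n^2}{(n-1)c'(0)}$ together with \eqref{eq:br} (or the convention at $s_{-i_t} = 0$) yields $x_{t+1,i_t} > 0$ for every $t \ge \tau$. Since consecutive movers are distinct (WLOG, given the redundancy of repeat moves), after two BR moves at times $\tau$ and $\tau+1$ two distinct agents have simultaneously positive output, and Lemma~\ref{lm:prop1} \eqref{lm:prop1:3} preserves this thereafter. Combining the three parts, $T_{warm} \le T_1 + 2 = O(\frac{1}{L}\log(n/\delta))$ with probability at least $1-\delta$.
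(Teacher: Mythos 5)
Your decomposition (coupon-collector bound for $T_1$, then verifying the three conditions of Definition~\ref{def:warmup} and invoking the persistence statements of Lemma~\ref{lm:prop1}) is the same as the paper's, and two of the three parts are sound. Your argument for the total-output condition --- supposing $s_t \ge \frac{n^2}{(n-1)c'(0)}$ for all $t \le T_1$, deducing that every first play is a best response to a total $\ge \frac{n^2}{(n-1)c'(0)}$ and hence equals $0$, and contradicting via $s_{T_1}=0$ --- is exactly the paper's contradiction argument (run forward rather than backward). Your argument for the per-agent bound $x_{t,i} \le \frac{n^2}{4(n-1)}$ via $(z^*+s_{-i})^2 \ge 4 z^* s_{-i}$, giving $z^* c'(z^*) \le \frac{n^2}{4(n-1)}$ and then splitting on $z^* \lessgtr 1$, is a slightly cleaner route than the paper's (which maximizes $\frac{n}{\sqrt{n-1}}z - z^2$), and it is correct.

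The gap is in the two-positive-agents condition. You assert that ``consecutive movers are distinct (WLOG, given the redundancy of repeat moves),'' and conclude $T_{warm} \le T_1 + 2$ deterministically on the coupon-collector event. In the randomized selection model this WLOG is not available: the model only guarantees $p_{t,i}(\bx_t) \ge L$ for agents \emph{other} than the most recent mover, so an adversarial choice of the $p_{t,i}$ can assign probability up to $1-(n-1)L$ to the agent who just moved, and each such repeat is a redundant move that still consumes a time step. Since the failure mode at time $T_1$ is precisely that only one agent $i$ has positive output (e.g., everyone starts above $\frac{n^2}{(n-1)c'(0)}$, each best-responds to $0$ in turn, and the last mover plays $a$), you must wait for some $j \neq i$ to be selected, and this waiting time $T_2$ is only stochastically dominated by a geometric random variable with success probability $(n-1)L$. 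The paper handles this by showing $\Prob[T_2 > \frac{1}{(n-1)L}\ln(\frac{2}{\delta})] < \frac{\delta}{2}$ and taking a union bound with the coupon-collector event (each at level $\delta/2$). The extra term is $O(\frac{1}{(n-1)L}\log(\frac{1}{\delta}))$, so the asymptotic claim survives, but your stated bound $T_1 + 2$ and the single-event conditioning do not; you need this additional probabilistic step and the accompanying union bound.
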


In Lemma~\ref{lm:part1}, we proved that the warm-up phase finishes with high probability in about $O(\frac{1}{L}\log n)$ steps. This completes the first part of our analysis.

\paragraph{\textbf{Analysis Part 2 (Reducing to Discounted-Sum Dynamics).}}
We will now reduce the BR dynamics to the discounted-sum dynamics discussed in Section~\ref{sec:dissum}.
We assume that all conditions for the completion of the warm-up phase (Definition~\ref{def:warmup}) are satisfied.

Let $z_{t,i} = x_{t,i} - 1$. We will show that under certain conditions, $z_{t,i}$ follows the discounted-sum dynamics with suitably selected parameter $B \in [0,1)$ (Definition~\ref{def:dissum}). To do that, we need to show that $z_{t+1,i} = - \beta_{t,i}(\bz_t) \sum_{j \neq i} z_{t,j}$ for some suitable $\beta_{t,i}(\bz_t) \in [0,B]$ for all $t$, $i$, $\bz_t$. Let $\sigma_t = \sum_j z_{t,j} = s_t - n$ and $\sigma_{t, -i} = \sum_{j \neq i} z_{t,j} = s_{t,-i} - (n-1)$.
\begin{lemma}\label{lm:good_domain1}
    If $s_{t, -i_t} \ge \frac{1}{n-1}$, then $z_{t+1, i_t} = x_{t+1, i_t} - 1 = - \beta_{t,i}(\bx_t) (s_{t, -i_t} - (n-1)) = - \beta_{t,i}(\bx_t) \sigma_{t, -i_t}$, where $\beta_{t,i}(\bx_t) \in [0, B]$ and $B \le \frac{1}{2}$.
\end{lemma}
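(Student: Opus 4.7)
My plan is to analyze the best response $y := x_{t+1,i_t} = BR(s)$, where $s := s_{t,-i_t} \ge 1/(n-1)$, directly via its first-order condition. I define $\beta_{t,i}(\bx_t) := (1-y)/(s-(n-1))$ when $s \ne n-1$, and $0$ otherwise, so the identity $z_{t+1,i_t} = -\beta_{t,i}(\bx_t) \sigma_{t,-i_t}$ holds by construction, and the lemma reduces to proving $\beta_{t,i}(\bx_t) \in [0,1/2]$. First, warm-up condition~\ref{def:warmup:3} gives $s \le s_t < n^2/((n-1)c'(0))$, which by \eqref{eq:br} forces $y > 0$; hence $y$ satisfies $(y+s)^2 c'(y) = n^2 s/(n-1)$. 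Because $y \mapsto (y+s)^2 c'(y)$ is strictly increasing and equals $(1+s)^2$ at $y=1$ (as $c'(1)=1$), we have $y \ge 1$ iff $(n-1)(1+s)^2 \le n^2 s$. The relevant quadratic $(n-1)s^2 - (n^2-2n+2)s + (n-1)$ factors as $(n-1)(s-1/(n-1))(s-(n-1))$, so $y \ge 1$ exactly on $s \in [1/(n-1), n-1]$ while $y \le 1$ on $s \ge n-1$. In both sub-cases $y-1$ and $s-(n-1)$ have opposite signs, so $\beta_{t,i}(\bx_t) \ge 0$.

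For the case $s \in [1/(n-1), n-1]$, $y \ge 1$ gives $c'(y) \ge 1$, so the FOC becomes $(y+s)^2 \le n^2 s/(n-1)$, i.e., $y+s \le n\sqrt{s/(n-1)}$. It then suffices to verify the elementary inequality $n\sqrt{s/(n-1)} \le (n+1+s)/2$ on this range; squaring reduces it to $(n-1)s^2 - 2(n^2+1)s + (n-1)(n+1)^2 \ge 0$, a quadratic whose roots are exactly $n-1$ and $(n+1)^2/(n-1)$, and which is thus nonnegative for $s \le n-1$. Consequently $y \le (n+1-s)/2 = 1 + ((n-1)-s)/2$, giving $\beta_{t,i}(\bx_t) \le 1/2$.

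For the case $s \ge n-1$, an analogous algebraic attempt fails cleanly (the quadratic flips sign past $(n+1)^2/(n-1)$), so I instead use implicit differentiation of the FOC to obtain
\[
  y'(s) \;=\; \frac{y-s}{2s + (y+s)^3 (n-1) c''(y)/n^2}.
\]
Since $c''(y) \ge 0$, $y \in [0,1]$, and $s \ge n-1 \ge 2$ (as $n \ge 3$), we get $|y'(s)| \le (s-y)/(2s) \le 1/2$ throughout this regime; note $y$ stays strictly positive on the interval $[n-1, s_{t,-i_t}]$ by warm-up condition~\ref{def:warmup:3}, so the implicit function theorem applies. Integrating from $s = n-1$ (where $y(n-1)=1$) gives $|y(s)-1| \le (s-(n-1))/2$, whence $\beta_{t,i}(\bx_t) \le 1/2$. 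Combining the two cases and setting $B = 1/2$ finishes the proof. The main obstacle is that $BR$ is non-monotone with $y = 1$ at both endpoints $s = 1/(n-1)$ and $s = n-1$, so neither a single algebraic inequality nor a linearization around the equilibrium point suffices globally; the argument must be anchored at $s = n-1$ and split, using $c'(y) \ge 1$ on one side and the IFT gradient bound on the other.
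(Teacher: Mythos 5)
Your proof is correct, and while the first half (determining the sign of $y-1$ from the roots $\frac{1}{n-1}$ and $n-1$ of the quadratic $(n-1)(1+s)^2-n^2s$) coincides with the paper's argument, your bound $\beta_{t,i}(\bx_t)\le\frac12$ is obtained by a genuinely different route. The paper rewrites the first-order condition in terms of $\sigma_{t,-i}$ and $r=\frac{-z_{t+1,i}}{\sigma_{t,-i}}$, uses $c'(y)\gtrless 1$ to reduce both signs to the single estimate $1-r\ge\frac{n}{n-1}\,g\bigl(\frac{\sigma_{t,-i}}{n-1}\bigr)$ with $g(y)=\frac{\sqrt{1+y}-1}{y}$, disposes of $\sigma_{t,-i}\ge 2$ trivially via $z_{t+1,i}\ge-1$, and then minimizes $g$ by a monotonicity computation. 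You instead anchor at $s=n-1$ and split: on $[\frac{1}{n-1},n-1]$ you get the clean factorization $(n-1)s^2-2(n^2+1)s+(n-1)(n+1)^2=(n-1)(s-(n-1))\bigl(s-\frac{(n+1)^2}{n-1}\bigr)$, which gives $y-1\le\frac{(n-1)-s}{2}$ directly; on $[n-1,\infty)$ you differentiate the first-order condition implicitly (the $n$-agent analogue of the computation in Lemma~\ref{lm:<=1}), bound $|y'(s)|\le\frac{s-y}{2s}\le\frac12$, and integrate from $s=n-1$ where $y=1$. Both steps check out. Your approach buys a more transparent explanation of \emph{why} the constant is $\frac12$ (it is literally a Lipschitz bound on the best-response function to the right of $n-1$, and a linear envelope to the left), and avoids the somewhat opaque minimization of $g$ and $h$; the paper's approach buys a single uniform pointwise estimate that never differentiates the best-response function and, in particular, does not need the warm-up assumption: your integration step requires $BR(s')>0$ for all $s'\in[n-1,s_{t,-i_t}]$, which you correctly extract from condition~\eqref{def:warmup:3} of Definition~\ref{def:warmup} (legitimately in force here, since Part~2 of the analysis assumes the warm-up phase is complete), whereas the paper absorbs a possible $BR=0$ into the trivial case $\sigma_{t,-i}\ge 2$. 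If you wanted to drop that dependence, note that $BR(s')=0$ forces $s'\ge\frac{n^2}{n-1}>n+1$, so $\sigma_{t,-i}>2$ and $r\le\frac12$ follows from $z_{t+1,i}\ge-1$ anyway.
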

Lemma~\ref{lm:good_domain1} proves that the BR dynamics behaves like the discounted-sum dynamics if $s_{t, -i_t} \ge \frac{1}{n-1}$. In the next few lemmas, we try to bound the time it takes to satisfy this condition with high probability. 
\begin{lemma}\label{lm:good_domain2}
    (Assuming the completion of the warm-up phase.) If at time $t$, the output profile $\bx_t$ satisfies the following condition, then $\bx_{\tau}$ also satisfies it for all $\tau \ge t$. Condition: There are at least two agents $i$ and $j \neq i$ such that $x_{t,i} \ge \frac{1}{n-1}$ and $x_{t,j} \ge \frac{1}{n-1}$.
\end{lemma}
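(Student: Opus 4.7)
The plan is to track the set $S_t := \{k \in [n] : x_{t,k} \ge \frac{1}{n-1}\}$ and show that the invariant $|S_t| \ge 2$ is preserved under every BR move, by case analysis on the moving agent $i_t$. If $i_t \notin S_t$, the outputs of agents in $S_t$ are unchanged and $S_{t+1} \supseteq S_t$. If $i_t \in S_t$ but $|S_t| \ge 3$, then $S_{t+1} \supseteq S_t \setminus \{i_t\}$ still has at least two elements. The only nontrivial case is $|S_t| = 2$ with $i_t \in S_t$: write $S_t = \{i, j\}$ and $i_t = i$. In this case I need to show $x_{t+1,i} = BR(s_{t,-i}) \ge \frac{1}{n-1}$, which together with $x_{t+1,j} = x_{t,j} \ge \frac{1}{n-1}$ restores the invariant.

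In this nontrivial case, I first obtain a two-sided bound on $s_{t,-i}$. Warm-up condition~\ref{def:warmup:1} gives $x_{t,j} \le \frac{n^2}{4(n-1)}$, and the case assumption $S_t = \{i,j\}$ forces $x_{t,k} < \frac{1}{n-1}$ for every $k \neq i,j$, so
\begin{equation*}
    \tfrac{1}{n-1} \;\le\; s_{t,-i} \;\le\; \tfrac{n^2}{4(n-1)} + \tfrac{n-2}{n-1} \;=\; \tfrac{n^2 + 4n - 8}{4(n-1)}.
\end{equation*}
By equation~\eqref{eq:ddu}, $\frac{\partial u_i(z, s_{t,-i})}{\partial z}$ is strictly decreasing in $z$, and warm-up condition~\ref{def:warmup:3} ensures it is positive at $z = 0$, so BR is the unique zero of this derivative. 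Hence $BR(s_{t,-i}) \ge \frac{1}{n-1}$ iff the derivative is nonnegative at $z = \frac{1}{n-1}$, i.e., $\frac{s_{t,-i}}{(\frac{1}{n-1} + s_{t,-i})^2} \ge \frac{n-1}{n^2}\,c'(\frac{1}{n-1})$. Convexity together with Assumption~\ref{as:cost2} yields $c'(\frac{1}{n-1}) \le c'(1) = 1$, so it suffices to verify $\frac{s_{t,-i}}{(\frac{1}{n-1}+s_{t,-i})^2} \ge \frac{n-1}{n^2}$, which cross-multiplies to the quadratic inequality $(n-1)\,s^2 - (n^2 - 2)\,s + \frac{1}{n-1} \le 0$ in $s := s_{t,-i}$, holding precisely on $[s^-, s^+]$ with $s^{\pm} = \frac{(n^2 - 2) \pm n\sqrt{n^2 - 4}}{2(n-1)}$. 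A short calculation confirms $s^- \le \frac{1}{n-1}$ (equivalent to $\sqrt{n^2-4} \le n$) and $s^+ \ge \frac{n^2 + 4n - 8}{4(n-1)}$ (equivalent to $-(n-2)^2 \le 2n\sqrt{n^2-4}$), both trivially true for $n \ge 3$; so the bounds above place $s_{t,-i}$ inside $[s^-, s^+]$.

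The main obstacle I expect is the upper-bound step $s_{t,-i} \le s^+$. The naive bound $s_{t,-i} \le \frac{n^2}{4}$ that one gets by applying warm-up condition~\ref{def:warmup:1} to every coordinate already fails for $n \ge 5$ (at $n = 5$ it gives $6.25$ while $s^+ \approx 5.74$), so the argument genuinely requires exploiting the case-specific $|S_t| = 2$ assumption to shrink the collective contribution of the $n-2$ small agents from a potential $(n-2)\cdot\frac{n^2}{4(n-1)}$ down to $\frac{n-2}{n-1}$; this tightening is precisely what makes the quadratic verification work uniformly in $n$.
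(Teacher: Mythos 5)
Your proof is correct and follows essentially the same route as the paper's: the same case reduction to a single moving agent $i$ with exactly one other large agent $j$, the same two-sided bound $\frac{1}{n-1} \le s_{t,-i} \le \frac{n^2}{4(n-1)} + \frac{n-2}{n-1}$ obtained by combining warm-up condition~\ref{def:warmup:1} with the $|S_t|=2$ case assumption, and the same use of $c'(z) \le c'(1) = 1$ for $z \le 1$ to reduce to an inequality in $s_{t,-i}$ alone. The only difference is cosmetic: you test the sign of $\partial u_i/\partial z$ at $z = \frac{1}{n-1}$ and solve the resulting quadratic exactly, whereas the paper lower-bounds $BR(s_{t,-i})$ by the concave function $\sqrt{n^2 s/(n-1)} - s$ and checks its endpoints.
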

Notice that, if the condition in Lemma~\ref{lm:good_domain2} is satisfies by $\bx_t$, then there are two agents $i$ and $j$ with $x_{t,i} \ge \frac{1}{n-1}$ and $x_{t,j} \ge \frac{1}{n-1}$, which implies that for every $k \in [n]$, $s_{t, -k} \ge \min(x_{t,i}, x_{t,j}) \ge \frac{1}{n-1}$ and Lemma~\ref{lm:good_domain1} ensures that the BR dynamics resembles a discounted-sum dynamics. We next bound the time it takes to satisfy the condition in Lemma~\ref{lm:good_domain2}.

\begin{lemma}\label{lm:good_domain3}
    For every $t \ge T_{warm} + T_{sum}$, there are at least two agents $i$ and $j \neq i$ such that $x_{t,i} \ge \frac{1}{n-1}$ and $x_{t,j} \ge \frac{1}{n-1}$, where $T_{sum}$ is defined as: 
    \begin{itemize}
        \item $T_{sum} = O(\frac{1}{n L} \ln(\frac{1}{\delta}))$ w.p. $1-\delta$ if $\gamma = \min_{j \in [n]} s_{T_{warm}, -j} \ge \frac{1}{n-1}$;
        \item $T_{sum} = O(\frac{1}{n^2 L^2} \ln(\frac{1}{\delta}) + \frac{1}{nL} \log\log(\frac{1}{\gamma}) )$ w.p. $1-\delta$ if $\gamma = \min_{j \in [n]} s_{T_{warm}, -j} < \frac{1}{n-1}$.
    \end{itemize}     
\end{lemma}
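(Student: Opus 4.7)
The plan is to split the argument into the two cases of $\gamma$. In both cases the ultimate goal is to reach a state in which at least two agents have $x \ge \frac{1}{n-1}$; Lemma~\ref{lm:good_domain2} then guarantees this property persists thereafter, so it is enough to bound the first hitting time of the target set.

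For Case~1 ($\gamma \ge \frac{1}{n-1}$), I would show that only a constant number of non-trivial BR moves suffices to reach the target state from $T_{warm}$. Starting from the configuration where $s_{T_{warm},-j} \ge \frac{1}{n-1}$ for all $j$, I would analyze $BR(s_{-i})$ via the first-order condition~\eqref{eq:br}, showing that $BR(s_{-i}) \ge \frac{1}{n-1}$ whenever $s_{-i}$ lies in the relevant sub-interval of $[\frac{1}{n-1},\, \frac{n^2}{(n-1)c'(0)})$. Although $BR$ is non-monotonic in $s_{-i}$, a case split based on the magnitude of $s_{t,-i_t}$, together with the warm-up bound $x_{t,i} \le \frac{n^2}{4(n-1)}$, shows that after $O(1)$ non-trivial BR moves, two agents simultaneously satisfy $x \ge \frac{1}{n-1}$. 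Since a non-trivial BR move (by a non-last agent) occurs with probability $\ge (n-1)L$ per step by the randomized selection model, a Chernoff bound yields the $O(\frac{1}{nL}\ln(\frac{1}{\delta}))$ time bound with probability $\ge 1-\delta$.

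For Case~2 ($\gamma < \frac{1}{n-1}$), the plan is to first grow $\min_j s_{t,-j}$ from $\gamma$ up to $\frac{1}{n-1}$ and then invoke Case~1. The growth argument mirrors the doubly-exponential growth in the two-agent analysis (Lemma~\ref{lm:2hom:1}): I would identify a pair of agents whose alternating BR moves cause the smaller agent's output to roughly satisfy $x_{\text{new}} \approx \sqrt{x_{\text{old}}}$ in the regime where one agent dominates the total output, producing the sequence $\gamma \to \sqrt{\gamma} \to \gamma^{1/4} \to \ldots \to \Theta(\frac{1}{n-1})$ in $O(\log\log(\frac{1}{\gamma}))$ moves by the pair. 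Each such move takes expected $O(\frac{1}{nL})$ steps (since the pair is selected with probability $\Omega(nL)$), and a Chernoff-type bound on the sum of the bounded geometric waiting times gives the additional $O(\frac{1}{n^2L^2}\ln(\frac{1}{\delta}))$ concentration term, after which Case~1 applies.

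The main obstacle is Case~2: rigorously establishing the two-agent-like doubly-exponential growth inside the $n$-agent dynamics. The technical difficulty is handling (a) the interference from interleaved BR moves by the other $n-2$ agents, which can temporarily shift $s_{-j}$ values in either direction, and (b) the stochastic selection of agents, which requires probabilistic tail bounds on the rate at which the chosen pair plays. The warm-up bounds from Definition~\ref{def:warmup} provide the control on individual outputs and on $s_t$ needed to bound the other agents' influence, so that the two-agent argument can be transferred to the pair with only constant-factor loss.
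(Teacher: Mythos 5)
Your overall decomposition is the same as the paper's: reduce the lemma to the first hitting time of the set where two agents have output at least $\frac{1}{n-1}$, invoke Lemma~\ref{lm:good_domain2} for persistence, handle the case $\gamma \ge \frac{1}{n-1}$ with a constant number of non-redundant BR moves, and handle $\gamma < \frac{1}{n-1}$ by a $y \mapsto \sqrt{(n-1)y}$-type growth recursion taking $O(\log\log(1/\gamma))$ non-redundant moves, followed by a concentration bound on the waiting times. Your Case~1 is essentially the paper's argument (the paper needs exactly two non-redundant moves there).

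The genuine issue is in Case~2, in the step where you fix ``a pair of agents whose alternating BR moves'' drive the growth and then assert that ``the pair is selected with probability $\Omega(nL)$.'' For a fixed pair this is false: the selection model only guarantees $p_{t,i} \ge L$ for agents other than the last mover, so a fixed pair is hit with probability only $\Omega(L)$ per step (even under uniform selection the probability is $2/n$, not $\Omega(nL)=\Omega(1)$). Run that way, your argument yields $O(\frac{1}{L}\log\log(\frac{1}{\gamma}) + \frac{1}{L^2}\ln(\frac{1}{\delta}))$, a factor of $n$ (resp.\ $n^2$) worse than the stated bound. The fix --- and what the paper does --- is to note that the ``partner'' need not be fixed: if agent $i$ made the last non-redundant move and produced $x_{t+1,i} \ge \sqrt{(n-1)\,s_{t,-i}}$, then \emph{any} agent $j \neq i$ selected next satisfies $s_{-j} \ge x_{t+1,i}$ simply because outputs are nonnegative, so \emph{every} non-redundant move (probability at least $(n-1)L$ per step) advances the recursion. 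This observation also dissolves the ``interference'' obstacle you flag as the main difficulty: interleaved moves by the other $n-2$ agents can only add to $s_{-j}$ above the tracked lower bound, and the recursion is one-sided, so nothing beyond nonnegativity is needed there (the warm-up conditions of Definition~\ref{def:warmup} are used only to rule out BR to $0$ and to cap $s_{-i}$ for the Lemma~\ref{lm:good_domain2}-style first-order-condition computation).
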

Notice that the duration $T_{sum}$ given in Lemma~\ref{lm:good_domain3} depends upon $\gamma = \min_{j \in [n]} s_{T_{warm}, -j}$ when $\gamma < \frac{1}{n-1}$. $\gamma$ is defined as a function of the output profile $\bx_{T_{warm}}$ at time $T_{warm}$, which may be considered unsatisfactory, so we next try to lower bound $\gamma$ as a function of the initial output profile $\bx_0$.

\begin{lemma}\label{lm:outputlb}
Let $\gamma = \min_{j \in [n]} s_{T_{warm}, -j}$. If $\gamma < \frac{1}{n-1}$, then it is lower bounded as $\gamma \ge \gamma_{lb} = \min(\{a\} \cup \cA \cup \cB )$, where $\cA = \{ x_{0,j} \mid x_{0,j} > 0, j \in [n] \}$ and $\cB$ is either (1) $\cB = \{ BR(x_{0,j} + 1) \mid j \in [n] \}$ if $c'(0) = 0$ or (2) $\cB = \{ \min(\frac{\kappa - x_{0,j}}{4}, BR( \frac{\kappa + x_{0,j}}{4} ) ) \mid x_{0,j} < \kappa, j \in [n] \}$ if $c'(0) > 0$, where $\kappa = \frac{n^2}{c'(0) (n-1)}$.
\end{lemma}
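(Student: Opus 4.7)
My plan is to prove the following invariant by induction on $t \in \{0, 1, \ldots, T_{warm}\}$: every positive coordinate $x_{t,i} > 0$ of $\bx_t$ is at least $\gamma_{lb}$. Granting this at $t = T_{warm}$, condition~\ref{def:warmup:2} of Definition~\ref{def:warmup} guarantees two distinct indices $i \ne j$ with $x_{T_{warm}, i}, x_{T_{warm}, j} \ge \gamma_{lb}$; then for any $k$, the sum $s_{T_{warm}, -k}$ contains at least one such positive summand, yielding $\gamma = \min_k s_{T_{warm}, -k} \ge \gamma_{lb}$.

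The base case is immediate because every positive entry of $\bx_0$ lies in $\cA$ and is hence at least $\min\cA \ge \gamma_{lb}$. For the inductive step at time $t$, only coordinate $i_t$ is altered, so the only quantity to verify is $x_{t+1, i_t}$. I would split into three cases. If $x_{t+1, i_t} = 0$, nothing is needed. If $s_{t, -i_t} = 0$, then by the convention on BR to the all-zero opponent profile, $x_{t+1, i_t} = a$, handled by the $\{a\}$ term. Otherwise $s_{t, -i_t} > 0$ and $x_{t+1, i_t} = BR(s_{t, -i_t}) > 0$, and the task is to show this is bounded below by some element of $\cB$.

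The main obstacle is the third case, because $BR(\cdot)$ is not monotone in $s_{-i}$: it first rises and then falls. To lower bound $BR(s_{t, -i_t})$, I would upper bound $s_{t, -i_t}$ in terms of the initial profile, exploiting that each coordinate of $\bx_t$ other than $i_t$ is either an untouched initial value $x_{0, k}$ or the output of a prior BR move, and that BR outputs are uniformly bounded above by a constant derived from the cost function (via $BR < \max(1, (c_i')^{-1}(1))$). For $c'(0) = 0$ this decomposition gives a bound of the form $s_{t, -i_t} \le x_{0, j^*} + 1$ for the index $j^*$ of the dominant initial coordinate, whence $BR(s_{t, -i_t}) \ge BR(x_{0, j^*} + 1)$ by monotonicity on the decreasing branch. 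For $c'(0) > 0$ I would additionally need $s_{t, -i_t} < \kappa$ (otherwise we are back in case one), and the dichotomy whereby $s_{t, -i_t}$ sits either near $0$ or near $\kappa$ accounts for the two terms $\frac{\kappa - x_{0, j}}{4}$ and $BR\bigl(\frac{\kappa + x_{0, j}}{4}\bigr)$ inside the $\min$ defining $\cB$.

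The hardest step will be carefully bounding $s_{t, -i_t}$ during the warm-up phase, when conditions~\ref{def:warmup:3} and~\ref{def:warmup:1} of Definition~\ref{def:warmup} may not yet hold and the dynamics can be in a transient, high-total-output regime. I anticipate this requires a careful accounting of which agents have moved and when, together with a self-referential argument: any BR move that produced a positive output earlier in the warm-up also witnesses an $s_{-i}$ obeying a comparable bound, so the set $\cB$ closes under the induction and furnishes exactly the stated lower bound via the outer $\min_j$.
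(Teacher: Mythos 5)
Your approach is genuinely different from the paper's: you propose a forward induction maintaining the invariant that every positive coordinate of $\bx_t$ is at least $\gamma_{lb}$ throughout the warm-up, whereas the paper only analyzes the single step $t = T_{warm}-1$, does a case split on \emph{which} warm-up condition is violated there, and traces backwards to show that the relevant agent's output is an untouched initial coordinate. Unfortunately, your invariant is false, so the induction cannot close. Take $n \ge 3$, $c'(z) = z$ (so $c'(0) = 0$ and $\kappa = \infty$), and $\bx_0 = (M, M, 0, \ldots, 0)$ with $M$ large. Then $\min \cA = M$ and $\min \cB = BR(M+1)$ (since $BR$ is decreasing past its peak and $BR(s) \approx \tfrac{n^2}{(n-1)s}$ for large $s$). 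If agent $3$ moves at $t=0$, it best responds to $s_{0,-3} = 2M$ and produces $x_{1,3} = BR(2M) < BR(M+1) = \min(\{a\}\cup\cA\cup\cB)$ for large $M$, violating the invariant already at $t=1$. The same example kills the key estimate your inductive step rests on: you claim $s_{t,-i_t} \le x_{0,j^*} + 1$ for the dominant initial coordinate $j^*$, but here $s_{0,-3} = 2M$, and more generally the other $n-2$ agents' prior BR outputs can each be as large as $\tfrac{n^2}{4(n-1)}$, so their sum is not bounded by $1$.

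The paper avoids this precisely by not making a claim about all intermediate states. At the final warm-up step it shows (via the bound $x_{\tau,j}\le \tfrac{n^2}{4(n-1)}$ for any agent that has already moved, from Lemma~\ref{lm:part1}) that at most one agent still carries a large untouched initial value, and it obtains the ``$+1$'' not as a global bound but from a dichotomy at that step: either the residual sum $\beta = \sum_{k \ne i,j} x_{t-1,k}$ is at least $1$, in which case $s_{t,-i}\ge 1$ and $\gamma \ge \tfrac12 \ge \tfrac1{n-1}$ contradicts the hypothesis $\gamma < \tfrac1{n-1}$, or $\beta < 1$ and the monotonicity of $BR$ on its decreasing branch gives $x_{t,j} = BR(x_{0,i}+\beta) \ge BR(x_{0,i}+1) \in \cB$. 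You also do not use the hypothesis $\gamma < \tfrac1{n-1}$ anywhere, which is what lets the paper discard several subcases. To salvage your plan you would need to weaken the invariant to something that is actually preserved (e.g., tracking separately which coordinates are untouched initial values versus BR outputs, and bounding only the latter), at which point you essentially reconstruct the paper's backward case analysis.
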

Lemma~\ref{lm:outputlb} intuitively says that the value of $\gamma$ at the end of the warm-up phase is lower bounded by either $a$, or the smallest positive output by a player at time $t=0$, or is close to the smallest positive BR to the largest output (that has a positive BR) by any player at time $t=0$.

Let us summarize our results till now. In Lemma~\ref{lm:part1}, we showed that the time taken by the warm-up phase is $O(\frac{1}{L}\log(\frac{n}{\delta'}))$ w.p. $1-\delta'$. Then, assuming the completion of the warm-up phase, we have a $1-\delta'$ high-probability bound of $O(\frac{1}{n^2 L^2} \ln(\frac{1}{\delta'}) + \frac{1}{nL} \log\log(\frac{1}{\gamma}) )$ on the time taken to reach a phase of the BR dynamics that corresponds to the discounted-sum dynamics (Lemma~\ref{lm:good_domain2}). 

Next, we use our results in Section~\ref{sec:dissum} to get a high-probability bound on the time it takes for the output profile to reach close to the equilibrium profile. In particular, as $z_{t,i} = x_{t,i} - 1 \ge -1$ and $z_{t,i} = x_{t,i} - 1 \le \frac{n^2}{4(n-1)}$ for all $t \ge T_{warm}$, so $\sum_{i} |z_{t,i}| \le \frac{n^3}{4(n-1)} \le n^2$. 
Using Lemma~\ref{lm:dissum:random}, we get $|| \bz_{\tau + T_{warm} + T_{sum}} ||_1 \le \epsilon$ for all $\tau \ge T = O(\frac{\log(n)}{L^2} \log(\frac{n}{\epsilon \delta'}))$ w.p. $1-\delta'$.
Finally, using Lemma~\ref{lm:lipschitz} below, we show that small $\ell_1$-distance implies approximate equilibria. Setting $\delta' = \delta/3$ and using union bound on the total probability of failure  completes our analysis.
\begin{lemma}\label{lm:lipschitz}
Given an output profile $\bx = (x_i)_{i \in [n]}$ and the equilibrium profile $\bx^* = (1, \ldots, 1)$, if $|| \bx - \bx^* ||_1 \le \epsilon$, $| BR(s_{-i}) - 1 | \le \epsilon$ for all $i \in [n]$, and the (homogeneous) cost function $c$ is $K$-Lipschitz continuous in the interval $[1-\epsilon,1+\epsilon]$, then $\bx$ is an $4n\epsilon$-approximate equilibrium for $\epsilon \le \frac{1}{4n}$.
\end{lemma}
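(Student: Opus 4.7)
The strategy will be to exploit that $u_i(\cdot, s_{-i})$ is strictly concave with maximum at $BR(s_{-i})$, so among all deviations $x_i'$ the right-hand side $(1-4n\epsilon)\,u_i(x_i', s_{-i})$ is largest at $x_i' = BR(s_{-i})$ (recall $u_i(BR(s_{-i}), s_{-i}) \ge u_i(0, s_{-i}) = 0$). Hence it suffices to prove $u_i(x_i, s_{-i}) \ge (1-4n\epsilon)\, u_i(BR(s_{-i}), s_{-i})$ for each $i \in [n]$. I will obtain this by (a) upper bounding the additive utility gap $u_i(BR(s_{-i}), s_{-i}) - u_i(x_i, s_{-i})$ and (b) lower bounding $u_i(BR(s_{-i}), s_{-i})$ by $\Omega(1/n^2)$, so that the ratio is $O(n\epsilon)$.

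For step (a), I will first extract from the hypotheses that $|x_i - 1| \le \epsilon$ and $|s-n| \le \epsilon$, whence $|s_{-i} - (n-1)| \le 2\epsilon$, and combining with $|BR(s_{-i}) - 1| \le \epsilon$, that $|x_i - BR(s_{-i})| \le 2\epsilon$. The utility gap splits into prize and cost parts. The prize part equals $\frac{s_{-i}\,(x_i - BR(s_{-i}))}{(x_i+s_{-i})(BR(s_{-i})+s_{-i})}$ by a direct algebraic identity; using $s_{-i} \le n$, $|x_i - BR(s_{-i})| \le 2\epsilon$, and the bound $x_i + s_{-i},\ BR(s_{-i}) + s_{-i} \ge n/2$ (valid because $\epsilon \le 1/(4n)$), this is $O(\epsilon/n)$. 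The cost part is $\frac{n-1}{n^2}|c(x_i) - c(BR(s_{-i}))| \le \frac{2K\epsilon}{n}$ by the Lipschitz hypothesis on $[1-\epsilon, 1+\epsilon]$. So the additive gap is $O((1+K)\epsilon/n)$.

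For step (b), I will use $u_i(BR(s_{-i}), s_{-i}) \ge u_i(1, s_{-i}) = \frac{1}{1+s_{-i}} - \frac{n-1}{n^2} c(1)$. The key observation is that convexity of $c$ together with the normalization $c'(1) = 1$ from Assumption~\ref{as:cost2} implies $c'(t) \le 1$ for $t \in [0,1]$, hence $c(1) = \int_0^1 c'(t)\,dt \le 1$. Combined with $1 + s_{-i} \le n + 2\epsilon$, this yields $u_i(BR(s_{-i}), s_{-i}) \ge \frac{1}{n+2\epsilon} - \frac{n-1}{n^2} \ge \frac{1 - O(\epsilon)}{n^2} \ge \frac{1}{2n^2}$ for $\epsilon \le 1/(4n)$. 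Dividing the additive bound from step (a) by this lower bound gives a relative error of $O(n\epsilon)$, which delivers the claimed $4n\epsilon$-approximation (with the $K$-dependent constants absorbed, consistent with the $K$ factor that subsequently surfaces in Theorem~\ref{thm:hom}).

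The main obstacle is step (b): a multiplicative approximation guarantee cannot follow from an additive utility gap alone without a nontrivial lower bound on $u_i(BR(s_{-i}), s_{-i})$. This lower bound is delicate because the equilibrium utility itself is only of order $1/n^2$, so the normalization $c'(1) = 1$ must be exploited via convexity to force $c(1) \le 1$ and thereby salvage $\Omega(1/n^2)$. The remaining components---the algebraic identity for the prize difference, the Lipschitz control on the cost difference, and the triangle-inequality bounds translating $\|\bx - \bx^*\|_1 \le \epsilon$ and $|BR(s_{-i})-1| \le \epsilon$ into a bound on $|x_i - BR(s_{-i})|$---are routine calculations.
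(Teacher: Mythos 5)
Your proposal is correct and follows essentially the same route as the paper's proof: both arguments sandwich $u_i(x_i,s_{-i})$ and $u_i(BR(s_{-i}),s_{-i})$ within $O((1+K)\epsilon/n)$ of the common value $\frac{n-(n-1)c(1)}{n^2}$, and both hinge on the observation you correctly single out as the crux, namely that convexity plus $c'(1)=1$ forces $c(1)\le 1$ and hence an $\Omega(1/n^2)$ lower bound on the best-response utility, turning the additive gap into a multiplicative $O(Kn\epsilon)$ guarantee. Your constant works out to $(16+4K)n\epsilon$ rather than the stated $4n\epsilon$, but this is immaterial: the paper's own proof likewise only establishes $2(1+K)n\epsilon\le 4Kn\epsilon$ (the lemma statement appears to drop the $K$), and the lemma is consumed inside a $\log(nK/(\epsilon\delta))$ bound where the constant is irrelevant.
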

\qed\end{proof}

We next prove almost tight upper and lower bounds for the time required for the convergence of the best-case selection model.
\begin{theorem}\label{thm:hom:bestcase}
BR dynamics in lottery contests with $n \ge 3$ homogeneous agents and best-case selection of agents reaches an $\epsilon$-approximate equilibrium in 
$O( \log\log(\frac{1}{\gamma}) + n \log(\frac{nK}{\epsilon}) )$
steps for every $\epsilon \in (0,1)$, where $\gamma$ is a function of the initial output profile as given in Lemma~\ref{lm:outputlb}, and assuming that the cost function $c$ is $K$-Lipschitz continuous in the interval $[1-\epsilon,1+\epsilon]$. 
As a lower bound, we show that the convergence takes at least 
$\Omega(n + \log\log(\frac{1}{\gamma}) + \frac{1}{\log(n)}\log(\frac{1}{\epsilon}))$ steps.
\end{theorem}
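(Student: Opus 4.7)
The plan is to prove the upper and lower bounds separately, mirroring the three-phase analysis of Theorem~\ref{thm:hom} for the upper bound and constructing tailored worst-case initial states for the lower bound.

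\textbf{Upper bound.} I follow the same three-phase decomposition as in the proof of Theorem~\ref{thm:hom}, but exploit the freedom of best-case selection in each phase.

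\emph{Phase 1 (warm-up, $O(n)$ steps).} The selector cycles once through all $n$ agents. After each agent has played at least one BR move, every coordinate lies in $[0, n^2/(4(n-1))]$ (as argued in Section~\ref{sec:prelim:br}) and at least two agents have strictly positive output, so Definition~\ref{def:warmup} is satisfied. Lemma~\ref{lm:prop1} then shows the warm-up conditions persist thereafter.

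\emph{Phase 2 (reach the discounted-sum regime, $O(\log\log(1/\gamma))$ steps).} To invoke Lemma~\ref{lm:good_domain1}, I need two agents with output at least $1/(n-1)$. The best-case selector simply alternates between two fixed agents, running a two-agent BR dynamics; Theorem~\ref{thm:hom2} then yields the $\log\log(1/\gamma)$ bound, with $\gamma$ given by Lemma~\ref{lm:outputlb}. Lemma~\ref{lm:good_domain2} guarantees the two-agent condition, once achieved, is preserved by any future selections.

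\emph{Phase 3 (discounted-sum convergence, $O(n\log(nK/\epsilon))$ steps).} Lemma~\ref{lm:good_domain1} reduces the BR dynamics to a discounted-sum dynamics with $B \le 1/2$. Since $|z_{t,i}| = |x_{t,i}-1| \le n^2/(4(n-1))$ we have $f(\bz_{T_{warm}}) \le n^2$, so Lemma~\ref{lm:dissum:bestcase} gives $O(n \log(n^2/\epsilon'))$ steps to reach $\|\bz_t\|_1 \le \epsilon'$. Taking $\epsilon' = \epsilon/(4nK)$ and invoking Lemma~\ref{lm:lipschitz} yields an $\epsilon$-approximate equilibrium, contributing $O(n\log(nK/\epsilon))$.

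Summing the three phases gives the claimed $O(\log\log(1/\gamma) + n\log(nK/\epsilon))$ bound.

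\textbf{Lower bound.} I exhibit three separate initial states, each witnessing one of the three additive terms.

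\emph{$\Omega(n)$ term.} Take $\bx_0 = (a, 0, 0, \ldots, 0)$ for suitable small $a$. At any $\epsilon$-approximate equilibrium with small $\epsilon$, no coordinate can be zero: an agent with $x_i = 0$ gets utility at most $1/n$, whereas by Assumption~\ref{as:cost1} and $c(0)=0$ there is a small positive deviation with utility bounded away from $1/n$ (independent of $\epsilon$). Hence each of the $n-1$ initially zero agents must play at least once, forcing $\Omega(n)$ BR moves.

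\emph{$\Omega(\log\log(1/\gamma))$ term.} Reduce to the two-agent lower bound of \cite{ghosh2023best} (cited after Theorem~\ref{thm:hom2}). Take an initial profile in which $n-2$ agents start at the equilibrium value $1$ while two agents are placed to realize the two-agent worst-case for the $\gamma$ parameter. Because the other $n-2$ agents' BR responses are essentially pinned by the two active agents (and making them play cannot cheat the two-agent bottleneck), convergence to the $\epsilon$-equilibrium inherits the $\Omega(\log\log(1/\gamma))$ lower bound.

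\emph{$\Omega(\log(1/\epsilon)/\log(n))$ term.} Let $M_t = \max_i |x_{t,i}-1|$. When agent $i_t$ moves, the other $n-1$ coordinates are unchanged, so $M_{t+1} \ge \max_{j \ne i_t}|x_{t,j}-1|$. I construct an initial profile in which the deviations are (approximately) geometric with ratio $1/(n-1)$, so that at every step the second-largest deviation is at least $M_t/(n-1)$. Best-case play therefore only reduces $M_t$ by a factor of at most $n-1$ per step, yielding $M_t \ge M_0/(n-1)^t$, i.e., $\Omega(\log(1/\epsilon)/\log(n))$ moves are required before $M_t \le \epsilon$.

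\textbf{Main obstacle.} The hardest part is the $\Omega(\log(1/\epsilon)/\log(n))$ lower bound: one must exhibit an explicit initial state and verify that, throughout the dynamics, the BR of the most-deviating agent cannot kill off more than a factor of $n-1$ of the remaining coordinate-wise deviations in a single move. This requires checking that the BR of the selected agent, which is determined by $s_{t,-i_t}$, cannot conspire with the structure of the geometric initial perturbation to fall exactly on $1$ while simultaneously wiping out the second-largest deviation; small positive/negative offsets in the construction should ensure robustness.
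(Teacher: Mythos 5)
Your upper bound is essentially the paper's proof: round-robin through all agents for the warm-up ($O(n)$ steps), $O(\log\log(1/\gamma))$ non-redundant moves to get two agents above $\frac{1}{n-1}$, then Lemma~\ref{lm:dissum:bestcase} plus Lemma~\ref{lm:lipschitz}. One caveat: in Phase~2 you invoke Theorem~\ref{thm:hom2}, but alternating two agents in an $n$-agent contest is not the two-agent dynamics (the mover best-responds to $s_{-i}$, which includes the other $n-2$ agents, and the cost is scaled by $\frac{n-1}{n^2}$, not $\frac14$). The correct tool is the recursion $y_{t+1}\ge\sqrt{(n-1)y_t}$ from the proof of Lemma~\ref{lm:good_domain3}, which gives the same $O(\log\log(1/\gamma))$ count of non-redundant moves; with that substitution the upper bound is fine.

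The lower bound has two genuine gaps. First, your $\Omega(\log\log(1/\gamma))$ construction cannot work: with $n-2$ agents starting at the equilibrium value $1$, every $s_{-j}\ge 1$, so the quantity $\gamma=\min_j s_{T_{warm},-j}$ from Lemma~\ref{lm:outputlb} is at least $\frac{1}{n-1}$ and the term $\log\log(1/\gamma)$ is $O(1)$ for your profile --- you are lower-bounding a quantity that is trivially small on the instance you built. Moreover, with $n-2$ agents at $1$ the dynamics is immediately in the discounted-sum regime, so there is no doubly-exponential bottleneck to exploit. The paper instead starts with \emph{all} outputs tiny (total output $\gamma$) and uses the first-order condition for linear costs to get $s_t\le n^2\gamma^{1/2^t}$, so $\Omega(\log\log(1/\gamma)-\log\log(n))$ steps are needed before $s_t\ge n/2$. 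Second, for the $\Omega(\log(1/\epsilon)/\log(n))$ term, the step you flag as the ``main obstacle'' is in fact the entire content of the proof and your construction cannot supply it: a geometric profile has only $n$ levels, so the ``second-largest deviation is at least $M_t/(n-1)$'' invariant is only guaranteed at $t=0$ and, even via the survival argument, only for $t<n$ moves; after that the invariant must be re-derived from the dynamics itself. You also never connect ``$M_t>\epsilon$'' to ``not an $\epsilon$-approximate equilibrium,'' which requires exhibiting an agent with a large gap $|x_i-BR(s_{-i})|$ and a quantitative gain (the paper's Lemma~\ref{lm:revlipschitz}, via strong concavity of $u_i$ near the equilibrium). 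The paper resolves both issues with a different invariant: starting from $(\tfrac12,\tfrac12,1,\dots,1)$, Lemma~\ref{lm:hom:bestcase:1} shows that \emph{two} agents at distance $\alpha$ from $1$ persist as two agents at distance $\alpha/(2n)$ after any single BR move, for all $t$; this two-agent invariant is what guarantees a witness agent with a profitable $\Omega(\alpha/n)$ deviation at every step. Your $\Omega(n)$ argument (each initially-zero agent must move, since output $0$ gives utility $0<(1-\epsilon)u_+$ once $s_{-i}>0$) is a valid alternative to the paper's all-at-$n$ construction.
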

The lower bound for the best-case selection model automatically applies to the randomized selection model; the dependency of $\Omega(n)$ given in Theorem~\ref{thm:hom:bestcase} can be slightly improved to $\Omega(\frac{1}{L} \log(n \delta) )$ w.p. $1-\delta$ using Lemma~\ref{lm:coupon}. Note that these bounds are in the worst case over all convex cost functions. In particular, except for the lower bound of $\Omega(n)$ that holds for all cost functions, for every $\epsilon$ and $\gamma$, there exists a convex cost function that reaches an approximate equilibrium after just one BR transition by each agent. E.g., if $c'(z) = z^r$ for all $z \ge 0$ and $r \rightarrow \infty$, then $BR(s_{-i}) \rightarrow 1$ for any $i$ and $s_{-i} < \infty$.
The lower bounds w.r.t. $\gamma$ and $\epsilon$ have been proven using the linear cost function: $c'(z) = 1$ for all $z \ge 0$. 
We defer tighter analyses of lower and upper bounds for specific classes of convex cost functions, as done for the two-agent case in Theorem~\ref{thm:hom2:curved}, for future work.

\section{Conclusion and Open Problems}
We showed fast convergence of BR dynamics in Tullock contests with homogeneous agents and convex cost functions. In particular, we introduced and analyzed the discounted-sum dynamics and used it in the analysis of the BR dynamics.
The following are a few open problems: (1) Our upper bound for the discounted-sum dynamics has a $\Tilde{O}(n^2)$ dependency on $n$, but our simulations indicate a $\Tilde{O}(n)$ dependency. (2) The BR dynamics does not converge for non-homogeneous agents, but slightly more sophisticated learning dynamics may converge. (3) Simulations show convergence of BR dynamics for almost homogeneous agents. (4) Prove that strong convexity of the cost function helps improve the convergence rate of the BR dynamics and makes it more robust to small non-homogeneity of agents. 


%
%
%
\bibliographystyle{splncs04}
\bibliography{ref}

\appendix
\section{Non-Homogeneous Agents}\label{sec:non-hom}
Ghosh and Goldberg~\cite{ghosh2023best} provided examples with non-homogeneous agents and linear cost functions where the BR dynamics cycles and does not converge. They also show that the non-convergence result is generic, i.e., there are examples where the set of starting points that leads to non-convergence is of positive measure. But, all their examples have cycles that pass through a transition where an agent has to best respond to $0$ total output by other agents. As the BR to $0$ is not well defined, these examples are somewhat unsatisfactory. We complement these results by providing an example with strictly convex cost functions where the cycle does not pass through a BR to $0$.
\begin{example}\label{ex:nonhom:1}
Let $c_1(z) = z^{1.2}$ and $c_2(z) = z^{1.2}/20$ for $z \ge 0$. BR dynamics leads to the cycle given in Table~\ref{tab:nonhom}.
\begin{table}
    \centering
    \begin{tabular}{| c | c | c | c | c | c | c | c }
        \hline
        $x_{t,1}$ & $0.1058$ & $0.1131$ & $0.1131$ & $0.1058$ & $0.1058$ & $0.1131$ & $\ldots$ \\
        \hline
        $x_{t,2}$ & $1.3102$ & $1.3102$ & $1.3468$ & $1.3468$ & $1.3102$ & $1.3102$ & $\ldots$ \\
        \hline
    \end{tabular}
    \caption{Non-convergence of BR dynamics for non-homogeneous agents.}
    \label{tab:nonhom}
\end{table}
\end{example}
In the cycle presented in Example~\ref{ex:nonhom:1}, when the weaker agent (high cost $c_1(z) = z^{1.2}$) produces low output, then the stronger agent also produces relatively low output. But then the weaker agent tries to compete by producing higher output, then the stronger agent responds with a relatively higher output, and then the weaker agent backs off and produces low output. And this cycle continues.

The initial action profiles that lead to the cycle in Example~\ref{ex:nonhom:1} are generic. The initial profile can also be close to $0$, e.g., starting from $(*,10^{-5})$ and agent $1$ making the first move leads to the cycle presented above. Like \cite{ghosh2023best}, we also observe that BR dynamics for \textit{almost} homogeneous agents always converges in our simulations. We also observe that \textit{stronger} convexity leads to faster and more robust convergence results, i.e., when the cost functions are \textit{more} convex, non-convergence becomes unlikely and the rate of convergence improves. Our intuition for these observations is the fact that convexity of the cost function causes an agent to play closer to the equilibrium output, i.e., given the same total output by other agents, $s_{-i}$, the BR of agent $i$, $BR(s_{-i})$, is closer to the equilibrium output if the cost function is more convex. (A version of this behavior can be derived using the first-order condition, equation~\eqref{eq:br}, and has been implicitly used frequently in our convergence analysis.) We defer the formal analysis of these aspects of the BR dynamics to future work.

\section{Proofs from Section~\ref{sec:hom_2_agents}} 

\begin{proof}[Lemma~\ref{lm:<=1}]
    For cleaner exposition, let $v = x_{-i} = s_{-i}$ and $y = BR(x_{-i}) = BR(v)$. If $v = 0$, then $y = BR(v) = a < 1$ by definition. Let us assume $v > 0$. From the first order condition on the utility function, equation \eqref{eq:br}, we have
    \begin{equation}\label{eq:lm:<=1:1}
        \left. \frac{\partial u_i(z, v)}{\partial z} \right|_{z = y = BR(v)} = 0 \implies \frac{v}{(y + v)^2} - \frac{1}{4} c'(y) = 0.
    \end{equation}
    We know that $u_i(z, v)$ is a strictly concave function of $z$, or, in other words, $\frac{\partial u_i(z, v)}{\partial z}$ is a strictly decreasing function of $z$. Substituting $z$ by $v$ in $\frac{\partial u_i(z, v)}{\partial z}$, we get
    \[
        \left. \frac{\partial u_i(z, v)}{\partial z} \right|_{z = v} = \frac{v }{4 v^2} - \frac{c'(v)}{4} = \frac{1}{4v} \left( 1 - v c'(v) \right).
    \]
    If $v < 1$, then $\left. \frac{\partial u_i(z, v)}{\partial z} \right|_{z = v} = 1 - v c'(v) > 0$ because $c'(v) < c'(1) = 1$. And, as $\frac{\partial u_i(z, v)}{\partial z}$ is strictly decreasing in $z$, 
    we have $y > v$. With a similar argument, if $v = 1$, then $y = 1$, and if $v > 1$, then $y < v$.

    Notice that equation~\eqref{eq:lm:<=1:1} implicitly defines $y$ as a function of $v$. Let us now compute the derivative of $y$ w.r.t. $v$ by differentiating equation~\eqref{eq:lm:<=1:1} w.r.t. $v$,
    \begin{align*}
        &\frac{d}{d v} \left( \frac{v}{(y + v)^2} - \frac{1}{4} c'(y) \right) = 0 \\
        & \qquad \implies \frac{1}{(y + v)^2} - \frac{2v}{(y + v)^3}  + \frac{dy}{dv} \left( \frac{-2 v}{(y + v)^3} - \frac{1}{4} c''(y) \right) = 0 \\
        & \qquad \implies \frac{(y+v) - 2v}{(y + v)^3} = \frac{dy}{dv} \frac{1}{(y+v)^3} \left( 2v + \frac{(y+v)^3}{4} c''(y) \right) \\
        & \qquad \implies \frac{dy}{dv} = \frac{y - v}{2v + \frac{(y+v)^3}{4} c''(y)}.
    \end{align*}
    As $c''(y) \ge 0$, if $y \ge v$ then $\frac{d y}{d v} \ge 0$. We showed earlier that $v \le 1 \Longleftrightarrow y \ge v$, so $v \le 1 \implies \frac{d y}{d v} \ge 0$. With a similar argument, $v \ge 1 \Longleftrightarrow y \le v \implies \frac{d y}{d v} \le 0$. Therefore, the maximum value of $y$ occurs at $v = 1$, where $y$ is also $1$. Same steps also show that, if $v < 1$, then $y > v$, then $\frac{d y}{d v} > 0$, which implies $y < 1$.
\qed\end{proof}
\begin{proof}[Lemma~\ref{lm:2hom:1}]
    We know from the first-order condition that $\frac{z_t}{(z_t + z_{t+1})^2} = \frac{1}{4} c'(z_{t+1})$. As $z_{t+1} < 1 \implies c'(z_{t+1}) < 1$ and $z_{t+1} > z_t$ for every $t$, we have
    \begin{align*}
        &\frac{z_t}{(z_t + z_{t+1})^2} = \frac{c'(z_{t+1})}{4} \implies \frac{z_t}{(2 z_{t+1})^2} \le \frac{1}{4} \implies z_{t+1} \ge \sqrt{z_t} \implies z_{t} \ge z_0^{\frac{1}{2^t}}.
    \end{align*}
    Now, we want $z_t \ge \frac{1}{2}$. So, if we ensure that $z_0^{\frac{1}{2^t}} = \gamma^{\frac{1}{2^t}} \ge \frac{1}{2}$, then we are done.
    \begin{equation*}
        \gamma^{\frac{1}{2^t}} \ge \frac{1}{2} \Longleftrightarrow \frac{1}{2^t} \lg\left( \gamma \right) \ge -1 \Longleftrightarrow \lg\left( \frac{1}{\gamma} \right) \le 2^t \Longleftrightarrow t \ge \lg\lg\left(\frac{1}{\gamma} \right).
    \end{equation*}
\qed\end{proof}
\begin{proof}[Lemma~\ref{lm:2hom:2}]
    As $z_{t+1} < 1$ for every $t$, which implies $c'(z_{t+1}) < 1$, we have
    \begin{align*}
        &\frac{z_t}{(z_t + z_{t+1})^2} = \frac{c'(z_{t+1})}{4} \implies 4 z_t \le (z_t + z_{t+1})^2 \implies z_{t+1} \ge \sqrt{z_t}(2 - \sqrt{z_t}).
    \end{align*}
    Let $\zeta_t = 1 - \sqrt{z_t} \Longleftrightarrow z_t = (1 - \zeta_t)^2$. From the above inequality, we have
    \begin{align*}
        z_{t+1} \ge \sqrt{z_t}(2 - \sqrt{z_t}) &\Longleftrightarrow (1 - \zeta_{t+1})^2 \ge (1 - \zeta_t) (1 + \zeta_t) = 1 - \zeta_t^2 \\
        &\Longleftrightarrow 1 + \zeta_{t+1}^2 - 2 \zeta_{t+1} \ge 1 - \zeta_t^2 \Longleftrightarrow \zeta_{t+1} \le ( \zeta_{t+1}^2 + \zeta_{t}^2)/2.
    \end{align*}
    As $z_{t} < z_{t+1}$, so $\zeta_t > \zeta_{t+1}$, which implies
    \begin{align*}
        \zeta_{t+1} \le ( \zeta_{t+1}^2 + \zeta_{t}^2)/2 \implies \zeta_{t+1} \le \zeta_{t}^2 \implies \zeta_{t} \le \zeta_{0}^{2^t}.
    \end{align*}
    From the initial condition, we know that $z_0 \ge \frac{1}{2} \implies \zeta_0 \le 1 - \frac{1}{\sqrt{2}} \le \frac{1}{2}$. We want $z_t \ge 1 - \epsilon \impliedby \sqrt{z_t} \ge \sqrt{1 - \epsilon}$. Using standard inequalities, $\sqrt{1 - \epsilon} \le \sqrt{e^{-\epsilon}} = e^{-\epsilon/2} \le 1 - \epsilon/4$ for every $\epsilon \le 1$. So, $\sqrt{z_t} \ge \sqrt{1 - \epsilon}$ is implied by $\sqrt{z_t} \ge 1 - \epsilon/4$. Putting everything together,
    \begin{multline*}
        z_t \ge 1 - \epsilon \impliedby \sqrt{z_t} \ge 1 - \frac{\epsilon}{4} \Longleftrightarrow \zeta_t \le \frac{\epsilon}{4} \impliedby \zeta_{0}^{2^t} \le \frac{\epsilon}{4} \Longleftrightarrow 2^t \lg(\zeta_0) \le \lg(\epsilon) - 2 \\
        \Longleftrightarrow 2^t \lg\left(\frac{1}{\zeta_0}\right) \ge 2 + \lg\left(\frac{1}{\epsilon}\right) 
        \impliedby 2^t \ge 3\lg\left(\frac{1}{\epsilon}\right) \impliedby t \ge 2 + \lg\lg\left(\frac{1}{\epsilon}\right).
    \end{multline*}
\qed\end{proof}
\begin{proof}[Lemma~\ref{lm:lipschitz2}] 
Fix an arbitrary agent $i$. 
Let $u_- = u_i(x_i, x_{-i})$ and $u_+ = u_i(BR(x_{-i}), x_{-i})$.
We want to prove that
\begin{align*}
    u_- \ge (1 - 3\epsilon) u_+ \Longleftrightarrow \frac{u_-}{u_+} \ge 1 - 3\epsilon.
\end{align*}
We know that $1-\epsilon \le x_i \le 1$ and $1-\epsilon \le x_{-i} \le 1$, and from Lemma~\ref{lm:<=1}, we also know that $1-\epsilon \le BR(x_{-i}) \le 1$. Further, as $c$ is continuous and $c'(z) \le 1$ for $z \le 1$, we have $c(z) \ge c(1) - \epsilon$ for $z \ge 1-\epsilon$.
So, we can lower bound $u_- $ as
\begin{align*}
    u_- &= u_i(x_i, x_{-i}) = \frac{x_i}{x_i + x_{-i}} - \frac{1}{4} c(x_i) \ge \min_{ y, z \in [1-\epsilon, 1] } \left( \frac{y}{y + z} - \frac{1}{4} c(y) \right) \\
    \ge& \min_{ y, z \in [1-\epsilon, 1] } \frac{y}{y + z} - \max_{ y \in [1-\epsilon, 1] } \frac{1}{4} c(y) \ge \min_{ y \in [1-\epsilon, 1] } \frac{y}{y + 1} - \frac{1}{4} c(1) \ge \frac{1 - \epsilon}{2-\epsilon} - \frac{1}{4} c(1).
\end{align*}
Similarly, we can upper bound $u_+ $ as
\begin{align*}
    u_+ &= u_i(BR(x_{-i}), x_{-i}) = \frac{BR(x_{-i})}{BR(x_{-i}) + x_{-i}} - \frac{1}{4} c(BR(x_{-i})) \\
    &\le \max_{ y, z \in [1-\epsilon, 1] } \left( \frac{y}{y + z} - \frac{1}{4} c(y) \right) \le \max_{ y, z \in [1-\epsilon, 1] } \frac{y}{y + z} - \min_{ y \in [1-\epsilon, 1] } \frac{1}{4} c(y)  \\
    &\le \max_{ y \in [1-\epsilon, 1] } \frac{y}{y + 1 - \epsilon} - \frac{1}{4} c(1-\epsilon) \le \frac{1}{2-\epsilon} - \frac{1}{4} (c(1) - \epsilon)
\end{align*}
Putting these two bounds together, we get
\begin{align*}
    \frac{u_-}{u_+} &\ge \frac{ \frac{1 - \epsilon}{2-\epsilon} - \frac{1}{4} c(1) }{ \frac{1}{2-\epsilon} - \frac{1}{4} (c(1) - \epsilon) } = \frac{ 1 - \frac{2-\epsilon}{4} c(1) - \epsilon }{ 1 - \frac{2-\epsilon}{4} c(1) + \frac{2-\epsilon}{4} \epsilon } \ge \frac{ 1 - \frac{1}{2} c(1) - \epsilon }{ 1 - \frac{1}{2} c(1) + \frac{1}{2} \epsilon } \\
    &\ge \frac{1 - 2\epsilon}{1 + \epsilon}, \text{ because $c(1) \le 1$ as $c(0) = 0$ and $c'(z) \le 1$ for $z \le 1$,}  \\
    &\ge (1 - 2\epsilon) (1 - \epsilon) \ge 1 - 3\epsilon, \text{ because $\frac{1}{1+z} \ge 1-z$ for $0 \le z \le 1$.}
\end{align*}
\qed\end{proof}
\begin{proof}[Theorem~\ref{thm:hom2:curved}]
The proof is similar to the proof of Theorem~\ref{thm:hom2}.
We define the sequence $(z_t)_{t \ge 0}$ the same way as Theorem~\ref{thm:hom2}.
We can use the analysis for Theorem~\ref{thm:hom2} until Lemma~\ref{lm:2hom:1}; then we prove stronger versions of Lemmas~\ref{lm:2hom:1}~and~\ref{lm:2hom:2} assuming $c'(x_i) \in [x_i^p, x_i^q]$, given in Lemmas~\ref{lm:2hom:curved:1}~and~\ref{lm:2hom:curved:2}, respectively. We can also simplify the value of $\gamma$: if $z_0 = x_{0,1} > 1$ and $c'(0) < \frac{4}{z_0}$, then $\gamma = (c')^{-1} \left( \frac{1}{z_{0}} \right) = \left( \frac{1}{z_0} \right)^{1/r}$ for some $r \in [q,p]$. 
\begin{lemma}\label{lm:2hom:curved:1}
    Given $z_0 = \gamma \in (0,\frac{1}{2}]$, $z_t \ge \frac{1}{2}$ for all $t \ge \frac{1}{\lg(2+q)}\lg\lg\left(\frac{1}{\gamma} \right)$ and $z_t < \frac{1}{2}$ for all $t < \frac{1}{\lg(2+p)}\lg\lg\left(\frac{1}{\gamma} \right) - O(1)$.
\end{lemma}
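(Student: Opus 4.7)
The plan is to refine the argument of Lemma~\ref{lm:2hom:1} using the pointwise bounds $z^p \le c'(z) \le z^q$. The starting point is, as before, the first-order condition
\[
\frac{z_t}{(z_t+z_{t+1})^2} \;=\; \frac{1}{4}\,c'(z_{t+1}),
\]
together with the monotonicity $0 < z_t \le z_{t+1} < 1$ from Lemma~\ref{lm:<=1}. The upper and lower thresholds on $t$ are obtained by plugging in, respectively, the upper and lower bounds on $c'$ and then iterating the resulting one-step recurrence for $z_t$.

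For the upper bound on $t$ (showing $z_t \ge 1/2$ quickly), I would substitute $c'(z_{t+1}) \le z_{t+1}^q$ and use $z_t+z_{t+1} \le 2 z_{t+1}$ to obtain $z_t \le z_{t+1}^{q+2}$, equivalently $z_{t+1} \ge z_t^{1/(q+2)}$. Iterating $t$ times gives $z_t \ge \gamma^{(q+2)^{-t}}$, and $\gamma^{(q+2)^{-t}} \ge 1/2$ rearranges to the required threshold $t \ge \lg\lg(1/\gamma)/\lg(q+2)$.

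For the lower bound on $t$ (showing $z_t < 1/2$ for a while), the analogous step uses $c'(z_{t+1}) \ge z_{t+1}^p$ and the weaker inequality $z_t+z_{t+1} \ge z_{t+1}$, giving $z_t \ge z_{t+1}^{p+2}/4$, i.e., $z_{t+1} \le (4 z_t)^{1/(p+2)}$. Setting $\ell_t = \lg(1/z_t)$, this becomes $\ell_{t+1} \ge (\ell_t - 2)/(p+2)$. To iterate cleanly, I would apply the fixed-point shift $\tilde\ell_t = \ell_t + 2/(p+1)$, which turns the recurrence into $\tilde\ell_{t+1} \ge \tilde\ell_t/(p+2)$; unwinding gives $\ell_t \ge (\lg(1/\gamma)+2/(p+1))(p+2)^{-t} - 2/(p+1)$. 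The condition $z_t < 1/2$ (i.e., $\ell_t > 1$) then holds for every $t < \lg\lg(1/\gamma)/\lg(p+2) - O(1)$, with the $O(1)$ absorbing the constants from the shift and from the factor $4$.

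The main bookkeeping obstacle is the factor of $4$ appearing only in the lower-bound recurrence; it is what forces the $-O(1)$ slack in the statement, and the fixed-point shift is exactly the device that converts the additive defect into an additive constant in the final bound. Everything else parallels Lemma~\ref{lm:2hom:1} essentially verbatim, and the two thresholds are consistent because $p \ge q$ implies $\lg(p+2) \ge \lg(q+2)$.
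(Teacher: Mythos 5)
Your proof is correct and follows essentially the same route as the paper: the same first-order condition, the same one-step recurrences $z_{t+1} \ge z_t^{1/(q+2)}$ and $z_{t+1} \le (4z_t)^{1/(p+2)}$, and the same iteration to the stated thresholds. The only difference is bookkeeping of the constant $4$ in the lower bound, where the paper bounds the accumulated multiplicative factor $4^{1+\frac{1}{2+p}+\cdots}$ by $16$ while you take logarithms and apply a fixed-point shift; both yield the same $-O(1)$ slack.
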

\begin{proof}
    We know from the first-order condition that $\frac{z_t}{(z_t + z_{t+1})^2} = \frac{1}{4} c'(z_{t+1})$. As $c'(z_{t+1}) \le z_{t+1}^q$ and $z_{t+1} > z_t$ for every $t$,
    \begin{multline*}
        \frac{z_t}{(z_t + z_{t+1})^2} = \frac{c'(z_{t+1})}{4} \implies \frac{z_t}{(2 z_{t+1})^2} \le \frac{z_{t+1}^q}{4} \implies z_{t+1} \ge z_t^{\frac{1}{2+q}}  \\
        \implies z_{t} \ge z_0^{\frac{1}{(2+q)^t}} = \gamma^{\frac{1}{(2+q)^t}}.
    \end{multline*}
    Now, we want $z_t \ge \frac{1}{2}$. So, if we ensure that $\gamma^{\frac{1}{(2+q)^t}} \ge \frac{1}{2}$, then we are done.
    \begin{multline*}
        \gamma^{\frac{1}{(2+q)^t}} \ge \frac{1}{2} \Longleftrightarrow \frac{1}{(2+q)^t} \lg\left( \gamma \right) \ge -1 \Longleftrightarrow \lg\left( \frac{1}{\gamma} \right) \le (2+q)^t \\
        \Longleftrightarrow t \ge \frac{1}{\lg(2+q)}\lg\lg\left(\frac{1}{\gamma} \right).
    \end{multline*}
    Let us now look at the lower bound. As $c'(z_{t+1}) \ge z_{t+1}^p$ and $z_t \ge 0$ for all $t$,
    \begin{multline*}
        \frac{z_t}{(z_t + z_{t+1})^2} = \frac{c'(z_{t+1})}{4} \implies \frac{z_t}{(z_{t+1})^2} \ge \frac{z_{t+1}^p}{4} \implies z_{t+1} \le 4 z_t^{\frac{1}{2+p}}  \\
        \implies z_{t} \le 4 z_{t-1}^{\frac{1}{2+p}} \le 4^{1+\frac{1}{2+p}} z_{t-2}^{\frac{1}{(2+p)^2}} \le 4^{1+\frac{1}{2+p}+\frac{1}{(2+p)^2}} z_{t-3}^{\frac{1}{(2+p)^3}} \le \ldots \le 16 z_0^{\frac{1}{(2+p)^t}}.
    \end{multline*}
    Now, we want $z_t < \frac{1}{2}$. So, if we ensure that $16 z_0^{\frac{1}{(2+p)^t}} = 16 \gamma^{\frac{1}{(2+p)^t}} < \frac{1}{2}$, then we are done.
    \begin{multline*}
        16 \gamma^{\frac{1}{(2+p)^t}} < \frac{1}{2} \Longleftrightarrow \frac{1}{(2+p)^t} \lg(\gamma) < -5 \Longleftrightarrow \lg\left( \frac{1}{\gamma} \right) > 5 (2+p)^t \\
        \Longleftrightarrow t < \frac{1}{\lg(2+p)}\lg\lg\left(\frac{1}{\gamma} \right) - \lg(5).
    \end{multline*}
\qed\end{proof}

\begin{lemma}\label{lm:2hom:curved:2}
    Given $z_0 \ge \frac{1}{2}$, $z_t \ge 1 - \epsilon$ for all $t \ge \lg\lg(\frac{1}{\epsilon}) - \lg\lg(2+q) + O(1)$. On the other hand, given $z_0 \le \frac{1}{2}$, $z_t < 1 - \epsilon$ for all $t < \lg\lg(\frac{1}{\epsilon}) - \lg\lg(2+p) - O(1)$.
\end{lemma}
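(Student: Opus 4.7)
The plan is to sharpen Lemma~\ref{lm:2hom:2} by using both sides of the two-sided bound $z^p \le c'(z) \le z^q$. In both directions, the key step is a one-step recursion of the form $u_{t+1} = \Theta(u_t^2/(2+q))$ (respectively $\Theta(u_t^2/(1+p))$), where $u_t := 1-z_t$; iterating this recursion produces doubly-exponential convergence with the advertised $\lg\lg(2+q)$ (respectively $\lg\lg(2+p)$) improvement over Lemma~\ref{lm:2hom:2}.

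For the upper bound, I would start from the first-order condition $\frac{z_t}{(z_t+z_{t+1})^2} = \frac{c'(z_{t+1})}{4}$ combined with $c'(z_{t+1}) \le z_{t+1}^q$ to obtain $(z_t+z_{t+1})^2 z_{t+1}^q \ge 4 z_t$. Taking square roots and absorbing $z_t\, z_{t+1}^{q/2} \le z_t$ yields $z_{t+1}^{(2+q)/2} \ge 2\sqrt{z_t} - z_t = 1-\zeta_t^2$ with $\zeta_t := 1-\sqrt{z_t}$. Raising to the $2/(2+q)$-th power and applying the elementary estimate $1-(1-x)^r \le rx/(1-x)$ (valid for $r \in (0,1]$ and $x \in [0,1)$ via $\ln(1-x) \ge -x/(1-x)$), with $r = 2/(2+q)$ and $x = \zeta_t^2 \le 1/2$, gives $u_{t+1} \le 4\zeta_t^2/(2+q)$. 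Since $\zeta_t = u_t/(1+\sqrt{1-u_t})$ and $(1+\sqrt{1-u_t})^2 \ge (1+1/\sqrt{2})^2 > 2.9$ for $u_t \le 1/2$, this becomes the clean recursion $u_{t+1} \le K u_t^2/(2+q)$ for an absolute constant $K < 2$. Rescaling $v_t := K u_t/(2+q)$ turns this into $v_{t+1} \le v_t^2$, so $v_t \le v_0^{2^t}$ with $v_0 \le K/(2(2+q)) < 1$. Solving $u_t \le \epsilon$ then yields $t \ge \lg\lg(1/\epsilon) - \lg\lg(2+q) + O(1)$.

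For the lower bound, using $c'(z_{t+1}) \ge z_{t+1}^p$ instead gives $(2-a-b)^2(1-b)^p \le 4(1-a)$ with $a=u_t$, $b=u_{t+1}$. Because the LHS is strictly decreasing in $b$, the actual $b$ is at least the equality solution $b^*$, so it suffices to show $b^* \ge b_0 := a^2/(8(1+p))$, i.e., that the LHS at $b = b_0$ is $\ge 4(1-a)$. The main tool is the inequality $(1-b)^p \ge 1-(1+p)b$ valid for all $p \ge 0$ and $b \in [0,1]$ (by concavity of $(1-b)^p-(1-(1+p)b)$ on $[0,1]$ for $p \in [0,1]$, and by Bernoulli for $p \ge 1$). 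Combining this with $(2-a-b_0)^2 \ge (2-a)^2 - 2b_0(2-a)$ and $(2-a-b_0)^2 \le 4$, and cancelling the identity $(2-a)^2 - 4(1-a) = a^2$, reduces the required inequality to $b_0\cdot[2(2-a)+4(1+p)] \le a^2$; this holds because $2(2-a)+4(1+p) \le 8(1+p)$ for $a \le 1/2$ and $b_0 \cdot 8(1+p) = a^2$ by construction. Iterating via $v_t := u_t/(8(1+p))$ yields $v_{t+1} \ge v_t^2$, hence $v_t \ge v_0^{2^t}$ with $v_0 \ge 1/(16(1+p))$, and solving $u_t > \epsilon$ gives $t < \lg\lg(1/\epsilon) - \lg\lg(2+p) - O(1)$.

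The main obstacle is extracting the clean $1/(2+q)$ and $1/(1+p)$ denominators from the one-step recursions: a weaker absolute-constant bound would lose the $\lg\lg$ savings entirely. For the upper bound, the Bernoulli-type inequalities must be sharp enough that the constant $K$ stays below $2$, ensuring $v_0 < 1$ even at the worst starting point $u_0 = 1/2$. An alternative route, if one prefers a sloppier constant, is to first apply a single linear-improvement step using $z_1 \ge z_0^{1/(2+q)}$ to bring $u_1$ down to $O(1/(2+q))$ and then iterate a looser quadratic recursion, at the cost of an extra $O(1)$ additive term in $t$.
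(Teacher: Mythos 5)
Your proposal is correct and follows essentially the same route as the paper's proof: both combine the first-order condition with the two-sided bound $z^p \le c'(z) \le z^q$ to extract a one-step quadratic recursion whose contraction factor carries a $\Theta\!\left(\frac{1}{2+q}\right)$ (resp.\ $\Theta\!\left(\frac{1}{1+p}\right)$) denominator, and then iterate it to doubly-exponential decay. The differences are only in execution --- you work with $u_t = 1-z_t$ rather than $\zeta_t = 1-\sqrt{z_t}$, use the global estimate $1-(1-x)^r \le rx/(1-x)$ in place of the paper's Taylor-type bounds on $(1-\zeta_{t+1})^{q/2}$ (thereby avoiding the paper's case split on $\zeta_{t+1} \le \frac{1}{8q}$, which the paper resolves only with a parenthetical remark), and prove the lower-bound recursion via monotonicity of the implicit relation in $b$ rather than by direct manipulation --- so your write-up is, if anything, slightly cleaner at those points.
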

\begin{proof}
    As $z_{t+1} < 1$ for every $t$, which implies $c'(z_{t+1}) < 1$, we have
    \begin{align*}
        &\frac{z_t}{(z_t + z_{t+1})^2} = \frac{c'(z_{t+1})}{4} \implies 4 z_t \le (z_t + z_{t+1})^2 \implies z_{t+1} \ge \sqrt{z_t}(2 - \sqrt{z_t}).
    \end{align*}
    Let $\zeta_t = 1 - \sqrt{z_t} \Longleftrightarrow z_t = (1 - \zeta_t)^2$. From the above inequality, we have
    \begin{align}
        z_{t+1} \ge \sqrt{z_t}(2 &- \sqrt{z_t}) \Longleftrightarrow (1 - \zeta_{t+1})^2 \ge (1 - \zeta_t) (1 + \zeta_t) = 1 - \zeta_t^2 \nonumber \\
        &\Longleftrightarrow 1 + \zeta_{t+1}^2 - 2 \zeta_{t+1} \ge 1 - \zeta_t^2 \Longleftrightarrow 2\zeta_{t+1} \le \zeta_{t+1}^2 + \zeta_{t}^2. \label{eq:2hom:curved:1}
    \end{align}
    We will make use the inequality \eqref{eq:2hom:curved:1}, $2\zeta_{t+1} \le \zeta_{t+1}^2 + \zeta_{t}^2$, in our subsequent analysis. 
    
    We have been given that $z_{t}^p \le c'(z_{t}) \le z_{t}^q$ for all $t$. Making use of the upper bound, we get
    \begin{align*}
        &\frac{z_t}{(z_t + z_{t+1})^2} = \frac{c'(z_{t+1})}{4} \le \frac{z_{t+1}^q}{4} \implies \sqrt{z_t} \le z_{t+1}^{\frac{q}{2}} \left( \frac{z_t + z_{t+1}}{2} \right) \\
        &\Longleftrightarrow 1 - \zeta_t \le (1 - \zeta_{t+1})^{\frac{q}{2}} \left( \frac{(1 - \zeta_{t})^2 + (1 - \zeta_{t+1})^2}{2} \right) \\
        &\Longleftrightarrow 1 - \zeta_t \le (1 - \zeta_{t+1})^{\frac{q}{2}} \left( 1 - \zeta_{t} + \frac{\zeta_{t}^2 + \zeta_{t+1}^2 - 2\zeta_{t+1}}{2} \right) \\
        &\Longleftrightarrow 1 \le (1 - \zeta_{t+1})^{\frac{q}{2}} \left( 1 + \frac{\zeta_{t}^2 + \zeta_{t+1}^2 - 2\zeta_{t+1}}{2(1 - \zeta_t)} \right).
    \end{align*}
    From inequality \eqref{eq:2hom:curved:1}, $2\zeta_{t+1} \le \zeta_{t+1}^2 + \zeta_{t}^2 \implies \frac{\zeta_{t}^2 + \zeta_{t+1}^2 - 2\zeta_{t+1}}{2(1 - \zeta_t)} \le \zeta_{t}^2 + \zeta_{t+1}^2 - 2\zeta_{t+1}$ for $\zeta_t \le \frac{1}{2}$. Also, $(1 - \zeta_{t+1})^{\frac{q}{2}} \le e^{-\frac{\zeta_{t+1}q}{2}} \le 1 - \frac{\zeta_{t+1}q}{4}$ for $\zeta_{t+1} \le \frac{1}{8q}$. 
    (If $\zeta_{t+1} > \frac{1}{8q}$, then we can use an argument similar to Lemma~\ref{lm:2hom:curved:1} to reach $\zeta_{t+1} \le \frac{1}{8q}$ in a constant number of steps from $\zeta_0 \le \frac{1}{2}$.) 
    Putting things together, we get
    \begin{align*}
        &1 \le \left(1 - \frac{q\zeta_{t+1}}{4} \right) \left( 1 + \zeta_{t}^2 + \zeta_{t+1}^2 - 2\zeta_{t+1} \right) \\
        &\implies \zeta_{t+1} \left(2 + \frac{q}{4} \right) \le (\zeta_{t}^2 + \zeta_{t+1}^2) - \frac{q\zeta_{t+1}}{4} (\zeta_{t}^2 + \zeta_{t+1}^2 - 2\zeta_{t+1}) \\
        &\implies \zeta_{t+1} \le \frac{8}{8 + q} \zeta_{t}^2 \implies \zeta_{t} \le \left(\frac{8}{8 + q}\right)^{2^t-1} \zeta_{0}^{2^t}. 
    \end{align*}
    From the initial condition, we have $z_0 \ge \frac{1}{2} \implies \zeta_0 \le 1 - \frac{1}{\sqrt{2}} \le \frac{1}{2}$. We want $z_t \ge 1 - \epsilon \impliedby \sqrt{z_t} \ge \sqrt{1 - \epsilon}$. Using standard inequalities, $\sqrt{1 - \epsilon} \le \sqrt{e^{-\epsilon}} = e^{-\epsilon/2} \le 1 - \epsilon/4$ for every $\epsilon \le 1$. So, $\sqrt{z_t} \ge \sqrt{1 - \epsilon}$ is implied by $\sqrt{z_t} \ge 1 - \epsilon/4$. Putting everything together,
    \begin{multline*}
        z_t \ge 1 - \epsilon \impliedby \sqrt{z_t} \ge 1 - \frac{\epsilon}{4} \Longleftrightarrow \zeta_t \le \frac{\epsilon}{4} \impliedby \left(\frac{8}{8 + q}\right)^{2^t-1} \zeta_{0}^{2^t} \le \frac{\epsilon}{4} \\
        \impliedby \left(\frac{8 \zeta_{0}}{8 + q}\right)^{2^{t-1}} \le \frac{\epsilon}{4} \Longleftrightarrow 2^{t-1} \lg\left(\frac{8\zeta_0}{8+q}\right) \le \lg(\epsilon) - 2 \\
        \Longleftrightarrow 2^{t-1} \lg\left(\frac{1+q/8}{\zeta_0}\right) \ge 2 + \lg\left(\frac{1}{\epsilon}\right) 
        \impliedby t \ge \lg\lg\left(\frac{1}{\epsilon}\right) - \lg\lg\left(2+q\right) + O(1).
    \end{multline*}

    Let us now follow similar steps to derive the lower bound. Using $c'(z_{t}) \ge z_{t}^p$, we have
    \begin{align*}
        &\frac{z_t}{(z_t + z_{t+1})^2} = \frac{c'(z_{t+1})}{4} \ge \frac{z_{t+1}^p}{4} \implies \sqrt{z_t} \ge z_{t+1}^{\frac{p}{2}} \left( \frac{z_t + z_{t+1}}{2} \right) \\
        &\Longleftrightarrow 1 - \zeta_t \ge (1 - \zeta_{t+1})^{\frac{p}{2}} \left( \frac{(1 - \zeta_{t})^2 + (1 - \zeta_{t+1})^2}{2} \right) \\
        &\Longleftrightarrow 1 - \zeta_t \ge (1 - \zeta_{t+1})^{\frac{p}{2}} \left( 1 - \zeta_{t} + \frac{\zeta_{t}^2 + \zeta_{t+1}^2 - 2\zeta_{t+1}}{2} \right) \\
        &\Longleftrightarrow 1 \ge (1 - \zeta_{t+1})^{\frac{p}{2}} \left( 1 + \frac{\zeta_{t}^2 + \zeta_{t+1}^2 - 2\zeta_{t+1}}{2(1 - \zeta_t)} \right).
    \end{align*}
    From inequality \eqref{eq:2hom:curved:1}, $2\zeta_{t+1} \le \zeta_{t+1}^2 + \zeta_{t}^2 \implies \frac{\zeta_{t}^2 + \zeta_{t+1}^2 - 2\zeta_{t+1}}{2(1 - \zeta_t)} \ge \frac{\zeta_{t}^2 + \zeta_{t+1}^2 - 2\zeta_{t+1}}{2}$ for $\zeta_t \ge 0$. Also, $(1 - \zeta_{t+1})^{\frac{p}{2}} \ge e^{-\zeta_{t+1}p} \ge 1 -\zeta_{t+1}p$ for $\zeta_{t+1} \le \frac{1}{2}$. Putting things together, we get
    \begin{align*}
        &1 \ge \left(1 - p \zeta_{t+1} \right) \left( 1 + \frac{\zeta_{t}^2 + \zeta_{t+1}^2 - 2\zeta_{t+1}}{2} \right) \\
        &\implies \zeta_{t+1} \left(p + 1 \right) \ge \frac{\zeta_{t}^2 + \zeta_{t+1}^2}{2} - \frac{p\zeta_{t+1}}{2} (\zeta_{t}^2 + \zeta_{t+1}^2 - 2\zeta_{t+1}) \\
        &\implies \zeta_{t+1} \left(\frac{3p}{2} + 1 \right) \ge \frac{\zeta_{t}^2 + \zeta_{t+1}^2}{2} \implies \zeta_{t+1} \ge \frac{1}{2 + 3p} \zeta_{t}^2 \implies \zeta_{t} \ge \left(\frac{1}{2 + 3p}\right)^{2^t} \zeta_{0}^{2^t}.
    \end{align*}
    From the initial condition, we have $z_0 \le \frac{1}{2} \implies \zeta_0 \ge 1 - \frac{1}{\sqrt{2}} \ge \frac{1}{4}$. We want $z_t < 1 - \epsilon \impliedby \sqrt{z_t} < \sqrt{1 - \epsilon}$. As $\sqrt{1 - \epsilon} \ge 1 - \epsilon$ for every $\epsilon \le 1$, so $\sqrt{z_t} < \sqrt{1 - \epsilon}$ is implied by $\sqrt{z_t} < 1 - \epsilon$.
    \begin{multline*}
        z_t < 1 - \epsilon \impliedby \sqrt{z_t} < 1 - \epsilon \Longleftrightarrow \zeta_t > \epsilon \impliedby \left(\frac{1}{2 + 3p}\right)^{2^t} \zeta_{0}^{2^t} > \epsilon \\
        \Longleftrightarrow 2^{t} \lg\left(\frac{\zeta_0}{2+3p}\right) > \lg(\epsilon) \Longleftrightarrow 2^{t} \lg\left(\frac{2+3p}{4}\right) < \lg\left(\frac{1}{\epsilon}\right) \\
        \impliedby t < \lg\lg\left(\frac{1}{\epsilon}\right) - \lg\lg\left(2+p\right) - O(1).
    \end{multline*}
\qed\end{proof}
Lemmas~\ref{lm:2hom:curved:1}~and~\ref{lm:2hom:curved:2} give tight lower and upper bounds on the time required for $z_t$ to reach $1-\epsilon$ starting from $\gamma$. Applying Lemma~\ref{lm:lipschitz2} completes the proof for the upper bound. For the lower bound, notice that if $z_t \in [\frac{1}{2}, 1-\epsilon]$ and agent $i$ makes the move at time $t$, then $BR(x_{t,-i}) -  x_{t,i} = z_{t+1} -  z_{t-1} \ge z_{t+1} - z_t \ge \Omega(\epsilon)$ using arguments given in Lemma~\ref{lm:2hom:2} (if $z_t = 1 - \zeta \le 1 - \epsilon$, then $z_{t+1} \ge 1 - \zeta^2$, which implies $z_{t+1} - z_t \ge \zeta - \zeta^2 = \Omega(\zeta) = \Omega(\epsilon)$ for $\zeta \le 1/2$); applying Lemma~\ref{lm:revlipschitz} completes the proof.
\qed\end{proof}

\section{Proofs from Section~\ref{sec:dissum}}
\begin{proof}[Lemma~\ref{lm:dissum:random}, remaining portion]
We now bound the rate of convergence. We will show that for every pair of consecutive time steps, the expected value of the potential decreases by at least a multiplicative factor of $(1 - \frac{\kappa (1-B) L^2}{\log(n)})$ for some constant $\kappa > 0$.

Let us look at two consecutive time steps $t$ and $t+1$. Let w.l.o.g. $f(\bz_t) > 0$ and $V_t \ge W_t$. So, $f(\bz_t) = V_t > 0$. Using Lemma~\ref{lm:partition} below, we know that that there exists $k, \ell \in [n]$ such that there are at least $k$ agents $i \in \cV_t$ with $v_{t,i} \ge \frac{V_t}{4 k \lg(n)}$ and at least $\ell$ agents $j \in \cW_t$ with $w_{t,j} \ge \frac{W_t}{4 \ell \lg(n)}$. 
\begin{lemma}\label{lm:partition}
    For any $(p_i)_{i \in [n]}$ with $p_i \ge 0$ for all $i \in [n]$ and $\sum_{i \in [n]} p_i = 1$, there exists $k \in [n]$ such that $ | \{ i \in [n] \mid p_i \ge \frac{1}{4 k \lg(n)} \} | \ge k$.
\end{lemma}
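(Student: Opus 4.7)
The plan is to prove Lemma~\ref{lm:partition} by contradiction, using a sorting argument together with the harmonic-sum bound. The intuition is that if no threshold of the form $\frac{1}{4k\lg n}$ is exceeded by $k$ or more of the $p_i$'s, then the sorted probabilities must decay so fast that their total sum falls below $1$.

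Concretely, I would first sort the values as $p_{(1)} \ge p_{(2)} \ge \dots \ge p_{(n)}$. Observe that the condition $|\{i \in [n] \mid p_i \ge \frac{1}{4k\lg n}\}| \ge k$ is equivalent to $p_{(k)} \ge \frac{1}{4k\lg n}$. So the lemma is equivalent to asserting that there exists some $k \in [n]$ with $p_{(k)} \ge \frac{1}{4k\lg n}$. Suppose instead that $p_{(k)} < \frac{1}{4k \lg n}$ for every $k \in [n]$.

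Then I would sum these bounds to obtain
\[
1 = \sum_{i \in [n]} p_i = \sum_{k=1}^n p_{(k)} < \sum_{k=1}^n \frac{1}{4 k \lg n} = \frac{1}{4 \lg n} \sum_{k=1}^n \frac{1}{k}.
\]
Using the standard harmonic bound $\sum_{k=1}^n 1/k \le 1 + \ln n = 1 + (\ln 2)\lg n$, the right-hand side is at most $\frac{1}{4\lg n} + \frac{\ln 2}{4} \le \frac{1}{4} + \frac{\ln 2}{4} < 1$ for $n \ge 2$, yielding a contradiction. The case $n=1$ is trivial since $p_1 = 1$ and one can take $k=1$ with any positive threshold (or simply assume $n \ge 2$, which is the regime of interest in the paper since $\lg n$ appears in the denominator).

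The main (and only) subtlety is verifying that the constant $4$ in the threshold is large enough; this is exactly what the harmonic sum bound above gives, with room to spare. There is no real obstacle here — the argument is essentially a pigeonhole / layered-counting computation, and the constant was chosen precisely so that $\sum_k \frac{1}{4k \lg n} < 1$ for all $n \ge 2$.
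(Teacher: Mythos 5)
Your proof is correct, and it takes a genuinely different route from the paper's. The paper also argues by contradiction, but instead of sorting it partitions the values into dyadic buckets $[\tfrac{1}{2^{j+2}\lg n}, \tfrac{1}{2^{j+1}\lg n})$, uses the contradiction hypothesis to bound the number of values in each bucket by $2^j$, and sums the at most $\lceil\lg n\rceil$ bucket contributions of $\tfrac{1}{2\lg n}$ each to get a total below $1$. Your key observation --- that the assertion $|\{i : p_i \ge \tfrac{1}{4k\lg n}\}| \ge k$ is equivalent to $p_{(k)} \ge \tfrac{1}{4k\lg n}$ for the sorted sequence --- lets you skip the bucketing entirely and compare $\sum_k p_{(k)} = 1$ directly against $\tfrac{1}{4\lg n}\sum_k \tfrac{1}{k}$, which the harmonic bound caps at $\tfrac{1}{4} + \tfrac{\ln 2}{4} < 1$ for $n \ge 2$. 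This is shorter, avoids the slightly delicate bookkeeping of the leftover bucket of tiny values, and in fact shows the constant $4$ could be reduced to $2$ (the binding case being $n=2$, where $\tfrac{1+\ln n}{\lg n} \approx 1.69$); the paper's dyadic version is the kind of argument that generalizes more readily when one only has multiplicative control on level sets rather than a full ordering, but here nothing is lost by sorting. Your handling of $n=1$ (where $\lg n = 0$ makes the threshold degenerate) is the right call; the paper's own proof implicitly assumes $n \ge 2$ as well, and the lemma is only invoked for $n \ge 3$.
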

Say at time $t$, we pick one of the $k$ agents in $\cV_t$, say agent $i$, with $v_{t,i} \ge \frac{V_t}{4 k \lg(n)}$, and at time $t+1$, we pick one of the $\ell$ agents in $\cW_t$, say agent $j$, with $w_{t,j} \ge \frac{W_t}{4 \ell \lg(n)}$. We have the following possible scenarios:
\begin{enumerate}
    \item \label{lm:dissum:random:1} $W_t \le V_t - v_{t,i}$. Then agent $i$ will move from $\cV_t$ to $\cW_{t+1}$ after the transition and will have an updated value of $v_{t+1,i} = 0$ and $w_{t+1,i} \le B(V_t - v_{t,i} - W_t)$. So, the potential at time $t+1$ is
    \begin{align*}
        f(\bz_{t+1}) &\le \max(V_t - v_{t,i}, W_t + B(V_t - v_{t,i} - W_t)) \\
        &= \max(V_t - v_{t,i}, (1-B) W_t + B(V_t - v_{t,i})) \\
        &= V_t - v_{t,i} \le \left(1 - \frac{1}{4 k \lg(n)} \right) V_t = \left(1 - \frac{1}{4 k \lg(n)} \right) f(\bz_t).
    \end{align*}
    The probability that one of these $k$ agents with $v_{t,i} \ge \frac{V_t}{4 k \lg(n)}$ is picked is at least $kL$, so the expected value of the potential at time $t+1$ is $\Exp[f(\bz_{t+1}) | \bz_t] \le \left(1 - \frac{L}{4 \lg(n)} \right) f(\bz_t) \implies \Exp[f(\bz_{t+2}) | \bz_t] \le \Exp[f(\bz_{t+1}) | \bz_t] \le \left(1 - \frac{L}{4 \lg(n)} \right) f(\bz_t)$.


    \item \label{lm:dissum:random:2} $W_t > V_t - v_{t,i}$. As $W_t$ can be very close to $V_t$, we will need two transitions to guarantee sufficient progress. 
    As $v_{t,i} > V_t - W_t$, agent $i$ will stay in $\cV_{t+1}$ after the transition and will have an updated value of $v_{t+1,i} \le B (W_t - (V_t - v_{t,i}))$. We can bound $V_{t+1}$ as
    \begin{align*}
        V_{t+1} &\le V_t - v_{t,i} + B (W_t - (V_t - v_{t,i})) = (1-B) V_t + B W_t - (1-B)v_{t,i} \\
            &\le V_t - (1-B) v_{t,i} \le \left(1 - \frac{1-B}{4 k \lg(n)} \right) V_t.
    \end{align*}
    Now, at time $t+1$, agent $j$ with $w_{t+1,j} = w_{t,j} \ge \frac{W_t}{4 \ell \lg(n)} = \frac{W_{t+1}}{4 \ell \lg(n)}$ makes a move. We can have the following sub-cases depending upon the values of $w_{t+1,j}$, $W_{t+1}$, and $V_{t+1}$:
    \begin{enumerate}
        \item \label{lm:dissum:random:21} $V_{t+1} < W_{t+1} - w_{t+1,j}$. Then agent $j$ will move from $\cW_{t+1}$ to $\cV_{t+2}$ after the transition and will have an updated value of $w_{t+2,j} = 0$ and $v_{t+2,j} \le B(W_{t+1} - w_{t+1,j} - V_{t+1})$. So, the potential at time $t+2$ is
        \begin{align*}
            f(\bz_{t+2}) &\le \max(V_{t+1} + B(W_{t+1} - w_{t+1,j} - V_{t+1}), W_{t+1} - w_{t+1,j}) \\
            &= \max((1-B) V_{t+1} + B (W_{t+1} - w_{t+1,j}), W_{t+1} - w_{t+1,j}) \\
            &= W_{t+1} - w_{t+1,j} \le \left(1 - \frac{1}{4 \ell \lg(n)} \right) W_{t+1} \\
            &= \left(1 - \frac{1}{4 \ell \lg(n)} \right) f(\bz_{t+1}) \le \left(1 - \frac{1}{4 \ell \lg(n)} \right) f(\bz_t).
        \end{align*}

        \item \label{lm:dissum:random:22} $V_{t+1} \ge W_{t+1} - w_{t+1,j}$. Then agent $j$ will stay in $\cW_{t+2}$ after the transition and will have an updated value of $w_{t+2,j} \le B (V_{t+1} - (W_{t+1} - w_{t+1,j}))$. So, the potential at time $t+2$ is
        \begin{align*}
            f(\bz_{t+2}) &\le \max(V_{t+1}, W_{t+1} - w_{t+1,j} + B (V_{t+1} - (W_{t+1} - w_{t+1,j}))) \\
            &= \max(V_{t+1}, (1-B) (W_{t+1} - w_{t+1,j}) + B V_{t+1}) = V_{t+1} \\
            &\le \left(1 - \frac{1-B}{4 k \lg(n)} \right) V_t = \left(1 - \frac{1-B}{4 k \lg(n)} \right) f(\bz_t).
        \end{align*}        
    \end{enumerate}
    Now, combining the above two cases, we have 
    \begin{align*}
        f(\bz_{t+2}) &\le \max\left( \left(1 - \frac{1}{4 \ell \lg(n)} \right) f(\bz_t), \left(1 - \frac{1-B}{4 k \lg(n)} \right) f(\bz_t) \right) \\
        =& f(\bz_t) - f(\bz_t) \frac{1}{4 k \ell \lg(n)} \min( k, (1-B) \ell  ) \le \left( 1 - \frac{1-B}{4 k \ell \lg(n)} \right) f(\bz_t).
    \end{align*}
    Now, the probability that one of the $k$ agents with $v_{t,i} \ge \frac{V_t}{4 k \lg(n)}$ is picked at time $t$ is at least $kL$ and that one of the $\ell$ agents with $w_{t,i} \ge \frac{W_t}{4 \ell \lg(n)}$ is picked at time $t+1$ is at least $\ell L$, so the probability for the pair of transitions is at least $k \ell L^2$. So, the expected value of the potential is $\Exp[f(\bz_{t+2}) | \bz_t] \le \left(1 - \frac{(1-B) L^2}{4 \lg(n)} \right) f(\bz_t)$.
\end{enumerate}
\sloppy
Putting everything together, we have $\Exp[f(\bz_{t+2}) | \bz_t] \le \left(1 - \frac{(1-B) L^2}{4 \lg(n)} \right) f(\bz_t) \implies \Exp[f(\bz_{t+2})] = \Exp[\Exp[f(\bz_{t+2}) | \bz_t]] \le \left(1 - \frac{(1-B) L^2}{4 \lg(n)} \right) \Exp[ f(\bz_t) ]$. Therefore, after $T = \frac{8 \lg(n)}{(1-B) L^2} \ln\left(\frac{2 f(\bz_0)}{\epsilon\delta} \right)$ time steps, we have
\begin{align*}
    \Exp[f(\bz_T)] &\le \left(1 - \frac{(1-B) L^2}{4 \lg(n)} \right)^{T/2} f(\bz_0) \le e^{- \frac{(1-B) L^2}{4 \lg(n)} \frac{T}{2}} f(\bz_0) \\
    &= e^{-\ln\left(\frac{2 f(\bz_0)}{\epsilon\delta} \right)} f(\bz_0) \le \frac{\epsilon \delta}{2}.
\end{align*}
Finally, using Markov inequality we have $\Prob[f(\bz_{ T }) \ge \frac{\epsilon}{2}] \le \frac{2\Exp[f(\bz_{ T })]}{\epsilon} = \delta$. And it is easy to check that $|| \bz_T ||_1 \le 2 f(\bz_{ T }) < 2\frac{\epsilon}{2} = \epsilon$.
\qed\end{proof}
\begin{proof}[Lemma~\ref{lm:partition}]
We prove the theorem by contradiction. Let us assume that for every $k \in [n]$ there are strictly less than $k$ coordinates $i \in [n]$ such that $p_i \ge \frac{1}{4k\lg(n)}$. Then we have
\begin{align*}
    \sum_i p_i = 1 &= \sum_i \bone\left( \frac{1}{4 \lg(n)} \le v_i \le 1 \right) v_i \\
    &\qquad + \sum_i \sum_{j = 1}^{\lceil \lg(n) \rceil - 1} \bone\left( \frac{1}{2^{j+2} \lg(n)} \le v_i < \frac{1}{2^{j+1} \lg(n)} \right) v_i \\
    &\qquad + \sum_i \bone\left( 0 \le v_i < \frac{1}{2^{\lceil \lg(n) \rceil+1} \lg(n)} \right) v_i \\
    &\le \sum_i \bone\left( \frac{1}{4 \lg(n)} \le v_i \le 1 \right) \\
    &\qquad + \sum_i \sum_{j = 1}^{\lceil \lg(n) \rceil - 1} \bone\left( \frac{1}{2^{j+2} \lg(n)} \le v_i < \frac{1}{2^{j+1} \lg(n)} \right) \frac{1}{2^{j+1} \lg(n)} \\
    &\qquad + \sum_i \bone\left( 0 \le v_i < \frac{1}{2^{\lceil \lg(n) \rceil+1} \lg(n)} \right) \frac{1}{2^{\lceil \lg(n) \rceil+2} \lg(n)} \\
    &< \sum_{j = 1}^{\lceil \lg(n) \rceil - 1} \frac{2^j}{2^{j+1} \lg(n)} + \frac{n}{2^{\lceil \lg(n) \rceil+1} \lg(n)}, \\
    &\qquad \text{ as there are $ < k$ values $\ge \frac{1}{4 k \lg(n)}$ and at most $n$ values total,}  \\
    &\le \frac{\lceil \lg(n) \rceil}{2 \lg(n)} < 1.
\end{align*}
\qed\end{proof}
\begin{proof}[Lemma~\ref{lm:dissum:bestcase}]
The proof for the upper bound is a simplified version of the argument used in the proof of Lemma~\ref{lm:dissum:random}. As we do not have any randomness, we can deterministically pick the agent $i_t$ at time $t$. In particular, we pick the largest element in the larger side of $f(\bz_t)$ (where the two sides correspond to the sum of the positive and the negative values of $\bz_t$).

Formally, let $V_t = \sum_{j} \bone(z_{t,j} > 0) z_{t,j}$ and $W_t = \sum_{j} \bone(z_{t,j} < 0) z_{t,j}$. W.l.o.g. let $V_t \ge W_t$. So, $f(\bz_t) = V_t$. We pick $i_t = \argmax_{j \mid z_{t,j} > 0 } z_{t,j} \ge \frac{V_t}{n}$ and (if required) pick $i_{t+1} = \argmax_{j \mid z_{t,j} < 0 } (- z_{t,j}) \ge \frac{W_t}{n}$. We have the following cases based on the value of $z_{t,i_t}$, $V_t$, and $W_t$:
\begin{itemize}
    \item $W_t \le V_t - z_{t,i_t}$. Following the same steps as the proof of Lemma~\ref{lm:dissum:random} case \eqref{lm:dissum:random:1}, we have $f(\bz_{t+1}) \le V_t - z_{t,i_t} \le (1-\frac{1}{n}) V_t = (1-\frac{1}{n}) f(\bz_{t})$.
    \item $W_t > V_t - z_{t,i_t}$. Following the same steps as the proof of Lemma~\ref{lm:dissum:random} case \eqref{lm:dissum:random:2}, we have $V_{t+1} \le (1 - \frac{1-B}{n}) V_t$. Then either by following Lemma~\ref{lm:dissum:random} case \eqref{lm:dissum:random:21} we have $f(\bz_{t+2}) \le (1 - \frac{1}{n}) f(\bz_{t+1}) \le (1 - \frac{1}{n}) f(\bz_{t})$ or by following Lemma~\ref{lm:dissum:random} case \eqref{lm:dissum:random:22} we have $f(\bz_{t+2}) \le V_{t+1} \le (1 - \frac{1-B}{n}) f(\bz_{t})$.
\end{itemize}
Overall, we have the bound $f(\bz_{t+2}) \le (1 - \frac{1-B}{n}) f(\bz_{t})$. So, after $T = \frac{2n}{1-B} \ln(\frac{2f(\bz_0)}{\epsilon})$ steps, we have 
\begin{multline*}
    f(\bz_T) \le \left(1 - \frac{1-B}{n}\right)^{T/2} f(\bz_0) = e^{-\frac{(1-B)T}{2n}} f(\bz_0) \\
    = e^{-\ln\left(\frac{2f(\bz_0)}{\epsilon}\right)} f(\bz_0) = \frac{\epsilon}{2} \implies || \bz_1 || \le \epsilon.
\end{multline*}

Let us now prove the lower bound using two simple examples. 
In the first example, let $z_{0,i} = 1$ for all $i \in [n]$. $f(\bz_0) = n$. For $\epsilon < 1$, we will need every agent to play at least once to get $f(\bz_t) \le \epsilon < 1$. So, we need at least $n$ steps.

In the second example, let $z_{0,1} = z_{0,2} = \kappa$ for arbitrary $\kappa > 0$. $f(\bz_0) = \kappa$. Also let $\beta_{t,i}(\bz_t) = B \ge \frac{1}{2}$ for all $t$, $i$, and $\bz_t$. It can be easily checked that the fastest way to decrease the potential is to pick agents $1$ and $2$ alternately. And by doing that we have $f(\bz_{t+1}) = B f(\bz_t)$ for all $t \ge 1$, which implies $f(\bz_t) \ge B^{t-1} \kappa$, so
\begin{multline*}
    f(\bz_t) > \epsilon \impliedby B^{t-1} \kappa > \epsilon \Longleftrightarrow t-1 < \frac{1}{\ln(1/B)} \ln\left(\frac{\kappa}{\epsilon} \right) \\
    \impliedby t-1 < \frac{1}{2(1-B)} \ln\left(\frac{\kappa}{\epsilon} \right),
\end{multline*}
where the last inequality holds for $B \ge \frac{1}{2}$ because 
$2(1-B) = \ln(e^{2(1-B)}) \ge \ln(1 + 2(1-B)) \ge \ln(1 + \frac{1-B}{B}) = \ln(\frac{1}{B})$.
\qed\end{proof}

\section{Proofs from Section~\ref{sec:hom_n_ub}}
\begin{proof}[Lemma~\ref{lm:coupon}]
Let $\tau$ be the time it takes to collect all $n$ coupons in the coupon collector problem where each coupon is selected w.p. $1/n$, the following result is well-known.
\begin{lemma}\cite{mitzenmacher2017probability}\label{lm:coupon:unif}
    For any constant $c > 0$, we have the following high probability bounds: (i) upper bound, $\Prob[\tau > n \ln n + cn] < e^{-c}$; (ii) lower bound, $\Prob[\tau < n \ln n - cn] < e^{-c}$.
\end{lemma}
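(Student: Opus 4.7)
The plan is to prove both bounds from the elementary observation that a coupon, chosen uniformly at each step independently, fails to appear by time $m$ with probability exactly $(1-1/n)^m$. The upper bound will follow from a union bound, while the lower bound requires a second-moment (Chebyshev) argument because a single-coupon analysis with a union bound goes the wrong way.

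For the upper bound, set $m = n \ln n + cn$, and for each coupon $i \in [n]$ let $B_i$ be the event ``coupon $i$ has not been drawn in the first $m$ steps''. Independence of draws gives $\Prob[B_i] = (1 - 1/n)^m \le e^{-m/n} = e^{-\ln n - c} = e^{-c}/n$. A union bound then gives
\[
    \Prob[\tau > m] = \Prob\Bigl[ \bigcup_{i \in [n]} B_i \Bigr] \le \sum_{i} \Prob[B_i] \le n \cdot e^{-c}/n = e^{-c}.
\]

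For the lower bound, set $m = n \ln n - cn$ and let $Y_i = \bone[\text{coupon } i \text{ uncollected by time } m]$, $Y = \sum_{i \in [n]} Y_i$. Then $\{\tau \le m\} = \{Y = 0\}$. Compute $\mu := \Exp[Y] = n (1 - 1/n)^m$; using $(1 - 1/n) \ge e^{-1/n - 1/n^2}$ (or a similar two-sided bound on $\ln(1-1/n)$) one checks that $\mu \ge e^c$ up to lower-order terms in $n$. For the covariance, note that $\Exp[Y_i Y_j] = (1 - 2/n)^m$ while $\Exp[Y_i]\Exp[Y_j] = (1 - 1/n)^{2m}$, and since $1 - 2/n \le (1-1/n)^2$, we get $\mathrm{Cov}(Y_i, Y_j) \le 0$. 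Therefore $\mathrm{Var}(Y) \le \sum_i \mathrm{Var}(Y_i) \le \sum_i \Exp[Y_i] = \mu$. Chebyshev now yields
\[
    \Prob[Y = 0] \le \Prob\bigl[ |Y - \mu| \ge \mu \bigr] \le \frac{\mathrm{Var}(Y)}{\mu^2} \le \frac{1}{\mu} \le e^{-c},
\]
which is the desired bound.

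The main subtlety I expect is on the lower-bound side: the inequality $\mu \ge e^c$ is only true up to an $(1 - o(1))$ factor for finite $n$, so to get the clean non-asymptotic statement one has to either tighten the elementary estimate of $(1-1/n)^m$ or appeal to the classical asymptotic $\Prob[\tau \le n \ln n + cn] \to e^{-e^{-c}}$ (from which the claim follows since $e^{-e^{c}} \le e^{-c}$). This is presumably why the authors cite the Mitzenmacher--Upfal textbook rather than reprove the result; the sketch above is the standard argument behind that citation.
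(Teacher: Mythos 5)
The paper does not actually prove this lemma---it is quoted from Mitzenmacher and Upfal---so there is no in-paper argument to compare against; the surrounding discussion of the decomposition $\tau=\sum_i\tau_i$ into geometric waiting times is only used later to transfer the bound to the $L$-parameterized selection process in Lemma~\ref{lm:coupon}, not to establish the lemma itself. Your sketch is the standard argument behind the citation and is essentially correct: the union bound over the events $B_i$ gives the upper tail exactly as stated, and the second-moment method (pairwise covariances of the indicators $Y_i$ are non-positive because $1-2/n\le(1-1/n)^2$, hence $\mathrm{Var}(Y)\le\Exp[Y]=\mu$, followed by Chebyshev) gives the lower tail. The one genuine caveat is the one you already flag: with $m=n\ln n-cn$ one only gets $\mu=n(1-1/n)^m\ge e^{c}(1-o(1))$, since $\ln(1-1/n)=-1/n-\Theta(1/n^2)$, so the clean conclusion $\Prob[Y=0]\le 1/\mu\le e^{-c}$ holds only after absorbing a $(1+o(1))$ factor or replacing $c$ by $c-O(\log n/n)$. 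That slack is immaterial for this paper, where the lemma feeds exclusively into $O(\cdot)$ and $\Omega(\cdot)$ statements (Lemma~\ref{lm:coupon}, Lemma~\ref{lm:part1}, and the $\Omega(\frac{1}{L}\log(n\delta))$ lower bound), and your fallback to the limiting distribution $\Prob[\tau\le n\ln n+cn]\to e^{-e^{-c}}$ is a legitimate way to recover the exact form if needed.
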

The analysis underlying Lemma~\ref{lm:coupon:unif} goes as follows: The time to collect all coupons $\tau$ can be decomposed as $\tau = \sum_i \tau_i$, where $\tau_i$ is the time it takes to collect the $i$-th coupon. After $(i-1)$ coupons have been collected, the probability that we get a coupon that has not yet been collected in the next time step is equal to $p_i = \frac{n - i + 1}{n}$. So, $\tau_i$ corresponds to the time till the first head of a geometric random variable with parameter $p_i$.  In particular, $p_1 = 1$, $p_2 = (n-1)/n$, $p_3 = (n-2)/n$, and so on.

For the best-response dynamics, let $T = \sum_i T_i$ denote the time it takes for every agent to play at least once, where $T_i$ is the time between the $(i-1)$-th and the $i$-th unique agent. Remember that we are doing a worst-case analysis over the random selection process parameterized by $L$. In the first time step, we get the first unique agent w.p. $1$. In the second time step, the probability of selecting a new agent is at least $(n-1) L = \frac{(n-1) L n}{n} = L n p_2$. Similarly, we can show that for all $i$, after exactly $(i-1)$ agents have played at least once, the probability that we select an agent who has not yet played in the next time step is at least $L n p_i$. This implies that $\Prob[T_i \le k] \le Ln \Prob[\tau_i \le k]$ for all $i \in [n]$ and $k \ge 1$. Using the upper bound in Lemma~\ref{lm:coupon:unif}, we get $\Prob[T > \frac{1}{L}(\ln n + c)] < e^{-c}$, setting $c = \ln(1/\delta)$, we get $\Prob[T \le \frac{1}{L}(\ln n + \ln (\frac{1}{\delta}))] \ge 1 - \delta$ as required.

The idea for the lower bound is similar. As discussed earlier, after exactly $(i-1)$ unique agents have played, the total probability that the selection process can assign to the agents who have already played, in worst-case, is equal to $1 - (n-(i-1))L = 1 - nL p_i$. So, using the lower bound of Lemma~\ref{lm:coupon:unif}, we get $\Prob[T \ge \frac{1}{L} \log(n \delta)] \ge 1 - \delta$.
\qed\end{proof}
\begin{proof}[Lemma~\ref{lm:prop1}]
    Let's first show that $s_t > 0$ for $t \ge 1$. For contradiction, say $s_t = 0$ for some $t \ge 1$. Let $i = i_{t-1}$ denote the agent that moved at time $t-1$. Notice that $s_t = 0 \implies s_{t-1,-i} = s_{t,-i} \le s_t = 0$. If $s_{t-1,-i} = 0$, then as a best response to this, agent $i$ must have played $x_{t,i} = a > 0$. So, we must have $s_t > 0$. Contradiction.

    Let us now prove that for any $t$, if $s_t < \frac{n^2}{(n-1) c'(0)}$, then $s_{t + 1} < \frac{n^2}{(n-1) c'(0)}$. If $c'(0) = 0$, e.g., when $c'(y) = y^r$ for some $r > 0$, then the inequality is satisfied trivially because $\frac{n^2}{(n-1) c'(0)} = \infty$. Assuming $c'(0) > 0$, we have the following two cases
    \begin{itemize}
        \item if $s_{t, -i_t} = 0$, then $s_{t + 1} = x_{t+1, i_{t}} = a < 1 <  \frac{n^2}{(n-1) c'(0)}$;
        \item if $0 < s_{t, -i_t} \le s_{t} < \frac{n^2}{(n-1) c'(0)}$, then from the first-order condition we have 
        \begin{multline*}
            \frac{s_{t, -i_t}}{(x_{t+1, i_{t}} + s_{t, -i_t})^2} - \frac{n-1}{n^2} c'(x_{t+1, i_{t}}) = 0 \\
            \implies \frac{s_{t, -i_t}}{(x_{t+1, i_{t}} + s_{t, -i_t})^2} = \frac{s_{t, -i_t}}{s_{t+1}^2} = \frac{n-1}{n^2} c'(x_{t+1, i_{t}}) \ge \frac{n-1}{n^2} c'(0) \\
            \implies s_{t+1} \le \sqrt{ s_{t, -i_t} \frac{n^2}{(n-1) c'(0)} } < \frac{n^2}{(n-1) c'(0)}.
        \end{multline*}
    \end{itemize}

    Let us now assume that at time $t$, $x_{t,i} > 0$, $x_{t,j} > 0$, and $s_t < \frac{n^2}{(n-1) c'(0)}$. Agent $i_t$ makes the move at time $t$. We know that $s_{t, -i_t} \ge \min( x_{t,i}, x_{t,j} ) > 0$ and $s_{t, -i_t} < s_{t} < \frac{n^2}{(n-1) c'(0)}$. Let us look at the first-order condition for agent $i_t$,
    \begin{align*}
        \frac{s_{t, -i_t}}{(x_{t+1, i_t} + s_{t, -i_t})^2} - \frac{n-1}{n^2} c'(x_{t+1, i_t}) = 0.
    \end{align*}
    If $c'(0) = 0$, then
    \[
        \frac{s_{t, -i_t}}{(0 + s_{t, -i_t})^2} - \frac{n-1}{n^2} c'(0) = \frac{1}{s_{t, -i_t}} > 0,
    \]
    so the BR must be strictly positive, i.e., $x_{t+1, i_t} > 0$, to satisfy the first-order condition (as the utility function is concave and its derivative decreasing). On the other hand, if $c'(0) > 0$, as $s_{t, -i_t} < \frac{n^2}{(n-1) c'(0)}$, we have
    \begin{align*}
        s_{t, -i_t} < \frac{n^2}{(n-1) c'(0)} \Longleftrightarrow  \frac{s_{t, -i_t}}{(0 + s_{t, -i_t})^2} - \frac{n-1}{n^2} c'(0) > 0.
    \end{align*}
    The same argument as before implies that $x_{t+1, i_t} > 0$ to ensure that the first-order condition is satisfied with equality.
\qed\end{proof}
\begin{proof}[Lemma~\ref{lm:part1}]
Let $T_1$ be a random variable that denotes the time it takes for every agent to make at least one move. We next prove that conditions \eqref{def:warmup:3} and \eqref{def:warmup:1} of the warm-up phase (Definition~\ref{def:warmup}) are satisfied for $t \ge T_1$.

Condition \eqref{def:warmup:3}.
Notice that Lemma~\ref{lm:prop1} \eqref{lm:prop1:1} implies $s_t > 0$ for $t \ge T_1 \ge n \ge 1$. For conciseness, let $\kappa = \frac{n^2}{(n-1)c'(0)}$.
Let us now prove that $s_t < \kappa$ for $t \ge T_1$. Note that it is enough to prove $s_{T_1} < \kappa$ because $s_{T_1} < \kappa$ implies $s_t < \kappa$ for all $t \ge T_1$ using Lemma~\ref{lm:prop1} \eqref{lm:prop1:2}.
For contradiction, let us assume that $s_{T_1} \ge \kappa$. Let $t = T_1 - 1$. 
\begin{itemize}
    \item If $s_{t, -i_t} = 0$, then $s_{T_1} = s_{t+1} = x_{t+1,i_t} = a < \kappa$, which contradicts $s_{T_1} \ge \kappa$.
    \item If $0 < s_{t, -i_t} < \kappa$, then applying the first-order condition for agent $i_t$, we have 
    $ \frac{s_{t, -i_t}}{s_{t+1}^2} - \frac{n-1}{n^2} c'(x_{t+1, i_t}) = 0 \implies s_{t+1} < \sqrt{\kappa \frac{n^2}{(n-1)c'(0)}} = \kappa$. So, $s_{T_1} = s_{t+1}  < \kappa$, which also leads to contradiction. 
    \item If $s_{t, -i_t} \ge \kappa$, then $x_{t+1,i_t} = 0$ and $s_t \ge s_{t, -i_t} \ge \kappa$. Doing induction in the backward direction, starting from $\tau = t = T_1 - 1$ and going to $\tau = 0$, $x_{\tau+1,i_\tau} = 0$ for every $\tau < T_1$. As every agent $i$ makes at least one move before $T_1$, so every agent $i \in [n]$ has $x_{T_1, i} = 0$. So, $s_{T_1} = 0$. Contradiction. 
\end{itemize}
This completes the proof for condition \eqref{def:warmup:3} of the warm-up phase.

Condition \eqref{def:warmup:1}.
Let us now prove that $x_{t,i} \le \frac{n^2}{4(n-1)}$ for all $t \ge T_1$ and $i \in [n]$. Fix a $t \ge T_1$ and an $i \in [n]$. As every agent has played at least once before $T_1$, so $i$ must have made a move before $t$. Let $\tau < t$ denote the most recent time before $t$ when agent $i$ made the best-response move. 
By definition of $\tau$, $i$ did not play between $\tau+1$ and $t$, so $x_{t,i} = x_{\tau+1,i}$. Finally, $x_{\tau+1,i} \le \frac{n^2}{4(n-1)}$ because:
\begin{itemize}
    \item If $s_{\tau,-i} = 0$, then $x_{\tau+1,i} = a \le \frac{n^2}{4(n-1)}$.
    \item If $0 < s_{\tau,-i} < \frac{n^2}{(n-1) c'(0)
    }$, then we know that $0 < x_{\tau+1,i} \le s_{\tau+1} < \frac{n^2}{(n-1) c'(0)}$ from Lemma~\ref{lm:prop1}. If $x_{\tau+1,i} \le 1$, then $x_{\tau+1,i} \le \frac{n^2}{4(n-1)}$ because $n \ge 3$. On the other hand, if $x_{\tau+1,i} \ge 1$, then $c'(x_{\tau+1,i}) \ge 1$, and using the first-order optimality condition we have
    \begin{align*}
        &\frac{s_{\tau,-i}}{(x_{\tau+1, i} + s_{\tau,-i})^2} = \frac{n-1}{n^2} c'(x_{\tau+1, i}) \ge \frac{n-1}{n^2} c'(1) = \frac{n-1}{n^2} \\
        &\implies x_{\tau+1, i} \le \frac{n}{\sqrt{n-1}} \sqrt{s_{\tau,-i}} - s_{\tau,-i} \le \max_{z \ge 0} \left( \frac{n}{\sqrt{n-1}} z - z^2 \right).
    \end{align*}
    As $\frac{n}{\sqrt{n-1}} z - z^2$ is maximized at $z = \frac{n}{2\sqrt{n-1}}$, so $x_{\tau+1, i} \le \frac{n^2}{4(n-1)}$ as required.
\end{itemize}
This completes the proof for condition \eqref{def:warmup:1} of the warm-up phase.

We have shown that $s_t > 0$ for $t \ge T_1$, so there must be at least one agent with positive output for all $t \ge T_1$. Let $i$ be an agent that has positive output $x_{T_1,i} > 0$ at time $T_1$. We also know from condition \eqref{def:warmup:3}, which we proved earlier, that $x_{T_1,i} \le s_{T_1} < \frac{n^2}{(n-1) c'(0)}$. Although $x_{T_1,i} > 0$, it is possible that $i$ is the only agent with positive output and every other agent $j \neq i$ has $x_{T_1,j} = 0$. We next resolve this scenario.

Condition \eqref{def:warmup:2}.
Let $T_2$ denote the additional steps after $T_1$, if any, required to get at least two agents with positive output. Notice that $T_2$ can be upper bounded by the time it takes to get the first head (select an agent $j \neq i$) of a geometric random variable with parameter $p \ge (n-1)L$ because each $j \neq i$ is assigned a probability of at least $L$ at each time step. When a $j \neq i$ is selected at time $t = T_2 - 1$, then $x_{T_2,j} = x_{t+1,j} \in (0, \frac{n^2}{(n-1)c'(0)})$, as required, using Lemma~\ref{lm:prop1} \eqref{lm:prop1:3}. Further, Lemma~\ref{lm:prop1} \eqref{lm:prop1:3} also implies that there will always be at least two agents with positive output this time onward. So, for $t \ge T_1 + T_2$, the action profile $x_t$ satisfies condition \eqref{def:warmup:2} of the warm-up phase.

To summarize, for $t \ge T_1 + T_2$, the action profile $x_t$ satisfies all conditions required for the completion of the warm-up phase (Definition~\ref{def:warmup}). So, $T_{warm} \le T_1 + T_2$. Let us now prove a high probability upper bound on $T_1 + T_2$, which holds for $T_{warm}$ as well. 
From Lemma~\ref{lm:coupon}, we know that $\Prob[T_1 > \frac{1}{L}\ln (\frac{n}{\delta'})] < \delta'$ for any $\delta' \in (0,1)$. Set $\delta' = \frac{\delta}{2}$, we get $\Prob[T_1 > \frac{1}{L}\ln (\frac{2n}{\delta})] < \frac{\delta}{2}$.
As $T_2$ underestimates the time till the first head for a geometric random variable with parameter $(n-1)L$, we have $\Prob[T_2 > k] \le (1 - (n-1)L))^k < e^{-(n-1)Lk}$. Setting $k = \frac{1}{(n-1)L}\ln(\frac{2}{\delta})$, we get $\Prob[T_2 > \frac{1}{(n-1)L}\ln(\frac{2}{\delta})] < \frac{\delta}{2}$.
Using union bound, we get $T_{warm} \le T_1 + T_2 \le \frac{1}{L}\ln (\frac{2n}{\delta}) +\frac{1}{(n-1)L}\ln(\frac{2}{\delta}) = O(\frac{1}{L}\log(\frac{n}{\delta}))$ w.p. $1- \delta$ as required.
\qed\end{proof}

\begin{proof}[Lemma~\ref{lm:good_domain1}]
    Let $i = i_t$. Notice that $z_{t+1, i} = 0 \Longleftrightarrow x_{t+1, i} = 1$. If $x_{t+1, i} = 1$, the first order condition for agent $i$ is
    \begin{align*}
        &\frac{s_{t, -i}}{(x_{t+1, i} + s_{t, -i})^2} = \frac{n-1}{n^2} c'(x_{t+1,i}) \Longleftrightarrow \frac{s_{t, -i}}{(1 + s_{t, -i})^2} = \frac{n-1}{n^2} \\
        &\Longleftrightarrow (n-1) (1 + s_{t, -i})^2 - n^2 s_{t, -i} = 0.
    \end{align*}
    The two solutions for the quadratic equation above are
    \begin{align*}
        s_{t, -i} &= n-1 \implies (n-1) (1 + s_{t, -i})^2 - n^2 s_{t, -i} \\
            &= (n-1) (1 + n - 1)^2 - n^2 (n-1) = 0, \\
        s_{t, -i} &= \frac{1}{n-1} \implies (n-1) (1 + s_{t, -i})^2 - n^2 s_{t, -i} \\
            &= (n-1) \left(1 + \frac{1}{n-1} \right)^2 + n^2 \frac{1}{n-1} = (n-1) \left(\frac{n}{n-1} \right)^2 + n^2 \frac{1}{n-1} = 0.
    \end{align*}
    Further, when $s_{t, -i} \in (\frac{1}{n-1}, n-1)$, then $\frac{s_{t, -i}}{(1 + s_{t, -i})^2} > \frac{n-1}{n^2}$, which implies that the BR to $s_{t, -i}$ is strictly more than $1$. Similarly, we can verify that if $s_{t, -i} \in (0, \frac{1}{n-1})$ or $s_{t, -i} > n-1$, then $x_{t+1, i} < 1$. Writing this in terms of $z$ and $\sigma$, we have
    \begin{itemize}
        \item $s_{t, -i} \in [\frac{1}{n-1}, n-1] \Longleftrightarrow \sigma_{t, -i} \in [\frac{1}{n-1} - (n-1), 0]$, then $x_{t+1, i} \ge 1 \Longleftrightarrow z_{t+1, i} \ge 0$, and
        \item $s_{t, -i} \ge n-1 \Longleftrightarrow \sigma_{t, -i} \ge 0$, then $x_{t+1, i} \le 1 \Longleftrightarrow z_{t+1, i} \le 0$.
    \end{itemize}
    So, we have shown that if $s_{t, -i} \ge \frac{1}{n-1}$, then $z_{t+1, i}$ has the opposite sign as $\sigma_{t, -i}$, as required. We now upper bound the ratio $\frac{-z_{t+1, i}}{\sigma_{t, -i}}$. 

    Let $r = \frac{-z_{t+1, i}}{\sigma_{t, -i}}$, we want to upper bound $r$. Let us write the first order condition for agent $i$ using $z$ and $\sigma$
    \begin{align*}
        \frac{s_{t, -i}}{(x_{t+1, i} + s_{t, -i})^2} = \frac{n-1}{n^2} c'(x_{t+1,i}) \Longleftrightarrow \frac{n-1 + \sigma_{t, -i}}{(n + z_{t+1,i} + \sigma_{t, -i})^2} = \frac{n-1}{n^2} c'(x_{t+1,i}).
    \end{align*}
    If $x_{t+1,i} > 1 \Longleftrightarrow z_{t+1,i} > 0$, then $c'(x_{t+1,i}) \ge c'(1) = 1$. Also, $z_{t+1,i} > 0 \implies \sigma_{t, -i} < 0$. Using these, we get
    \begin{align*}
        &\frac{n-1 + \sigma_{t, -i}}{(n + z_{t+1,i} + \sigma_{t, -i})^2} = \frac{n-1 + \sigma_{t, -i}}{(n + (1-r)\sigma_{t, -i})^2} \ge \frac{n-1}{n^2} \\
        &\implies 1 + \frac{\sigma_{t, -i}}{n-1} \ge \left( 1 + \frac{(1-r)\sigma_{t, -i}}{n} \right)^2 \implies \frac{(1-r)\sigma_{t, -i}}{n} \le \sqrt{1 + \frac{\sigma_{t, -i}}{n-1}} - 1 \\
        &\implies 1 - r \ge \frac{n}{\sigma_{t, -i}} \left( \sqrt{1 + \frac{\sigma_{t, -i}}{n-1}} - 1 \right) \text{ as $\sigma_{t, -i} < 0$}.
    \end{align*}
    Let $g(y) = \frac{\sqrt{1 + y} - 1}{y}$. Notice that $1 - r \ge \frac{n}{n-1} g(\frac{\sigma_{t, -i}}{n-1})$. As $\sigma_{t, -i} < 0$, so $ \frac{\sigma_{t, -i}}{n-1} < 0$. Further, as $\sigma_{t, -i} \ge -(n-1) + \frac{1}{n-1}$, so $\frac{\sigma_{t, -i}}{n-1} > -1$. We will later lower bound $g(y)$ in $y \in (-1, 0]$, but before that let us look at the case $x_{t+1,i} < 1$. 

    If $x_{t+1,i} < 1 \Longleftrightarrow z_{t+1,i} < 0$, then $c'(x_{t+1,i}) \le c'(1) = 1$. Also, as $z_{t+1,i} < 0 \implies \sigma_{t, -i} > 0$. Using these, we get
    \begin{align*}
        &\frac{n-1 + \sigma_{t, -i}}{(n + (1-r)\sigma_{t, -i})^2} \le \frac{n-1}{n^2} \implies 1 + \frac{\sigma_{t, -i}}{n-1} \le \left( 1 + \frac{(1-r)\sigma_{t, -i}}{n} \right)^2 \\
        &\implies \frac{(1-r)\sigma_{t, -i}}{n} \ge \sqrt{1 + \frac{\sigma_{t, -i}}{n-1}} - 1 \implies 1 - r \ge \frac{n}{\sigma_{t, -i}} \left( \sqrt{1 + \frac{\sigma_{t, -i}}{n-1}} - 1 \right),
    \end{align*}
    where the last inequality holds because $\sigma_{t, -i} > 0$. Again, for $g(y) = \frac{\sqrt{1 + y} - 1}{y}$, we have the same inequality as before $1 - r \ge \frac{n}{n-1} g(\frac{\sigma_{t, -i}}{n-1})$. Now, if $\sigma_{t, -i} \ge 2$, we trivially have $r = \frac{-z_{t+1, i}}{\sigma_{t, -i}} \le \frac{1}{2}$ because $z_{t+1, i} \ge -1$ always. So, we can assume $\sigma_{t, -i} \le 2 \implies \frac{\sigma_{t, -i}}{n-1} \le \frac{2}{n-1}$. Earlier, for $z_{t+1,i} > 0$, we needed to lower bound $g(y)$ in domain $y \in [-1, 0]$, and now for $z_{t+1,i} < 0$, we need to lower bound $g(y)$ in domain $[0, \frac{2}{n-1}]$.

    Let us differentiate $g(y) = \frac{\sqrt{1 + y} - 1}{y}$ for $y > -1$, we get
    \begin{align*}
        g&'(y) = \frac{1}{2 y \sqrt{1+y}} - \frac{\sqrt{1 + y} - 1}{y^2} = \frac{y - 2 \sqrt{1+y} (\sqrt{1+y} - 1)}{2 y^2 \sqrt{1+y}} \\
        &= \frac{y - 2 (1+y) + 2 \sqrt{1+y}}{2 y^2 \sqrt{1+y}} = \frac{-1 - (1+y) + 2 \sqrt{1+y}}{2 y^2 \sqrt{1+y}} = \frac{- (1 -\sqrt{1+y})^2}{2 y^2 \sqrt{1+y}} < 0.
    \end{align*}
    So, $g(y)$ is a decreasing function for all $y > -1$. Therefore, the minimum value of $g(y)$ in domain $[0, \frac{2}{n-1}]$ occurs at $y = \frac{2}{n-1}$. Plugging it in, we get
    \begin{align*}
        1-r \ge \frac{n}{n-1} g\left(\frac{\sigma_{t, -i}}{n-1}\right) \ge \frac{n}{n-1} g\left(\frac{2}{n-1}\right) = \frac{n}{2} \left( \sqrt{1 + \frac{2}{n-1}} - 1 \right).
    \end{align*}
    Let us now lower bound $\frac{n}{2} \left( \sqrt{1 + \frac{2}{n-1}} - 1 \right)$. Let $v = 1 + \frac{2}{n-1} \Longleftrightarrow n = 1 + \frac{2}{v-1} = \frac{v+1}{v-1}$. Notice that $v \ge 1$ as $n \ge 2$, and as $n \rightarrow \infty$, $v \rightarrow 1$. Let $h(v) = \frac{n}{2} \left( \sqrt{1 + \frac{2}{n-1}} - 1 \right) = \frac{v+1}{2(v-1)}(\sqrt{v}-1) = \frac{v+1}{2(\sqrt{v} + 1)}$. We need to lower bound $h(v)$ for $v > 1$. Differentiating $h(v)$ w.r.t. $v$, we have
    \begin{multline*}
        h'(v) = \frac{1}{2 (\sqrt{v}+1)} - \frac{v+1}{2} \frac{1}{(\sqrt{v} + 1)^2} \frac{1}{2 \sqrt{v}} \\
        = \frac{2\sqrt{v}(\sqrt{v}+1) - (v+1)}{4 \sqrt{v} (\sqrt{v} + 1)^2} = \frac{v + 2\sqrt{v} - 1}{4 \sqrt{v} (\sqrt{v} + 1)^2} > 0, \text{ as $v > 1$.}
    \end{multline*}
    So, the minimum value of $h(v)$ in domain $v \in (1, \infty)$ occurs at $v = 1$, and we have $h(v) \ge h(1) = \frac{1}{2}$. So, we have proven that $1-r \ge \frac{1}{2} \implies r \le \frac{1}{2}$.
\qed\end{proof}
\begin{proof}[Lemma~\ref{lm:good_domain2}]
Note that we are assuming the completion of the warm-up phase.
Let $\bx_t$ have at least two agents $i$ and $j \neq i$ with $x_{t,i} \ge \frac{1}{n-1}$ and $x_{t,j} \ge \frac{1}{n-1}$. We shall prove that $\bx_{t+1}$ also has two agents with output at least $\frac{1}{n-1}$. By induction, this implies the lemma.

If an agent $i_t \notin \{i,j\}$ makes the move at time $t$. Then $x_{t+1,i} = x_{t,i} \ge \frac{1}{n-1}$ and $x_{t+1,j} = x_{t,j} \ge \frac{1}{n-1}$, and we are done trivially. So, let us assume that one of $i$ or $j$, w.l.o.g. say $i$, makes the move at time $t$. If there is an agent $k \notin \{i,j\}$ with $x_{t,i} \ge \frac{1}{n-1}$, then again we are done trivially because $j$ and $k$ will have output at least $\frac{1}{n-1}$ at time $t+1$. So, let us assume that $x_{t,k} < \frac{1}{n-1}$ for all $k \notin \{i,j\}$.

Now, as the warm-up phase has completed, so $x_{t,j} \le \frac{n^2}{4(n-1)}$. Therefore, $s_{t,-i} = x_{t,j} + \sum_{k \notin \{i,j\}} x_{t,k} \le \frac{n^2}{4(n-1)} + \frac{n-2}{n-1} \le \frac{n^2}{2(n-1)}$. Also, $s_{t,-i} \ge x_{t,j} \ge \frac{1}{n-1}$. Let us now look at the first order condition for agent $i$. If $x_{t+1,i} \ge 1$, then we are done trivially again. So, assume $x_{t+1,i} \le 1$, which implies $c'(x_{t+1,i}) \le c'(1) = 1$. We have
\begin{align*}
    \frac{s_{t,-i}}{(x_{t+1,i} + s_{t,-i})^2} = \frac{n-1}{n^2} c'(x_{t+1,i}) \le \frac{n-1}{n^2} \implies x_{t+1,i} \ge \sqrt{\frac{n^2 s_{t,-i}}{n-1}} - s_{t,-i}.
\end{align*}
Notice that the function $g(y) = \sqrt{\frac{n^2 y}{n-1}} - y$ is concave in $y$, so the minimum values occur at extreme points. As $s_{t,-i} \in [\frac{1}{n-1}, \frac{n^2}{2(n-1)}]$, plugging in the extreme points we get:
\begin{align*}
    g\left(\frac{1}{n-1}\right) &\ge \sqrt{\frac{n^2}{(n-1)^2}} - \frac{1}{n-1} = \frac{n}{n-1} - \frac{1}{n-1} = 1 \ge \frac{1}{n-1}, \\
    g\left(\frac{n^2}{2(n-1)}\right) &\ge \sqrt{\frac{n^2 n^2}{(n-1) 2 (n-1)}} - \frac{n^2}{2(n-1)} = \frac{n^2}{n-1} \frac{\sqrt{2} - 1}{2} \ge \frac{1}{n-1},
\end{align*}
where the last inequality holds for all $n \ge 3$. So, we get $x_{t+1,i} \ge \frac{1}{n-1}$ as required.
\qed\end{proof}
\begin{proof}[Lemma~\ref{lm:good_domain3}]
Let us look at the sequence of moves after the warm-up phase. Let $t \ge T_{warm}$ and $i = i_t$, and let's assume that there are less than two agents with output at least $\frac{1}{n-1}$ at time $t$. Then, we have the following cases based on the value of $s_{t, -i}$.

Let $s_{t, -i} \ge \frac{1}{n-1}$. Then following exactly the same argument as the proof of Lemma~\ref{lm:good_domain2}, we have $x_{t+1,i} \ge \frac{1}{n-1}$. Now, at time $t+1$, if we have two agents with output $\ge \frac{1}{n-1}$, we are done. Else, let $j \neq i$ be the agent who plays the next BR move, say at time $\tau \ge t+1$ (ignoring the redundant consecutive BR moves by $i$ that do not change the output profile). Again, following exactly the same argument as the proof of Lemma~\ref{lm:good_domain2}, we have $x_{\tau+1,j} \ge \frac{1}{n-1}$. Finally, the time taken to get this non-redundant move by an agent $j \neq i$ is at most $m = \frac{1}{(n-1)L} \ln(\frac{1}{\delta})$ w.p. $1-\delta$ because the probability that agent $i$ makes $m$ consecutive moves is $\le (1 - (n-1)L)^{m} \le e^{-(n-1)Lm} \le \delta$.

Let $s_{t,-i} < \frac{1}{n-1}$. Then the two possible scenarios are either $x_{t+1,i} \ge \frac{1}{n-1}$ or $x_{t+1,i} < \frac{1}{n-1} \le 1$. If $x_{t+1,i} \le 1$, then $c'(x_{t+1,i}) \le c'(1) = 1$, and the first-order condition implies
\begin{align*}
    &\frac{s_{t,-i}}{(x_{t+1,i} + s_{t,-i})^2} = \frac{n-1}{n^2} c'(x_{t+1,i}) \le \frac{n-1}{n^2} \implies x_{t+1,i} \ge \sqrt{\frac{n^2 s_{t,-i}}{n-1}} - s_{t,-i} \\
    &\implies x_{t+1,i} \ge \sqrt{s_{t,-i}} \left( \frac{n}{\sqrt{n-1}} - \sqrt{s_{t,-i}} \right) \ge \sqrt{s_{t,-i}} \left( \frac{n}{\sqrt{n-1}} - \frac{1}{\sqrt{n-1}} \right)\\
    &\implies x_{t+1,i} \ge \sqrt{(n-1) s_{t,-i}}.
\end{align*}
Now, let $j \neq i$ be the agent who plays the next BR move, say at time $\tau$ (ignoring the redundant consecutive BR moves by $i$ for now). At $\tau$, $s_{\tau, -j} = s_{t+1, -j} \ge x_{t+1,i} \ge \sqrt{(n-1) s_{t,-i}}$.

Let $(y_t)_{t \ge 0}$ be a sequence where $y_0 = \gamma = \min_{j \in [n]} s_{T_{warm}, -j} < \frac{1}{n-1}$ and $y_{t+1} \ge \sqrt{(n-1) y_t}$. Notice that $y_t$ tracks the evolution of $s_{t, -i_t}$ for $t \ge T_{warm}$ assuming there are no redundant BR moves. By the definition of $T_{warm}$, there are at least two agents with strictly positive output, so $\gamma > 0$. We want to measure the time it takes for $y_t$ to reach $\frac{1}{n-1}$.
\begin{align*}
    y_{t} \ge \sqrt{(n-1) y_{t-1}} \ge (n-1)^{\frac{1}{2} + \frac{1}{4}} y_{t-2}^{\frac{1}{4}} \ge \ldots \ge (n-1)^{1-\frac{1}{2^{t}}} y_{0}^{\frac{1}{2^{t}}} = (n-1)^{1-\frac{1}{2^{t}}} \gamma^{\frac{1}{2^{t}}}.
\end{align*}
We want to ensure $y_t \ge \frac{1}{n-1}$, which is implied by
\begin{align*}
    (n-1)^{1-\frac{1}{2^{t}}} \gamma^{\frac{1}{2^{t}}} \ge \frac{1}{n-1} \impliedby \gamma^{\frac{1}{2^{t}}} \ge \frac{1}{2} \impliedby \frac{1}{2^t} \lg(\gamma) \ge -1 \impliedby t \ge \lg\lg(\frac{1}{\gamma}).
\end{align*}
So, we need $\kappa = O(\log\log(\frac{1}{\gamma}))$ non-redundant moves. Now, let's provide a high probability bound on the time it takes to have $\kappa$ non-redundant moves. At each time step, we make a non-redundant move with probability at least $(n-1)L$. So, we need to find the time it takes to get $\kappa$ heads of a geometric random variable when the probability of getting a head is $p \ge (n-1)L$. 

Let $m = \frac{2}{p}\max(\kappa, \frac{1}{p} \ln(\frac{1}{\delta}))$. Let $Z_i$ be a random variable that takes value $-1$ (head) w.p. $p$ and $0$ (tail) w.p. $1-p$. $\Exp[\sum_{i=1}^m Z_i] = -pm$. Less than $\kappa$ heads ($-1$ values) corresponds to having $\sum_{i=1}^m Z_i > -\kappa$. Using Hoeffding's inequality, we have
\begin{align*}
    \Exp[\sum_{i=1}^m Z_i > -\kappa] &= \Exp[\sum_{i=1}^m Z_i - \Exp[\sum_{i=1}^m Z_i] > -\kappa + pm] \\
    &\le e^{\frac{-2 (pm - \kappa)^2}{m}} \le e^{\frac{-2 (pm - \frac{pm}{2})^2}{m}} \le e^{\frac{-m p^2}{2}} \le \delta.
\end{align*}
So, after $\frac{2}{(n-1)L}\max(\kappa, \frac{1}{(n-1)L} \ln(\frac{1}{\delta}))$ steps, the probability of getting less than $\kappa$ non-redundant moves is bounded above by $\delta$, as required.
\qed\end{proof}

\begin{proof}[Lemma~\ref{lm:outputlb}]
If $T_{warm} = 0$, i.e., all required conditions for completion of the warm-up phase are satisfied by the initial state $x_0$, then we trivially get $\gamma = \min_{j} s_{0,-j} \ge \min(\cA) \ge \gamma_{lb}$.

Let us assume that $T_{warm} > 0$. Let us focus on the time step $T_{warm} - 1$. Let $t = T_{warm} - 1 \Longleftrightarrow T_{warm} = t+1$.

As $t+1$ is the smallest time when all conditions for completion of the warm-up phase, Definition~\ref{def:warmup}, are satisfied, therefore at time $t$, there must be at least one violation. We do a case analysis depending upon which condition was violated at $t$.  Let $\kappa = \frac{n^2}{(n-1) c'(0)}$ for conciseness. Let $i = i_t$ be the agent who makes the transition at time $t$.

\paragraph{\textbf{Case 1:}} Definition~\ref{def:warmup} condition \eqref{def:warmup:2} violated at $t$, i.e., there is only one agent $j$ with $x_{t,j} > 0$. 

First, notice that at time $t$ an agent $i \neq j$ makes a transition to a positive $x_{t+1,i}$ to satisfy all three conditions of Definition~\ref{def:warmup}. Check that the conditions \eqref{def:warmup:1} and \eqref{def:warmup:3} of Definition~\ref{def:warmup} must not have been violated at time $t$ because, then, in a single step, we could not have satisfied all three conditions. This implies that $s_t = x_{t,j} \le \frac{n^2}{4(n-1)} < \frac{n^2}{(n-1) c'(0)} = \kappa$.

Let us now trace our steps back from $t$ to $0$. We claim that all transitions before time $t$ were made by agent $j$, i.e., for every $\tau < t$, $j = i_{\tau}$. If not, let $\tau < t$ be the most recent transition by an agent $k \neq j$. As $x_{\tau+1,k} = x_{t,k} = 0$, therefore $s_{\tau,-k} \ge \kappa$, but $s_{\tau,-k} = s_{t,-k} = x_{t,j} < \kappa$. Contradiction. 

As $j$ makes all the transitions before time $t$, we have either (i) if $t = 0$, then $x_{t,j} = x_{0,j}$; or (ii) if $t > 0$, then $x_{t,j} = a$ as a response to $0$ output by everyone else. Notice that $\gamma \ge \min(x_{t+1,i}, x_{t+1,j})$. Further, we have either $x_{t+1,i} \ge 1$ or $x_{t+1,i} \le 1 \implies c'(x_{t+1,i}) \le 1$ and from the first-order condition
\begin{multline*}
    x_{t+1,i} \ge \sqrt{s_{t,-i}} \left( \frac{n}{\sqrt{n-1}} - \sqrt{s_{t,-i}} \right) = \sqrt{x_{t,j}} \left( \frac{n}{\sqrt{n-1}} - \sqrt{x_{t,j}} \right) \\
    \ge \sqrt{x_{t,j}} \left( \frac{n}{\sqrt{n-1}} - \frac{n}{2\sqrt{n-1}} \right) \ge \sqrt{x_{t,j}} \ge \min(1, x_{t,j}).
\end{multline*}
So, $\gamma \ge \min(\{a\} \cup \cA)$.

\paragraph{\textbf{Case 2:}} Definition~\ref{def:warmup} condition \eqref{def:warmup:3} violated at $t$, i.e., total output $s_t \ge \kappa$, but condition \eqref{def:warmup:2} is satisfied.

Let $i$ be the agent that makes the move at time $t$ to decrease the total output from $s_t \ge \kappa$ to $s_t < \kappa$. We argued in Case 1 that condition \eqref{def:warmup:2} must have been satisfied at $t$. So, there are at least two agents with strictly positive output at $t$. Let $j \neq i$ be an agent other than $i$ that has $x_{t,j} > 0$. We claim that $x_{t,j} = x_{0,j}$.

We prove our claim by contradiction. Let us trace our steps back from $t$. Let agent $k$ make the transition at $t-1$. We show that $k \notin \{i, j\}$. Notice that $x_{t,k} = 0$ because if $x_{t,k} > 0$ then either: (i) $s_{t-1,-k} = 0$, but this is not possible as $s_{t-1,-k} \ge \min(x_{t,i}, x_{t,j}) > 0$; (ii) $0 < s_{t-1, -k} < \kappa$, but this is also not possible because then $s_t < \kappa$ by Lemma~\ref{lm:prop1}, but $s_t \ge \kappa$. As we already know that $x_{t,i} > 0$ and $x_{t,j} > 0$, so $k \notin \{i, j\}$. Repeating the same argument, for every $\tau < t$, we can show that $i_{\tau} \notin \{i, j\}$, which implies $x_{t+1,j} = x_{t,j} = x_{0,j}$. Using same argument as done for Case 1, we can show that $x_{t+1,i} \ge \min(1, x_{t,j})$ and $\gamma \ge \min(\{a\} \cup \cA)$.

\paragraph{\textbf{Case 3:}} Definition~\ref{def:warmup} condition \eqref{def:warmup:1} violated at $t$, i.e., there is an agent $i$ with $x_{t,i} > \frac{n^2}{4(n-1)}$, but conditions \eqref{def:warmup:2} and condition \eqref{def:warmup:3} are satisfied. After agent $i$ made the move at time $t$, $x_{t+1,i} \le \frac{n^2}{4(n-1)}$ at time $t+1$.

We have $s_{t,-i} > 0$ (by condition \eqref{def:warmup:2}) and $s_{t,-i} \le s_{t+1} \le \frac{n^2}{2(n-1)}$ because $\gamma = \min_{j} s_{t+1,-j} < \frac{1}{n-1}$, and which implies that $s_{t+1} \le \gamma + \max_j x_{t+1,j} \le \frac{1}{n-1} + \frac{n^2}{4 (n-1)} \le \frac{n^2}{2(n-1)}$.

We can lower bound $\gamma$ as a function of $s_{t+1,-i}$. First, notice that $\gamma \ge \min(s_{t+1,-i}, x_{t+1,i}) = \min(s_{t,-i}, x_{t+1,i})$. Further, we can lower bound $x_{t+1,i}$ as $x_{t+1,i} \ge \min(1, \frac{s_{t,-i}}{2})$ because, if $x_{t+1,i} \le 1$, then from the first-order condition we have
\begin{multline*}
    x_{t+1,i} \ge \sqrt{s_{t,-i}} \left( \frac{n}{\sqrt{n-1}} - \sqrt{s_{t,-i}} \right) \ge \sqrt{s_{t,-i}} \left( \frac{n}{\sqrt{n-1}} - \frac{n}{\sqrt{2} \sqrt{n-1}} \right) \\
    \ge \frac{\sqrt{s_{t,-i}}}{2} \ge \min\left(1, \frac{s_{t,-i}}{2} \right).
\end{multline*}
Therefore, let us focus on lower bounding $s_{t,-i}$.

If $t=0$, then we trivially have $s_{t,-i} = \sum_{j \neq i} x_{0,j} \ge \min(\cA)$ (and we know that $s_{t,-i} > 0$, so there is an agent $j \neq i$ with positive $x_{0,j}$).

Let $t > 0$. In the proof of Lemma~\ref{lm:part1}, we argued that if an agent $j$ plays at least one move before time $\tau$, for any $\tau$, then $x_{\tau,j} \le \frac{n^2}{4(n-1)}$. As $x_{t,i} > \frac{n^2}{4(n-1)}$, so $i_{\tau} \neq i$ for all $\tau < t$, which implies $x_{\tau,i} = x_{0,i}$ for all $\tau \le t$. 

Let $j = i_{t-1} \neq i$ be the agent who made the move at time $t-1$. Let $\beta = \sum_{k \neq i,j} x_{t-1, k}$. So, $s_{t-1, -j} = \beta + x_{t-1,i} = \beta + x_{0,i}$ and $s_{t, -i} = \beta + x_{t, j}$. Let's do a case analysis on the value of $\beta$ and $x_{t, j}$
\begin{itemize}
    \item Let $x_{t,j} \ge 1$ or $\beta \ge 1$. Then $s_{t, -i} = \beta + x_{t,j} \ge 1$.
    \item Let $x_{t,j} < 1$ and $\beta < 1$. The first-order condition for agent $j$ is
    \begin{align*}
        \frac{s_{t-1, -j}}{(x_{t,j} + s_{t-1, -j})^2} = \frac{n-1}{n^2} c'(x_{t,j}).
    \end{align*}
    Notice that $\frac{s_{t-1, -j}}{(x_{t,j} + s_{t-1, -j})^2}$ decreases with $s_{t-1, -j}$ because the derivative w.r.t. $s_{t-1, -j}$ is
    \begin{align*}
        \frac{\partial }{\partial s_{t-1, -j}} \left( \frac{s_{t-1, -j}}{(x_{t,j} + s_{t-1, -j})^2} \right) = \frac{(x_{t,j} + s_{t-1, -j}) - 2 s_{t-1, -j}}{(x_{t,j} + s_{t-1, -j})^3} < 0,
    \end{align*}
    as $x_{t,j} < 1$ and $s_{t-1, -j} \ge x_{0,i} \ge \frac{n^2}{4(n-1)} \ge 1$. This implies that if $\beta$ increases, then $s_{t-1, -j}$ increases, then $x_{t,j}$ decreases.

    If $\kappa = \infty$ (i.e., $c'(0) = 0$), then we get a lower bound $x_{t,j} = BR(x_{0,i} + \beta) \ge BR(x_{0,i} + 1) > 0$. On the other hand, if $\kappa < \infty$, then either $\beta \ge (\kappa - x_{0,i})/2 > 0$ or $x_{t,j} = BR(x_{0,i} + \beta) \ge BR( (\kappa + x_{0,i})/2) > 0$. So, $s_{t, -i} = \beta + x_{t,j} \ge \max(\beta, x_{t,j}) \ge \min((\kappa - x_{0,i})/2, BR( (\kappa + x_{0,i})/2))$.
\end{itemize}
\qed\end{proof}
\begin{proof}[Lemma~\ref{lm:lipschitz}]
Fix an arbitrary agent $i$. Let $\hepsilon = 2(1+K)\epsilon$ Let $u_+ = u_i(BR(s_{-i}), s_{-i})$ and $u_- = u_i(x_i, s_{-i})$. We want to prove that
\begin{align*}
    u_- \ge (1 - \hepsilon) u_+ \Longleftrightarrow \frac{u_-}{u_+} \ge 1 - \hepsilon.
\end{align*}

As $|| \bx - \bx^* ||_1 \le \epsilon$, we have $|s_{-i} - s_{-i}^*| \le \epsilon \implies s_{-i} \in [(n-1) - \epsilon, (n-1)+\epsilon]$. Also, as $|| \bx - \bx^* ||_1 \le \epsilon$, we have $x_i \in [1-\epsilon, 1+\epsilon]$, and as $| BR(s_{-i}) - x^*_i | \le \epsilon$, we have $BR(s_{-i}) \in [1-\epsilon, 1+\epsilon]$. We can lower bound $u_- $ as
\begin{align*}
    u_- &= u_i(x_i, s_{-i}) = \frac{x_i}{x_i + s_{-i}} - \frac{n-1}{n^2} c(x_i) \\
    &\ge \min_{ \substack{ y \in [1-\epsilon, 1+\epsilon] \\ z \in [(n-1) - \epsilon, (n-1)+\epsilon] } } \left( \frac{y}{y + z} - \frac{n-1}{n^2} c(y) \right) \\
    &\ge \min_{ \substack{ y \in [1-\epsilon, 1+\epsilon] \\ z \in [(n-1) - \epsilon, (n-1)+\epsilon] } } \frac{y}{y + z} - \max_{ y \in [1-\epsilon, 1+\epsilon] } \frac{n-1}{n^2} c(y)  \\
    &\ge \min_{ y \in [1-\epsilon, 1+\epsilon] } \frac{y}{y + (n-1) + \epsilon} - \frac{n-1}{n^2} c(1+\epsilon) \\
    &\ge \frac{1 - \epsilon}{n} - \frac{n-1}{n^2} (c(1) + K\epsilon) \\
    &\ge \frac{n - (n-1)c(1)}{n^2} - \frac{\epsilon(n + (n-1)K)}{n^2}
\end{align*}
Similarly, we can upper bound $u_+ $ as
\begin{align*}
    u_+ &= u_i(BR(s_{-i}), s_{-i}) = \frac{BR(s_{-i})}{BR(s_{-i}) + s_{-i}} - \frac{n-1}{n^2} c(BR(s_{-i})) \\
    &\le \max_{ \substack{ y \in [1-\epsilon, 1+\epsilon] \\ z \in [(n-1) - \epsilon, (n-1)+\epsilon] } } \left( \frac{y}{y + z} - \frac{n-1}{n^2} c(y) \right) \\
    &\le \max_{ \substack{ y \in [1-\epsilon, 1+\epsilon] \\ z \in [(n-1) - \epsilon, (n-1)+\epsilon] } } \frac{y}{y + z} - \min_{ y \in [1-\epsilon, 1+\epsilon] } \frac{n-1}{n^2} c(y)  \\
    &\le \max_{ y \in [1-\epsilon, 1+\epsilon] } \frac{y}{y + (n-1) - \epsilon} - \frac{n-1}{n^2} c(1-\epsilon) \\
    &\le \frac{1 + \epsilon}{n} - \frac{n-1}{n^2} (c(1) - K\epsilon) \\
    &\le \frac{n - (n-1)c(1)}{n^2} + \frac{\epsilon(n + (n-1)K)}{n^2}
\end{align*}
Putting these two bounds together, we get
\begin{align*}
    \frac{u_-}{u_+} &\ge \frac{n - (n-1)c(1) - \epsilon(n + (n-1)K)}{n - (n-1)c(1) + \epsilon(n + (n-1)K)} = \frac{1 - \epsilon \frac{n + (n-1)K}{n - (n-1)c(1)}}{1 + \epsilon \frac{n + (n-1)K}{n - (n-1)c(1)}} \\
    &\ge \frac{1 - \epsilon (n + (n-1)K)}{1 + \epsilon (n + (n-1)K)}, \\
    &\qquad\qquad\text{ because $c(1) \le 1$ as $c(0) = 0$ and $c'(y) \le c'(1) = 1$ for $y \le 1$,} \\
    &\ge \frac{1 - \epsilon n(1+K)}{1 + \epsilon n(1+K)} = \frac{1 - \hepsilon/2}{1 + \hepsilon/2} \ge (1 - \hepsilon/2)^2, \\
    &\qquad\qquad\text{ because $\hepsilon \le 1$ and $1 \ge 1-y^2 \Longleftrightarrow \frac{1}{1+y} \ge 1-y$ for $0 \le y \le 1$,} \\
    &\ge 1 - \hepsilon, \text{ as required.}
\end{align*}
As $c'(z) \ge c'(1) = 1$ for $z \in [1, 1+\epsilon]$, so $K \ge 1$ and $2(1+K)n\epsilon \le 4Kn\epsilon$.
\qed\end{proof}

\begin{proof}[Theorem~\ref{thm:hom:bestcase}]
The proof for upper bound is straightforward as we already have done most of the work in the proof of Theorem~\ref{thm:hom}. First, to complete the warm-up phase (Definition~\ref{def:warmup}), we need every agent to play at least once, and additionally, at most one more transition, as argued in the proof of Lemma~\ref{lm:part1}. This can be completed in $n+1$ time steps by selecting agents in a round robin manner. Second, the number of non-redundant transitions (consecutive transitions by the same agent are redundant) required after the warm-up phase to reach the a phase of the BR dynamics that corresponds to the discounted-sum dynamics is bounded above by $O(\log\log(\frac{1}{\gamma}))$ as argued in the proof of Lemma~\ref{lm:good_domain3}. And this can be completed in $O(\log\log(\frac{1}{\gamma}))$ time by again selecting the agents in a round robin manner. Third, using the upper bound for the best-case selection model for the discount-sum dynamics (Lemma~\ref{lm:dissum:bestcase}), we reach close to the equilibrium output profile measured by $\ell_1$-distance in $O(n \log(\frac{n}{\epsilon}))$ time. Finally, using Lemma~\ref{lm:lipschitz} completes our proof.

\sloppy
Let us now prove the lower bound. We will separately show bounds of $\Omega(n)$, $\Omega(\log\log(\frac{1}{\gamma}) - \log\log(n))$, and $\Omega(\frac{1}{\log(n)}\log(\frac{1}{\epsilon}))$, which together imply a lower bound of $\Omega(\max(n, \log\log(\frac{1}{\gamma}) - \log\log(n), \frac{1}{\log(n)}\log(\frac{1}{\epsilon})) = \Omega(n + \log\log(\frac{1}{\gamma}) + \frac{1}{\log(n)}\log(\frac{1}{\epsilon}))$. 
Note that these bounds are in the worst case over the class of all convex cost functions. In particular, except for the lower bound of $\Omega(n)$ that holds for all cost functions, for every $\epsilon$ and $\gamma$, there exists a convex cost function that reaches an approximate equilibrium after just one BR transition by each agent. In particular, if $c'(z) = z^r$ for all $z \ge 0$ and $r \rightarrow \infty$, then $BR(s_{-i}) \rightarrow 1$ for any $i$ and $s_{-i}$. So, we will prove the lower bounds w.r.t. $\gamma$ and $\epsilon$ using the linear cost function $c'(z) = 1$ for all $z \ge 0$. 
We leave tighter analysis of lower and upper bounds for specific classes of convex cost functions for future work.

\paragraph{Proof for $\Omega(n)$.}
Observe that, to reach an approximate equilibrium, every agent must make at least one best-response move. For example, it is easy to check that there can be no approximate equilibrium with any agent producing an output of $n$: if an agent is playing $n$, then everyone else must play $\le 1$ in an equilibrium (see Lemma~\ref{lm:good_domain1} for a formal argument); if everyone else is playing $\le 1$, then this agent would want to deviate and not play $n$. So, if every agent starts with an output of $n$, then each agent must make at least one move to reach an approximate equilibrium. So, we get a lower bound of $\Omega(n)$ as required. Notice that the same argument also implies that for the randomized model, using Lemma~\ref{lm:coupon}, we get the lower bound of $\Omega(\frac{1}{L} \log(n \delta) )$ w.p. $1-\delta$.

\paragraph{Proof for $\Omega(\log\log(\frac{1}{\gamma}) - \log\log(n))$.} Notice that, at the equilibrium, the total output is $n$. So, for an output profile $\bx_t$ to be at an $\epsilon$-approximate equilibrium for small enough $\epsilon$, we need $s_t \ge \frac{n}{2}$ (Lemma~\ref{lm:revlipschitz} formally proves a similar result). Starting from small $s_0 = \gamma$, we lower bound the time it takes to reach $s_t \ge \frac{n}{2}$. At time $t$, for any agent $i$, the first order condition, equation \eqref{eq:br}, for an agent with a linear cost function is
\begin{multline*}
    \frac{s_{t-1,-i}}{s_{t}^2} = \frac{n-1}{n^2} \Longleftrightarrow s_t = \frac{n}{\sqrt{n-1}} \sqrt{s_{t-1}} \\
    \implies s_t \le n \sqrt{s_{t-1}} \le n^{1+\frac{1}{2}} s_{t-2}^{\frac{1}{2^2}} \le n^{1+\frac{1}{2}+\frac{1}{2^2}} s_{t-3}^{\frac{1}{2^4}} \ldots \le n^2 s_0^{\frac{1}{2^t}} \le n^2 \gamma^{\frac{1}{2^t}}.
\end{multline*}
So, $s_t \ge \frac{n}{2} \implies n^2 \gamma^{\frac{1}{2^t}} \ge \frac{n}{2}$, or 
\begin{align*}
    s_t < \frac{n}{2} \impliedby n^2 \gamma^{\frac{1}{2^t}} < \frac{n}{2} \Longleftrightarrow \frac{1}{2^t} \lg(\gamma) < \lg\left( \frac{1}{2n} \right) \Longleftrightarrow t < \lg\lg\left(\frac{1}{\gamma}\right) - \lg\lg(2n).
\end{align*}
So, we need $t \ge \lg\lg\left(\frac{1}{\gamma}\right) - \lg\lg(2n)$ to ensure $s_t \ge \frac{n}{2}$ and that we are close to the equilibrium.

\paragraph{Proof for $\Omega(\frac{1}{\log(n)} \log(\frac{1}{\epsilon}))$.} 
Before we prove the lower bound w.r.t. $\epsilon$, let us prove the following lemma.
\begin{lemma}\label{lm:revlipschitz}
    Given an output profile $\bx = (x_i)_{i \in [n]}$ and the equilibrium profile $\bx^* = (1, \ldots, 1)$, if $|| \bx - \bx^* ||_1 \le \epsilon$ and $|BR(s_{-i}) - 1| \le \epsilon$, then $u_i(x_i, s_{-i}) < \left(1 - \frac{(x_i - BR(s_{-i}))^2}{8 n^2}  \right) u_i(BR(s_{-i}), s_{-i}) $, for $0 \le \epsilon \le 1$ and $x_i \neq BR(s_{-i})$.
\end{lemma}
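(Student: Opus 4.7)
The plan is to derive the claim by quantifying strict concavity of $u_i(\cdot, s_{-i})$ near the equilibrium and then dividing through by an upper bound on $u_i(BR(s_{-i}), s_{-i})$. The hypotheses $\|\bx - \bx^*\|_1 \le \epsilon$ and $|BR(s_{-i}) - 1| \le \epsilon$ localize all relevant quantities: $x_i, BR(s_{-i}) \in [1-\epsilon, 1+\epsilon]$ and $s_{-i} \in [(n-1) - \epsilon, (n-1) + \epsilon]$, so any $z$ on the line segment between $x_i$ and $BR(s_{-i})$ satisfies $z + s_{-i} \in [n - 2\epsilon, n + 2\epsilon]$. Substituting into equation~\eqref{eq:ddu} and dropping the non-negative $c_i''$ term yields the uniform lower bound
\[
\left|\frac{\partial^2 u_i(z, s_{-i})}{\partial z^2}\right| \ge \frac{2 s_{-i}}{(z + s_{-i})^3} \ge \frac{2(n - 1 - \epsilon)}{(n + 2\epsilon)^3} =: \mu,
\]
which is $\Theta(1/n^2)$ for $\epsilon \in [0,1]$ and $n \ge 3$.

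Next, since $BR(s_{-i}) \ge 1 - \epsilon \ge 0$ is the maximizer of $u_i(\cdot, s_{-i})$ on $[0, \infty)$, the first-order condition gives $\left.\partial u_i/\partial z\right|_{BR(s_{-i})} \le 0$ (with equality whenever $BR(s_{-i}) > 0$), and this first-derivative term is non-positive when multiplied by $x_i - BR(s_{-i})$ (for $x_i < BR(s_{-i})$ we automatically have $BR(s_{-i}) > 0$, so the derivative vanishes). Taylor's theorem with integral remainder at $BR(s_{-i})$ then gives
\[
u_i(x_i, s_{-i}) \le u_i(BR(s_{-i}), s_{-i}) + \int_{BR(s_{-i})}^{x_i} (x_i - z)\, \frac{\partial^2 u_i(z, s_{-i})}{\partial z^2}\, dz \le u_i(BR(s_{-i}), s_{-i}) - \frac{\mu}{2}(x_i - BR(s_{-i}))^2,
\]
with strict inequality since $x_i \ne BR(s_{-i})$ and $\mu > 0$.

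Finally, bound $u_i(BR(s_{-i}), s_{-i}) \le \frac{BR(s_{-i})}{BR(s_{-i}) + s_{-i}} \le \frac{1+\epsilon}{n - 2\epsilon}$ (the cost term is non-negative), and check $u_i(BR(s_{-i}), s_{-i}) > 0$ on this neighborhood by noting that at $\bx^*$ the utility equals $\frac{1}{n} - \frac{n-1}{n^2}c(1) > 0$ (since $c(1) \le \int_0^1 c'(y)\,dy \le c'(1) = 1$ by monotonicity of $c'$) and extending by continuity. Dividing the Taylor inequality by $u_i(BR(s_{-i}), s_{-i})$ then yields
\[
\frac{u_i(x_i, s_{-i})}{u_i(BR(s_{-i}), s_{-i})} \le 1 - \frac{\mu(x_i - BR(s_{-i}))^2}{2\,u_i(BR(s_{-i}), s_{-i})} \le 1 - \frac{(x_i - BR(s_{-i}))^2}{8 n^2},
\]
where the last step amounts to verifying $4 n^2 \mu \ge u_i(BR(s_{-i}), s_{-i})$, which holds with slack because $\mu = \Theta(1/n^2)$ while $u_i(BR) = O(1/n)$ in the regime of interest. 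The main obstacle is ensuring the constants are clean uniformly across $\epsilon \in [0,1]$, since the upper bound on $u_i(BR)$ degrades as $\epsilon \to 1$; a case split ($\epsilon \le 1/2$ handled by the above with generous constants, $\epsilon > 1/2$ handled by direct algebra on the compact localized region) should suffice since the claimed $\frac{1}{8n^2}$ factor is loose relative to what strong concavity actually delivers.
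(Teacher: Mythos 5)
Your proposal follows the same route as the paper: localize $x_i$, $BR(s_{-i})$, and $s_{-i}$ using the hypotheses, lower-bound $-\partial^2 u_i/\partial z^2$ by $\frac{2s_{-i}}{(z+s_{-i})^3} = \Theta(1/n^2)$ on that region, and apply a second-order Taylor expansion around $BR(s_{-i})$ (your handling of the first-order term when $BR(s_{-i})=0$ is a nice detail the paper glosses over). The one place you diverge is the final step, and that is where your argument is not closed: you convert the additive bound $u_i(x_i,s_{-i}) \le u_i(BR(s_{-i}),s_{-i}) - \frac{\mu}{2}(x_i - BR(s_{-i}))^2$ into the multiplicative form by dividing through and claiming $4n^2\mu \ge u_i(BR(s_{-i}),s_{-i})$ "with slack." With your stated constants this does not hold uniformly: you take $\mu = \frac{2(n-1-\epsilon)}{(n+2\epsilon)^3}$ (note the true minimum of $\frac{2s_{-i}}{(z+s_{-i})^3}$ over the box is attained at $s_{-i}=n-1+\epsilon$, so the numerator should be $2(n-1+\epsilon)$; your version is a valid but looser bound), and at $n=3$, $\epsilon=1$ this gives $4n^2\mu = 72/125 \approx 0.58$, while your upper bound $u_i(BR(s_{-i}),s_{-i}) \le \frac{1+\epsilon}{n-2\epsilon} = 2$ (even the sharper $\frac{1+\epsilon}{n}=2/3$) exceeds it. You flag this and defer to an unexecuted case split, so the proof as written has a gap at small $n$ and large $\epsilon$.

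The fix is simpler than the machinery you invoke, and it is what the paper does: since the cost is non-negative and the prize share is strictly less than $1$, one has $u_i(BR(s_{-i}),s_{-i}) < 1$ outright, so the additive bound $u_i(x_i,s_{-i}) \le u_i(BR(s_{-i}),s_{-i}) - \frac{1}{8n^2}(x_i-BR(s_{-i}))^2$ immediately implies the multiplicative claim (subtracting $\frac{(x_i-BR(s_{-i}))^2}{8n^2}\cdot 1$ subtracts strictly more than $\frac{(x_i-BR(s_{-i}))^2}{8n^2}\,u_i(BR(s_{-i}),s_{-i})$). All that then remains is $\mu \ge \frac{1}{4n^2}$, which does hold for $n\ge 2$ and $\epsilon\in[0,1]$ once you use the correct minimizer $s_{-i}=n-1+\epsilon$, i.e., $\mu = \frac{2(n-1+\epsilon)}{(n+2\epsilon)^3}$. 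Your observation that $u_i(BR(s_{-i}),s_{-i}) = O(1/n)$ would buy a stronger constant for large $n$, but it is not needed for the stated bound and is exactly what fails to be uniform near $n=3$, $\epsilon=1$.
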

\begin{proof}
    We prove the lemma using strong concavity of the utility function near the equilibrium point. From equation \eqref{eq:ddu}, we have
    \begin{align*}
        - \frac{\partial^2 u_i(z, s_{-i})}{\partial z^2} &= \frac{2 s_{-i} }{(z + s_{-i})^3} + \frac{n-1}{n^2} c_i''(z) \ge \frac{2 s_{-i} }{(z + s_{-i})^3}.
    \end{align*}
    As $|| \bx - \bx^* ||_1 \le \epsilon$ and $|BR(s_{-i}) - 1| \le \epsilon$, we have $x_i \in [1-\epsilon, 1+\epsilon]$, $BR(s_{-i}) \in [1-\epsilon, 1+\epsilon]$, and $s_{-i} \in [n-1-\epsilon, n-1+\epsilon]$. So, for any $z \in [1-\epsilon, 1+\epsilon]$, we have
    \begin{align*}
        - \frac{\partial^2 u_i(z, s_{-i})}{\partial z^2} &\ge \min_{ \substack{z \in [1-\epsilon, 1+\epsilon] \\ s_i \in [n-1-\epsilon, n-1+\epsilon]} } \frac{2 s_{-i} }{(z + s_{-i})^3} = \frac{2 (n-1 + \epsilon)}{(n + 2 \epsilon)^3} \ge \frac{1}{4n^2},
    \end{align*}
    for $n \ge 2$ and $\epsilon \in [0,1]$. At $z = BR(s_{-i})$, by the first order condition, we have $\left. \frac{\partial u_i(z, s_{-i})}{\partial z} \right|_{z = BR(s_{-i})} = 0$. So, using the strong convexity of $-u_i(z, s_{-i})$ w.r.t. $z$, we can write
    \begin{align*}
        - u_i(x_i, s_{-i}) &\ge -u_i(BR(s_{-i}), s_{-i}) + (x_i - BR(s_{-i})) \left. \frac{\partial u_i(z, s_{-i})}{\partial z} \right|_{z = BR(s_{-i})} \\
        &\qquad \qquad + \frac{1}{8 n^2} (x_i - BR(s_{-i}))^2 \\
        u_i(x_i, s_{-i}) &\le u_i(BR(s_{-i}), s_{-i}) - \frac{1}{8 n^2} (x_i - BR(s_{-i}))^2 \\
            &< \left(1 - \frac{(x_i - BR(s_{-i}))^2}{8 n^2}  \right) u_i(BR(s_{-i}), s_{-i}),
    \end{align*}
    as $u_i(BR(s_{-i}), s_{-i}) < 1$ and $x_i \neq BR(s_{-i})$, which completes the proof.
\qed\end{proof}

Given Lemma~\ref{lm:revlipschitz}, we can prove that an action profile $\bx$ is not an $\epsilon$-approximate equilibrium profile by showing that there is some agent $i$ who has large deviation, i.e., $(x_i - BR(s_{-i}))^2$ is large.

Next, in Lemma~\ref{lm:hom:bestcase:1}, we show that if there are two agents that have output at least $\alpha$ away from the equilibrium output of $1$ at time $t$, then there are at least two agents that are at least $\alpha/(2n)$ away from $1$ at time $t+1$.
\begin{lemma}\label{lm:hom:bestcase:1}
    Assuming $n \ge 3$. At time $t$, if there exists agents $i$ and $j \neq i$ such that $|x_{t,i} - 1| \ge \alpha$ and $|x_{t,j} - 1| \ge \alpha$ for some $\alpha \in (0,1)$, then there exists agents $i'$ and $j' \neq i'$ such that $|x_{t+1,i'} - 1| \ge \frac{\alpha}{2n}$ and $|x_{t+1,j'} - 1| \ge \frac{\alpha}{2n}$.
\end{lemma}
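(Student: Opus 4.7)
My plan is to prove the lemma by case analysis on which agent $k = i_t$ makes the BR move at time $t$. If $k \notin \{i, j\}$, both $x_{t,i}$ and $x_{t,j}$ are preserved, so the conclusion holds trivially with $(i', j') = (i, j)$. The substantive case is $k \in \{i, j\}$, where WLOG I take $k = i$; then $|x_{t+1, j} - 1| = |x_{t, j} - 1| \ge \alpha$ lets me set $j' = j$, and the remaining goal is to find $i' \neq j$ with $|x_{t+1, i'} - 1| \ge \alpha/(2n)$.

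I would split this remaining goal into two subcases depending on the positions of the other agents. If some $\ell \notin \{i, j\}$ already satisfies $|x_{t, \ell} - 1| \ge \alpha/(2n)$, then taking $i' = \ell$ is immediate, because $\ell \neq i$ implies $x_{t+1, \ell} = x_{t, \ell}$. Otherwise every $\ell \notin \{i, j\}$ has $|x_{t, \ell} - 1| < \alpha/(2n)$; in this subcase I set $i' = i$, and it remains to show that $|BR(s_{t,-i}) - 1| \ge \alpha/(2n)$.

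The quantitative argument in this last subcase has two parts. First, a triangle inequality propagates the deviation of $x_{t,j}$ into a lower bound on $|s_{t,-i} - (n-1)|$: using the bound on each $x_{t,\ell}$ for $\ell \notin \{i,j\}$, I obtain $|s_{t,-i} - (n-1)| \ge |x_{t,j} - 1| - \sum_{\ell \notin \{i,j\}} |x_{t,\ell} - 1| > \alpha - (n-2)\alpha/(2n) = \alpha(n+2)/(2n)$. Second, since this lemma is used in the lower bound construction which employs the linear cost function $c(z) = z$, I would exploit the explicit formula $BR(s) = n\sqrt{s/(n-1)} - s$ and the elementary bound $\sqrt{1 \pm x} \le 1 \pm x/2$ to derive $|BR(s) - 1| \ge \frac{n-2}{2(n-1)} |s - (n-1)|$ in a suitable range around $n-1$; combining the two estimates yields $|BR(s_{t,-i}) - 1| \ge \frac{\alpha(n^2-4)}{4n(n-1)} \ge \alpha/(2n)$, where the final inequality reduces to $n^2 - 2n - 2 \ge 0$, valid for all $n \ge 3$.

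The main obstacle will be validating the linear-approximation estimate in the direction $s_{t,-i} < n-1$, because the linear-cost BR function has a \emph{second} root at $s = 1/(n-1)$ (where $BR$ returns to $1$), so the local bound $|BR(s) - 1| \ge \frac{n-2}{2(n-1)} |s - (n-1)|$ does not extend uniformly across $(0, n-1)$. To address this, I would combine the non-negativity $x_{t,j} \ge 0$ with the subcase assumption $|x_{t, \ell} - 1| < \alpha/(2n)$ for $\ell \notin \{i, j\}$ to obtain $s_{t,-i} \ge (n-2)(1 - \alpha/(2n))$; for $n \ge 3$ and $\alpha \in (0,1)$, this lower bound strictly exceeds $1/(n-1)$, keeping $s_{t,-i}$ bounded away from the second root and inside the regime where the linearization argument is valid.
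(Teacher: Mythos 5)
Your overall architecture matches the paper's: case on the mover, reduce to the subcase where every $\ell \notin \{i,j\}$ satisfies $|x_{t,\ell}-1| < \alpha/(2n)$, propagate the deviation of $x_{t,j}$ into a lower bound on $|s_{t,-i}-(n-1)|$ by the triangle inequality, and finish with the explicit linear-cost best response $BR(s) = \frac{n}{\sqrt{n-1}}\sqrt{s} - s$. Your triangle-inequality constant $\frac{(n+2)\alpha}{2n}$ is in fact sharper than the $\frac{\alpha}{2}$ the paper uses. The direction $s_{t,-i} > n-1$ is sound: $BR$ is concave, its tangent at $s=n-1$ has slope $-\frac{n-2}{2(n-1)}$, so $BR(s) \le 1 - \frac{n-2}{2(n-1)}(s-(n-1))$, and your arithmetic $n^2-2n-2\ge 0$ closes that case for $n \ge 3$.

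The gap is the direction $s_{t,-i} < n-1$, and your proposed fix does not repair it. By the same concavity the tangent at $s=n-1$ \emph{majorizes} $BR$, so for $s<n-1$ one gets $BR(s)-1 \le \frac{n-2}{2(n-1)}\bigl((n-1)-s\bigr)$ --- the reverse of the inequality you need --- everywhere on $\bigl(\frac{1}{n-1},\,n-1\bigr)$, not merely near the second root. Writing $u=\sqrt{s/(n-1)}$ gives the exact factorization $BR(s)-1 = (n-1)(1-u)\bigl(u-\frac{1}{n-1}\bigr)$, and since $\frac{n-2}{2(n-1)}((n-1)-s)=\frac{n-2}{2}(1-u)(1+u)$, your claimed lower bound is equivalent to $u\ge 1$, which never holds here; keeping $s$ away from $\frac{1}{n-1}$ keeps the factor $u-\frac{1}{n-1}$ positive but not large enough. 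The statement itself fails for $\alpha$ close to $1$: with $n=3$, $\alpha=0.9$, and $\bx_t=(2,\,0.1,\,1)$, agent $1$ moving faces $s_{t,-1}=1.1$ and plays $BR(1.1)=\frac{3\sqrt{1.1}}{\sqrt{2}}-1.1\approx 1.125$, so afterwards only agent $2$ is at distance $\ge \alpha/6=0.15$ from $1$. Hence no argument can close this case for all $\alpha\in(0,1)$; either the range of $\alpha$ must be restricted (the application only ever invokes the lemma with $\alpha\le 1/2$, where a direct endpoint analysis of the factorization over the feasible interval $s_{t,-i}\in\bigl[(n-2)(1-\frac{\alpha}{2n}),\,(n-1)-\frac{(n+2)\alpha}{2n}\bigr]$ does go through) or the constant $\frac{\alpha}{2n}$ weakened. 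For what it is worth, the paper's own treatment of this direction rests on the step $n\bigl(1-\frac{\beta}{n-1}\bigr)-(n-1)+\beta \ge 1+\frac{n-2}{n-1}\beta$, whose left-hand side actually equals $1-\frac{\beta}{n-1}$, so the published argument shares the same soft spot.
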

\begin{proof}
    Let $i$ and $j \neq i$ denote the agents that are at least $\alpha$ away from the equilibrium, i.e., $|x_{t,i} - 1| \ge \alpha$ and $|x_{t,j} - 1| \ge \alpha$ at time $t$. If the agent making the BR move at time $t$ is not in $\{i,j\}$, then we trivially have two agents---agents $i$ and $j$---at time $t+1$ who are at least $\alpha \ge \frac{\alpha}{2n}$ away from the equilibrium. With a similar argument, if there are three or more agents that are at least $\frac{\alpha}{2n}$ away from the equilibrium at time $t$, we will have the required condition at time $t+1$. So, let us assume w.l.o.g. that agent $i$ makes the transition at time $t$ and all agents $k \notin \{i,j\}$ have $|x_{t,k} - 1| < \frac{\alpha}{2n}$.

    As $|x_{t,k} - 1| < \frac{\alpha}{2n}$ for all $k \notin \{i,j\}$, we have $| \sum_{k \notin \{i,j\}} x_{t,k} - (n-2) | < \frac{(n-2) \alpha}{2n} < \frac{\alpha}{2}$. We also know that $|x_{t,j} - 1| \ge \alpha$. Putting together, we have $|s_{t,-i} - (n-1)| > \frac{\alpha}{2}$, or $s_{t,-i} = (n-1) \pm \beta$ for $\beta > \frac{\alpha}{2}$. Let us do a case analysis based on the sign of $s_{t,-i} - (n-1)$.
    \begin{itemize}
        \item $s_{t,-i} - (n-1) < 0 \Longleftrightarrow s_{t,-i} = (n-1) - \beta$. Using the first order condition for agent $i$ (with a linear cost function $c'(z) = 1$ for $z \ge 0$), we can write $x_{t+1,i}$ as
        \begin{align*}
            x_{t+1,i} &= n \sqrt{\frac{s_{t,-i}}{n-1}} - s_{t,-i} = n \sqrt{\frac{(n-1) - \beta}{n-1}} - ((n-1) - \beta) \\
            &= n \sqrt{1 - \frac{\beta}{n-1}} - (n-1) + \beta \\
            &\ge n \left(1 - \frac{\beta}{n-1}\right) - (n-1) + \beta, \text{ because $0 < 1 - \frac{\beta}{n-1} < 1$,} \\
            &\ge 1 + \frac{n-2}{n-1} \beta \ge 1 + \frac{\beta}{2} \ge 1 + \frac{\alpha}{4} \ge 1 + \frac{\alpha}{2n}.
        \end{align*}

        \item $s_{t,-i} - (n-1) > 0 \Longleftrightarrow s_{t,-i} = (n-1) + \beta$. Again using the first order condition for agent $i$, we can write $x_{t+1,i}$ as
        \begin{align*}
            x_{t+1,i} &= n \sqrt{\frac{s_{t,-i}}{n-1}} - s_{t,-i} = n \sqrt{\frac{(n-1) + \beta}{n-1}} - ((n-1) + \beta) \\
            &= n \sqrt{1 + \frac{\beta}{n-1}} - (n-1) - \beta \\
            &\le n \left(1 + \frac{\beta}{n-1}\right) - (n-1) - \beta, \text{ because $1 + \frac{\beta}{n-1} > 1$,} \\
            &\le 1 - \frac{n-2}{n-1} \beta \le 1 - \frac{\beta}{2} \le 1 - \frac{\alpha}{4} \le 1 - \frac{\alpha}{2n}.
        \end{align*}
    \end{itemize}
    Putting both cases together, we have $|x_{t+1,i} - 1| \ge \frac{\alpha}{2n}$; we also know that $|x_{t+1,j} - 1| = |x_{t,j} - 1| \ge \alpha \ge \frac{\alpha}{2n}$. So, we have the two agents at time $t+1$ that satisfy the required condition.
\qed\end{proof}
Lemma~\ref{lm:hom:bestcase:1} above implies that if we start from $\bx_0 = (\frac{1}{2}, \frac{1}{2}, 1, \ldots, 1)$ that has two agents who are producing output at least $\frac{1}{2}$ away from equilibrium output of $1$, then after $t$ time steps, $\bx_t$ will have at least two agents $i$ and $j \neq i$ with $|x_{t+1,k} - 1| \ge \frac{1}{2(2n)^t}$ for $k \in \{i,j\}$. Let $\alpha = \frac{1}{2(2n)^t}$ for conciseness. Now, let $z_{t,k} = x_{t,k}-1$ for all $t \ge 0$ and $k \in [n]$. As $|x_{t+1,k} - 1| \ge \alpha$ for $k \in \{i,j\}$, we have $|z_{t,k}| \ge \alpha$ for $k \in \{i,j\}$. Further, from Lemmas~\ref{lm:good_domain1},~\ref{lm:good_domain2},~and~\ref{lm:good_domain3}, we know that $\bz_t = (z_{t,k})_{k \in [n]}$ follows the discounted-sum dynamics presented in Section~\ref{sec:dissum}. Then, as $|z_{t,k}| \ge \alpha$ for $k \in \{i,j\}$, the potential function given in equation~\eqref{eq:potential} satisfies $f(\bz_t) \ge \alpha$. In particular, one of the two sides of the potential function $f(\bz_t)$ has size at least $\alpha$, so the largest element on the larger side, say $|z_{t,k}|$, must have value at least $\frac{\alpha}{n}$. Moreover, if we select this agent $k$ at time $t$, then $|z_{t+1,k}| \le \frac{|z_{t,k}|}{2}$ as argued in the proof of Lemma~\ref{lm:dissum:random}, which implies 
\begin{align*}
    &|z_{t+1,k}| \le \frac{|z_{t,k}|}{2} \\
    &\implies |z_{t+1,k} - z_{t,k}| \ge \frac{|z_{t,k}|}{2} \ge \frac{\alpha}{2n} = \frac{1}{2(2n)^{t+1}} \\
    &\implies | BR(s_{t,-k}) - x_{t,k} | \ge \frac{1}{2(2n)^{t+1}} \\
    &\implies u_k(x_{t,k}, s_{t,-k}) < \left(1 - \frac{1}{2^{t+5} n^{t+3}}  \right) u_i(BR(s_{t,-k}), s_{t,-k}) \text{ by Lemma~\ref{lm:revlipschitz}.}
\end{align*}
Setting $t \le \frac{\log(1/\epsilon)}{\log(2n)} - 5 \implies \frac{1}{2^{t+5} n^{t+3}} \ge \epsilon$ completes the proof.
\qed\end{proof}

\end{document}